\documentclass[a4paper]{amsart}

\usepackage[T1]{fontenc}
\usepackage{lmodern}
\usepackage{microtype}
\usepackage{amsmath}
\usepackage{mathtools}
\usepackage{amssymb}
\usepackage{amsthm}
\usepackage{mathrsfs}  
\usepackage{xcolor}
\usepackage{cite}
\usepackage{comment}
\usepackage{forest}
\usepackage{hyperref}
\usepackage{cleveref}
\usepackage{nicematrix}
\usepackage{tikz}
\usepackage{tikz-cd}
\usetikzlibrary{matrix} 
\usepackage{xspace}
\usepackage{enumitem}
\setlist[itemize]{leftmargin=*}
\setlist[enumerate]{leftmargin=*}

\DeclareRobustCommand{\SkipTocEntry}[5]{}

\usepackage{calc}
\usepackage{xfrac}
\usepackage{braket}
\renewcommand{\colon}{\vcentcolon}
\newcommand{\coloncolon}{\mathrel{\vcentcolon\vcentcolon}}
\newcommand{\colonequals}{\mathrel{\vcentcolon=}}
\newcommand{\defeq}{\colonequals}

\newcommand{\GHZ}{\psi_{_\textsf{GHZ}}}

\newcommand{\ctx}{\Sigma}
\newcommand{\Dist}{\mathcal{D}}
\newcommand{\Pow}{\mathscr{P}}
\newcommand{\supp}{\mathsf{supp}}
\newcommand{\op}{\mathsf{op}}
\newcommand{\CSet}{\mathbf{Set}}

\theoremstyle{plain}
\newtheorem{theorem}{Theorem}[section]    
\newtheorem{lemma}[theorem]{Lemma}
\newtheorem{proposition}[theorem]{Proposition}

\theoremstyle{definition}
\newtheorem{definition}[theorem]{Definition}

\theoremstyle{remark}
\newtheorem{remark}[theorem]{Remark}

\newcommand{\ie}{\text{i.e.\@}\xspace}
\newcommand{\eg}{\text{e.g.\@}\xspace}

\newcommand{\ZZ}{\mathbb{Z}}
\DeclareMathOperator{\rank}{rank}
\DeclareMathOperator{\dom}{dom}
\DeclareMathOperator{\im}{Im}
\DeclareMathOperator{\Th}{Th}

\newcommand{\MP}{\mathsf{MP}}
\newcommand{\ES}{\mathscr{E}} 
\newcommand{\II}{\mathcal{I}} 
\newcommand{\gs}{g} 
\newcommand{\ls}{h} 
\newcommand{\bell}[1][\II]{B_#1}
\newcommand{\branching}[3]{\left(#1,#2,#3\right)} 
\newcommand{\multiplicative}[3]{\left\langle #1,#2,#3\right\rangle}
\DeclareMathOperator{\Branches}{Br}
\newcommand{\FF}{\mathcal{F}}

\newcommand{\OR}{\vee}

\newcommand{\AvN}{\mathsf{AvN}}
\newcommand{\CSC}{\mathsf{CSC}}

 \usepackage{soul}
 \usepackage{todonotes}

\forestset{
  dense/.style={
    for tree={
      l*=0.8,
     l sep*=0.8,      
      s sep*=0.8,      
      inner sep=3pt  
    },
     baseline,}}

\usepackage{orcidlink}
\let\oldorcidlink\orcidlink
\renewcommand{\orcidlink}[1]{
  \begingroup
  \hypersetup{pdfborder={0 0 0}}
  \oldorcidlink{#1}
  \endgroup
}

\begin{document}

\title{Algebraic paradoxes in adaptive quantum computation}

\author[S.~Abramsky]{Samson Abramsky}
\address{\parbox{\linewidth}{
            Samson Abramsky\,\orcidlink{0000-0003-3921-6637}
         \\ Department of Computer Science, University College London
         \\ London, United Kingdom
         }}
\email{s.abramsky@ucl.ac.uk}

\author[R.~S.~Barbosa]{Rui Soares Barbosa}
\address{\parbox{\linewidth}{
            Rui Soares Barbosa\,\orcidlink{0000-0002-0465-8518}
         \\ International Iberian Nanotechnology Laboratory
         \\ Braga, Portugal
         }}
\email{rui.soaresbarbosa@inl.int}

\author[C.~Constantin]{Carmen Constantin}
\address{\parbox{\linewidth}{
            Carmen Constantin\,\orcidlink{0000-0003-4508-9312}
         \\ Department of Computer Science, University College London
         \\ London, United Kingdom
         }}
\email{c.constantin@ucl.ac.uk}

\author[M.~Karvonen]{Martti Karvonen}
\address{\parbox{\linewidth}{
            Martti Karvonen\,\orcidlink{0000-0002-8919-343X}
         \\ Department of Computer Science, University College London
         \\ London, United Kingdom
         \\[2pt] \emph{From August 2026:} Department of Computer Science,
            University of Bath
         \\ Bath, United Kingdom
         }}
\email{martti.karvonen@ucl.ac.uk}

\begin{abstract}
Measurement-based quantum computation (MBQC) is a universal model of quantum computation whose full power requires adaptivity.
Contextuality is known to power quantum advantage in MBQC, yet it has resisted algebraic analysis in the adaptive setting.

We show that if an adaptive $\mathbb{Z}_2$-linear measurement-based quantum computing protocol deterministically computes a non-affine Boolean function, then the underlying quantum resource satisfies an inconsistent set of linear equations.
This witnesses an algebraic form of strong contextuality generalising Mermin's All-versus-Nothing arguments.
Such algebraic contextuality can be detected cohomologically, resolving an open question posed by Raussendorf,
who had established cohomological witnesses of contextuality for non-adaptive protocols, but left the adaptive case open.
We prove this result constructively:
we model adaptive measurement protocols as ordinary measurements on a larger 
scenario of tree-like measurements,
and explicitly build the inconsistent 
equations inductively.
\end{abstract}

\vspace*{-.37cm}
\maketitle
\vspace{-.35cm}
\tableofcontents
\clearpage

\section{Introduction}\label{sec:introduction}
If quantum computers are to outperform their classical counterparts,
their additional power must stem from observable behaviour that lies beyond the reach of classical systems.
Contextuality -- the impossibility of reconciling quantum measurement statistics with pre-existing outcome values -- stands out as such a provably nonclassical phenomenon \cite{kochen1967problem,bell1966problem} 
and has been shown to enable quantum advantage across multiple models of computation and information-processing tasks \cite{raussendorf:contextualityinmbqc,howard2014contextuality,abramsky2017contextual,kirby2020classical,bravyi2018quantum}.
Closer to the metal, contextuality is being explored for benchmarking quantum hardware \cite{kumar2025quantumclassical}, as the constraints it imposes enable error detection beyond standard randomised benchmarking.
At the heart of contextuality lies a logical tension -- a ``paradox'':
each empirically-accessible set of compatible measurements (\emph{context}) provides a consistent classical snapshot, yet attempting to combine these into a single global description  leads to a contradiction.
This local-to-global obstruction is neatly captured in the language of sheaf theory~\cite{ab}.

Particularly elegant are purely algebraic witnesses of contextuality -- known as All-versus-Nothing (AvN) arguments -- in which a system of linear equations establishes such a logical contradiction: quantum measurement outcomes satisfy each equation locally, yet no global classical value assignment can satisfy them all.
Classic examples of such algebraic contextuality include the GHZ paradox \cite{greenberger1989going} and the Peres--Mermin magic square \cite{mermin1990simple,peres1990incompatible}, 
closely related to the parity-based games studied by Arkhipov \cite{arkhipov2012extending} and more generally quantum solutions to linear constraint systems in non-local games \cite{cleve2014characterization,slofstra2024operator}. Their rigid algebraic form makes such arguments particularly tractable: detecting inconsistency reduces to linear algebra rather than general constraint satisfaction, an advantage that becomes substantial at scale, and all the more relevant given growing interest in many-body contextuality~\cite{hart2025manybody,zhao2025scalable}.

Measurement-based quantum computation (MBQC)~\cite{raussendorf2001one,briegel2009mbqc}
is a standard universal model of quantum computation, equivalent in power to the  circuit model and actively pursued in optical and ion-trap architectures \cite{walther2005experimental,prevedel2007highspeed,lanyon2013mbqc}.\footnote{See the recent comparative survey \cite{larasatichoi2026circuitvsmbqc}, which discusses growing interest in this paradigm for hardware implementations.}
In this paradigm, a classical control computer orchestrates the computation  by performing successive local measurements on a multi-qubit resource state and processing the observed outcomes, treating the quantum system as a black box that responds to measurement queries.
This black-box perspective makes MBQC a particularly natural framework for analysing quantum resources through an operational lens, and it is indeed standard in the study of contextuality in MBQC~\cite{anders2009computational,raussendorf:contextualityinmbqc}.

Crucially in MBQC, the choice of later measurement settings may depend on prior outcomes, a feature known as \emph{adaptivity}.
This feed-forward capability is essential for quantum universality.
Yet many foundational results on MBQC have been confined to the non-adaptive setting.
The limitation runs deep: key algebraic and topological tools for analysing contextuality -- including AvN arguments and cohomological methods -- have hitherto only been applied in non-adaptive, ``flat'' measurement scenarios.
Raussendorf~\cite{raussendorf:cohomologicalframework,raussendorf:paradoxestowork} articulated an instance of this gap as an explicit open question: can cohomological witnesses for contextuality in MBQC be extended to adaptive protocols? This matters beyond MBQC: quoting~\cite{raussendorf:roleofcohomology}, `a web of cohomological facts relates quantum error correction, measurement-based quantum computation, symmetry protected topological order, and contextuality'. Understanding how adaptivity fits into this web is therefore of broad foundational importance.

We address a related but stronger question:
can the contextuality required for adaptive MBQC always be witnessed by an explicit algebraic contradiction -- an AvN argument?
The answer is yes.

To state our contribution precisely, we consider
$\mathbb{Z}_2$-linear MBQC,
where the classical control computer, including pre- and post-processing and adaptive control, is restricted to linear operations modulo two.
Raussendorf~\cite{raussendorf:contextualityinmbqc}
showed that if such a (possibly adaptive) $\ZZ_2$-MBQC protocol computes a nonaffine Boolean function deterministically, then the protocol must exploit strong contextuality in its resource state. 
Our main contribution strengthens Raussendorf's result by showing that the requisite strong contextuality is algebraic in nature: it can always be witnessed via an AvN argument.
Our proof is constructive: given an adaptive protocol, we explicitly build the inconsistent system of equations from its tree structure, rather than inferring its existence from properties of the resource state. This also implies the contextuality is cohomologically nontrivial, thus resolving Raussendorf's open question~\cite{raussendorf:cohomologicalframework,raussendorf:paradoxestowork}: cohomological witnesses for contextuality do extend to adaptive protocols.
We establish this in both the group-cohomological and \v{C}ech-cohomological frameworks for contextuality.

A key technical obstacle is that existing tools, including the very definition of AvN arguments \cite{abramskyetal:contextualitycohomologyparadox,abramsky2017complete,GogiosoZeng:AvNs}
and cohomological witnesses for contextuality \cite{abramskyetal:cohomology-of-contextuality,abramskyetal:contextualitycohomologyparadox,raussendorf:paradoxestowork,raussendorf:cohomologicalframework},
are formulated only for  ``flat'' measurement scenarios, making it unclear how to even state, let alone prove, adaptive analogues.
Rather than developing an entirely new theory for adaptive protocols, we employ a systematic ``flattening'' construction that models adaptive protocols as non-adaptive protocols on appropriately enlarged measurement scenarios.
Drawing on techniques from~\cite{abramskyetal:comonadicview},  where adaptivity is captured via a (co)Kleisli construction,
this perspective allows us to apply the full arsenal of existing tools directly in the adaptive setting, while making the inductive structure of the AvN arguments explicit; see also \cite{searle2024dphil,searle2024qpl_mapping} for similar flattening constructions in related settings.
We emphasise that this ``flattening'' does not eliminate adaptivity: measurements in the enlarged scenario are themselves adaptive operations on the original scenario, and would be implemented as such in practice.
The point is not to reduce adaptive to non-adaptive computation, but to bring mathematical tools to bear on the adaptive case.
Beyond the immediate result, our flattening construction provides a general pathway for extending results from the non-adaptive to the adaptive setting, enabling further study of adaptivity in MBQC.

\subsection*{An AvN argument: the GHZ paradox}
To illustrate AvN arguments and their relationship to measurement-based computation, we start by presenting a canonical example.

Consider a scenario in which a physical system is distributed over three sites.
At each of these, one may perform a local measurement and record the observed outcome.
There are two measurement settings available at each site,
corresponding to two measurement procedures one may perform locally on the system.
We label these $x_0, x_1$ for the first site, $y_0, y_1$ for the second, and $z_0, z_1$ for the third.
Each of these measurements is dichotomic, yielding an outcome in $\mathbb{Z}_2$.
Importantly, we view $\mathbb{Z}_2$ as a ring,
not merely as the bare set $\{0,1\}$; this algebraic structure is essential for expressing the kind of argument that follows.

In each run of the experiment, a measurement is performed at each site and the corresponding outcome is recorded.
Over many runs with varying choices of measurement settings,
one collects measurement statistics that characterise the observable behaviour.
These data constitute an \emph{empirical model}:
for each choice of measurement settings at the three sites -- a \emph{context} -- 
it specifies a probability distribution over the joint outcomes.
Crucially, we treat the system as a black box that responds to measurement queries, without presupposing any particular physical theory by which outcomes are produced: the model captures only what is observable.
\Cref{tab:GHZ} shows one particular empirical model for this scenario:
four contexts are shown; for the remaining four contexts, the joint outcomes are uniformly distributed.
Observe that the probability distributions over different contexts agree on their marginals: the choice of measurement setting at one site does not affect the probability distribution over outcomes at the other sites. This property is known as \emph{no-signalling} or \emph{no-disturbance}.

\begin{table}
\caption{The GHZ empirical model on $\bell[3]$}\label{tab:GHZ}
\begin{tabular}{c|cccccccc}
         & 000 & 001 & 010 & 011 & 100 & 101 & 110 & 111 \\ \hline
   $x_0 \, y_0 \, z_0$ & \sfrac{1}{4} &      0       &      0       & \sfrac{1}{4} &      0       & \sfrac{1}{4} & \sfrac{1}{4} &      0       \\
   $x_0 \, y_1 \, z_1$ &     0        & \sfrac{1}{4} & \sfrac{1}{4} &      0       & \sfrac{1}{4} &      0       &      0       & \sfrac{1}{4} \\
   $x_1 \, y_0 \, z_1$ &     0        & \sfrac{1}{4} & \sfrac{1}{4} &      0       & \sfrac{1}{4} &      0       &      0       & \sfrac{1}{4} \\
   $x_1 \, y_1 \, z_0$ &     0        & \sfrac{1}{4} & \sfrac{1}{4} &      0       & \sfrac{1}{4} &      0       &      0       & \sfrac{1}{4} \\
\end{tabular}
\end{table}

Classically, one thinks of measurements as revealing pre-existing properties of the physical system under observation.
More precisely, on each fully specified state (even if empirically inaccessible, \ie a hidden variable),
we expect a function $\gs$ that assigns a definite outcome to each measurement, independently of what other measurements are performed at the other sites;
the observable behaviour then arises from a probability distribution over such deterministic states.
The no-signalling property lends plausibility to this view.

However, examining the empirical model reveals a striking pattern.
In each of the four contexts shown, the observed joint outcomes always satisfy a  system of linear equations over $\mathbb{Z}_2$:
\newcommand{\fitX}[1]{\makebox[\widthof{$x_0$}][c]{$#1$}}
\begin{equation}\label{eq:ghz}
\begin{aligned}
    \fitX{x_0} + \fitX{y_0} + \fitX{z_0} &= 0 \qquad & 
    \fitX{x_0} + \fitX{y_1} + \fitX{z_1} &= 1 \\
    \fitX{x_1} + \fitX{y_0} + \fitX{z_1} &= 1 \qquad&
    \fitX{x_1} + \fitX{y_1} + \fitX{z_0} &= 1
\end{aligned}
\end{equation}
This system of equations is readily seen to be inconsistent:
adding all four equations, each measurement appears twice on the left-hand side and thus cancels, yielding $0=1$, a contradiction.
Therefore, no assignment $\gs \colon \{x_0,x_1,y_0,y_1,z_0,z_1\} \to \mathbb{Z}_2$ of definite values to the six measurements is consistent with the observed outcomes across the four contexts.
This establishes that the empirical model is \emph{strongly contextual} 
(see \Cref{sec:background} for a formal definition),
and algebraic contradictions of this form are known as \emph{All-versus-Nothing} (AvN) arguments \cite{abramskyetal:contextualitycohomologyparadox}.

Remarkably, quantum systems exhibit precisely such counterintuitive behaviour.
The empirical model in \Cref{tab:GHZ} can be realised using a three-qubit system initialised in the GHZ state
$\ket{\GHZ} = \sfrac{1}{\sqrt{2}}\left(\ket{000}+\ket{111}\right)$.
The measurements $x_0, x_1$ (resp.\@ $y_0, y_1$ and $z_0, z_1$) at each site
correspond to measuring the corresponding qubit in the $X = \{\ket{+},\ket{-}\}$ or $Y = \{\ket{+i},\ket{-i}\}$ bases, 
with the outcomes labelled $0$ and $1$.\footnote{The outcomes $0$ and $1$ correspond to the $+1$ and $-1$ eigenvalues of the respective Pauli operators via the group homomorphism $r \mapsto (-1)^r$.}

But we stress that such details about the quantum realisation are not essential for what follows: the key takeaway is that this seemingly paradoxical empirical model is physically realisable.
Throughout, we work at the level of observable behaviour (the statistics of measurement outcomes and the algebraic constraints they satisfy) rather than the Hilbert-space formalism, treating the quantum system as a black box.
This operational, theory-independent approach lays bare the ingredients driving computational advantage, independently of any particular physical model.

\subsection*{Computing with AvN paradoxes: the Anders--Browne construction}
Anders and Browne \cite{anders2009computational} showed how the GHZ empirical model can be used as a computational resource to enhance the power of a limited classical computer.
Specifically, they showed that a parity computer, restricted to performing affine operations over $\mathbb{Z}_2$, can compute non-affine Boolean functions when given access to this empirical model.
Such access takes the form of \emph{measurement-based computation}: the classical controller queries the model by choosing measurement settings and obtaining corresponding outcomes.

We present the construction for the (non-affine) OR function that computes the disjunction of two bits,
$\OR \colon \ZZ_2^2 \to \ZZ_2 \coloncolon (a,b) \mapsto a \OR b = a + b + a \cdot b$.
Given input bits $a, b \in \ZZ_2$, the parity control computer first applies the linear preprocessing function
$Q \colon \ZZ_2^2 \to \ZZ_2^3 \coloncolon  (a,b) \mapsto (a, b, a + b)$.
This encodes the input into a three-bit vector that determines the measurement settings for the three sites:
the computer queries the resource by measuring the context $\{x_a, y_b,  z_{a+b}\}$.
The three measurement outcomes are then post-processed by a linear function $Z \colon \ZZ_2^3 \to \ZZ_2$ that sums them, producing a single output bit.

The fact that this protocol \emph{deterministically} computes the OR function
follows from the linear equations in \cref{eq:ghz}, which can be condensed as
\begin{equation}\label{eq:ghz_condensed}
    x_a + y_b + z_{a+b} =  a \OR b \qquad (a,b \in \ZZ_2).
\end{equation}
Note that the empirical model can be regarded as a probabilistic function $e \colon \ZZ_2^3 \to \Dist(\ZZ_2^3)$, where $\Dist$ is the probability distribution monad.
The action of this protocol is then the Kleisli composition $Z \bullet e \bullet Q$,
which coincides with $\eta \circ \OR$, where $\eta$ is the unit of the monad $\Dist$ embedding deterministic values as point distributions.

One can use this construction as a building block to compute arbitrary Boolean functions,
as OR together with affine operations is functionally complete.\footnote{In fact,~\cite{anders2009computational} shows that access to polynomially-many GHZ states promotes the parity computer from (a subclass of) the complexity class $\oplus L$ to $P$.}
Each OR gate in a logical circuit is then implemented by this construction.
Achieving this in general requires \emph{adaptivity} or \emph{feed-forward} controlled by the parity computer: 
later measurement settings must depend (affinely) on prior measurement outcomes, allowing one to compose OR gates sequentially.

The GHZ algebraic paradox underlies the computation of the OR function.
This raises a natural question:
for an arbitrary adaptive $\ZZ_2$-linear measurement-based computation,
is there always such an AvN argument witnessing the contextuality being exploited as a computational resource?
At first sight, this appears challenging: if the protocol has nontrivial adaptivity, then the description of its output (the left-hand side of \cref{eq:ghz_condensed}) will not be a simple linear expression in measurements performed on the resource system.
For a minimal example, consider the composition of two Anders--Browne constructions by feeding the output of the first as input to the second to implement a three-wise OR.
The output is described by
\[x'_{x_a + y_b + z_{a+b}} + y'_c + z'_{x_a + y_b + z_{a+b}+c}\]
where the primed variables refer to measurements on the second copy of the GHZ model.
With greater adaptive depth, such expressions become increasingly complex.

In this paper, we resolve this question affirmatively.
The key insight is to consider AvN arguments over adaptive measurement protocols, which we treat as first-class measurements in their own right.
This allows us to construct an AvN argument explicitly for any adaptive $\ZZ_2$-linear MBQC, building the inconsistent systems of equations inductively from the structure of the computation tree.

\subsection*{Roadmap}

The remainder of the paper is organised as follows.
\Cref{sec:background} recalls the background on contextuality, in particular on $\ZZ_2$-linear AvN arguments.
\Cref{sec:mbqc} introduces $\ZZ_2$-linear measurement protocols on a Bell scenario, makes precise what it means for an MBQC to compute a Boolean function deterministically, and states our main result (\Cref{thm:main}) together with a proof sketch. \Cref{sec:illustrative} works through three examples of increasing complexity, recovering the PR-box and GHZ paradoxes and then treating a genuinely adaptive depth-three protocol in which each step of the general argument has a concrete counterpart.

The proof itself is set up in \Cref{sec:formalising}: it calls for a richer class of measurement protocols than the basic linear ones -- protocols that can halt on undesired outcomes or measure several sites simultaneously -- which are assembled into the scenario $\MP(\bell,\ZZ_2)$; the proof strategy is then presented, reducing the main theorem to a single identity (\Cref{thm:sumstozero}) whose proof proceeds by induction (roughly on protocol depth) requiring three key lemmas.
\Cref{sec:toolsfortheproof} proves the three lemmas underpinning this inductive argument, establishing in each case that protocol trees can be rearranged, condensed, or decomposed while preserving the sums at the heart of the argument.

\Cref{sec:cohomology} draws further consequences: most notably, it resolves Raussendorf's open question by extending cohomological witnesses of contextuality to the adaptive MBQC setting, in both the sheaf-cohomological (\Cref{thm:adaptiveAvNsandcohomology}) and the group-cohomological frameworks, with the \v{C}ech witness propagating all the way back to the original scenario. \Cref{sec:furtherdirections} concludes with directions for further work.

\section{Background on contextuality}
\label{sec:background}

We give a rapid pr\'ecis of the basic notions from \cite{ab} and \cite{abramskyetal:contextualitycohomologyparadox}.

\subsection{Scenarios, empirical models, and contextuality}\label{ssec:contextuality}
A contextuality \emph{scenario} is a structure $S\defeq (X, \ctx, O)$, where $X$ is a finite set of measurement labels or \emph{variables}; $\ctx$ is a family of subsets of $X$ giving the sets of \emph{compatible measurements} or \emph{contexts}; and $O$ is a set of measurement \emph{outcomes}.
The family $\ctx$ is required to be downward closed (subsets of compatible measurements are compatible) and to contain all singletons (any single measurement can be performed), \ie $\ctx$ is an abstract simplicial complex on $X$.\footnote{More generally, one may make $O$ into a dependent type, $O = \{ O_x \}_{x \in X}$, allowing the set of outcomes to depend  on the measurement. This is the setting of \cite{karvonen2018categories,barbosa2015dphil,abramskyetal:comonadicview}, though it will not be needed here.}

An important class of scenarios are the \emph{Bell scenarios}, which arise in non-locality arguments.
These are scenarios for which the set of measurements is partitioned as $X = \bigsqcup_{i \in \II} X_i$ and $C \in \ctx$ iff $|C \cap X_i | \leq 1$ for all $i \in \II$.
Thus $\II$ is a set of sites or agents (Alice, Bob, etc.), $X_i$ are the measurements available at site $i$, and the contexts arise by choosing at most one measurement from each site.

The \emph{event sheaf} over a scenario associates to each context $C \in \ctx$ the set $\ES(C) = O^C$ of joint outcomes for $C$, with restriction maps $\rho^{C'}_{C} : \ES(C') \to \ES(C)$ when $C \subseteq C'$ given by the obvious projection.
This is a sheaf $\ES : P^{\op} \to \CSet$ defined on the poset category $P = (\Pow(X),{\subseteq})$.
We write $\ls |_{C}$ for $\rho^{C'}_{C}(\ls)$ when $\ls \in \ES(C')$.

An \emph{empirical model} on a contextuality scenario is a family $e = \{ e_C \}_{C \in \ctx}$, where $e_C \in \Dist \ES(C)$ is a distribution over the joint outcomes for context $C$. 
We write $e:S$ to express that $e$ is an empirical model on the scenario $S$.
Here $\Dist$ is the distributions functor
that assigns to each set $X$ the set of discrete (finitely supported) distributions on $X$. This yields a presheaf $\Dist \ES : P^{\op} \to \CSet$.

To spell this out more explicitly, let $R$ be a commutative semiring of ``weights''. An $R$-distribution on $X$ is a function $d : X \to R$ with finite support $\supp(d) \defeq \{ x \in X \mid d(x) \neq 0 \}$ which satisfies the normalisation condition $\sum_{x \in X} d(x) = 1$.
The distribution functor $\Dist_{R}$, parameterised on $R$, sends $X$ to the set of all $R$-distributions on $X$.
The restriction maps now correspond to marginalisation. Note that, if we take $R$ to be the non-negative reals, we get the standard discrete probability distributions, while if we take $R$ to be the booleans, we get ``possibility distributions'', \ie  non-empty subsets, where non-emptiness corresponds to the normalisation condition.
This allows a unified treatment of probabilistic and ``probability-free'' non-locality and contextuality arguments.

Empirical models are required to satisfy the compatibility property: for all $C, C' \in \ctx$, $e_{C} |_{C \cap C'} = e_{C'} |_{C \cap C'}$. This says that the distributions over contexts have consistent marginals on their overlaps. 
This is known as the ``no-disturbance'' condition, and specialises on Bell scenarios to the standard ``no-signalling'' condition.

Whereas the event sheaf $\ES$ satisfies the sheaf property, so that compatible families over $\ES$ can always be glued together into a unique global section (an assignment of outcomes to all the measurements in the scenario), the presheaf $\Dist_{R} \ES$ is \emph{not} a sheaf.
Counter-examples to gluing correspond exactly to contextuality.\footnote{The fact that $\ES$ is a sheaf says that no-signalling deterministic models are non-contextual.}
More explicitly, an empirical model $e$ is \emph{non-contextual} iff there is a ``global distribution'' $d \in \Dist_{R} \ES(X)$ over global outcome assignments such that $d |_C = e_C$ for all $C \in \ctx$, \ie~iff a gluing exists for the compatible family.
Such a global distribution can be seen as a canonical form of ``hidden-variable model'' or ``ontological model''; see \cite{ab}.

In the case that $R$ is the non-negative reals, this yields the notion of \emph{probabilistic contextuality}, covering the usual examples from non-locality, typified by Bell's theorem \cite{bell1964einstein}. The case where $R$ is the booleans was dubbed \emph{logical contextuality} in \cite{ab}. This covers examples of ``probability-free'' non-locality proofs such as the Hardy paradox \cite{hardy1993nonlocality}, as observed in \cite{ab}. Note that logical contextuality is strictly stronger than probabilistic contextuality.
Indeed, a semiring homomorphism $R \to S$ induces a natural transformation $\Dist_R \Rightarrow \Dist_S$, which in turn lifts to a mapping from empirical models of $R$-distributions to empirical models of $S$-distributions.
Evidently, there is such a homomorphism from the non-negative reals to the booleans, which sends positive elements to $1$.\footnote{More generally, there is such a homomorphism from any zero-sum free semiring. Note, however, that there is no such homomorphism from the reals, or indeed any ring.}

There is a yet stronger, and in a suitable sense maximally strong, form of contextuality, known as strong contextuality. 
An empirical model $e$ is \emph{strongly contextual} iff there is no deterministic global assignment $\gs \in \ES(X)$ such that $\gs |_C \in \supp(e_C)$ for all $C \in \ctx$. To compare this with logical contextuality, note that the latter holds iff for \emph{some} local section $\ls \in \supp(e_C)$, there is no global assignment $\gs \in \ES(X)$ extending $\ls$
such that $\gs |_C \in \supp(e_C)$ for all $C \in \ctx$. By contrast, strong contextuality holds iff this is the case for \emph{all} local sections. 
Thus we obtain a strict hierarchy of notions of contextuality:
\[ \text{probabilistic} \; < \; \text{logical} \; < \; \text{strong} . \]

\subsection{AvN contextuality}\label{ssec:AvNcontextuality}
Now we assume that there is algebraic structure on the outcomes, specifically $O = \ZZ_2$, where we regard $\ZZ_2$ as an abelian group.\footnote{\label{fn:module}More generally, $O$ need only be a module over a ring for the linear theory to be defined; any abelian group is a module over $\ZZ$; cf.~\cite{abramskyetal:contextualitycohomologyparadox}.}
This algebraic structure can be used to define the \emph{$\ZZ_2$-linear theory} of an empirical model $e$.
Given a context $C$, a $\ZZ_2$-linear equation over $C$ has the form $(D, a)$, where $D = \{ x_1, \ldots , x_d \} \subseteq C$, and $a \in \ZZ_2$.
A section $\ls \in \ES(C)$ satisfies such an equation if $\sum_{x \in D} \ls(x) = a$.
We write this as $\ls \vDash \sum_{x \in D} x = a$, and for a model $e$, we write $e \vDash \sum_{x \in D} x = a$ iff $\ls \vDash \sum_{x \in D} x = a$ for all $\ls$ in the support of $e_C$.
The $\ZZ_2$-linear theory of an empirical model $e$ is the union over all contexts $C$ of the set of $\ZZ_2$-linear equations with variables in $C$ that are satisfied by all the sections in the support of $e_C$.  

This leads in turn to the notion of \emph{All-versus-Nothing contextuality}  \cite{abramskyetal:contextualitycohomologyparadox}. (The term ``All versus Nothing'' was first used in  \cite{mermin1990extreme}.)
An empirical model with outcomes $\ZZ_2$ is $\AvN_{\ZZ_2}$-contextual if its linear theory is inconsistent, \ie there is no global assignment of values in $\ZZ_2$ to all the variables which satisfies all the equations in the linear theory. Writing $\vDash^*$ for the closure of $\vDash$ under classical reasoning, such a model $e$ is  $\AvN_{\ZZ_2}$ iff $e\vDash^* 0=1$.
This is a more structured version of the theme of ``local consistency, global inconsistency'' \cite{abramsky2017contextuality}.
AvN contextuality is strictly stronger than strong contextuality \cite{abramskyetal:contextualitycohomologyparadox}.

\section{\texorpdfstring{$\mathbb{Z}_2$}{ℤ₂}-linear MBQC and the main result}\label{sec:mbqc}

In this section, we introduce the model of computation studied in this paper: measurement-based computation with $\ZZ_2$-linear classical control.
We define $\ZZ_2$-linear measurement protocols over a Bell scenario $\bell$ (\Cref{ssec:linearprotocols}) and how such a protocol, acting on an empirical model, computes a Boolean function (\Cref{ssec:computingwithlinearprotocols}).
We then state our main result -- that deterministically computing a non-affine function requires AvN contextuality -- and sketch an informal proof outline (\Cref{ssec:mainresult-pfsketch}).

\subsection{Linear protocols}\label{ssec:linearprotocols}

For a finite set of sites $\II$, let $\bell$ be the Bell scenario whose measurements are given by $X=\bigsqcup_{i \in \II} \ZZ_2$ with outcomes in $\ZZ_2$.  When useful, we denote the two measurements available at site $i$ by $x_{i,0}$ and $x_{i,1}$, although we often represent arbitrary measurements simply by $x,y,z$, or by $x_1,x_2,\ldots$ when referring to several at once.
We write $s\colon X \to \II$ for the map sending a measurement to its site. 

Given $U\subseteq V$, we write $\pi_U \colon \ZZ_2^V\to \ZZ_2^U$ for the projection and $\iota_V \colon \ZZ_2^U\to\ZZ_2^V$ for the inclusion, leaving the domain implicit. When $U$ is a singleton $\{i\}$, we write $\pi_i$ instead of $\pi_{\{i\}}$.

Roughly speaking, $\ZZ_2$-linear MBQC protocols operate on the scenario $\bell$ for some set of sites $\II$, going through (some of) the sites in some fixed order, with later measurement choices depending linearly on past outcomes.
We represent such a protocol as a tree 
\[\begin{forest}
[     $x$
[$q$, edge label={node[midway,left]{$0$}}]		[$v.q$,  edge label={node[midway,right]{$1$}}]]
\end{forest}\]
where $x$ is the initial measurement, $q$ is the continuation protocol, and $v$ is a vector specifying how to update the measurement settings in the continuation upon obtaining outcome $1$ for the measurement $x$.
The operational interpretation of such a tree is then a protocol that starts by measuring $x$, and then proceeds to either $q$ or $v.q$ depending on the observed outcome. 

This suggests an inductive definition, where one represents such a tree by the triple $\branching{x}{v}{q}$.
The following definition, our version of~\cite[Definition~1]{raussendorf:contextualityinmbqc}, makes this precise.

\begin{definition}\label{def:linearmp}
For a set of sites $U\subseteq\II$, we define
 the set $\MP_U$ of \emph{linear measurement protocols} that measure the sites $U$. We overload notation by writing $s$ for the function sending a protocol $p$ to the set $s(p)\subseteq \II$ of sites it measures.
These notions are defined inductively as follows:
\begin{itemize}
	\item We set $\MP_\emptyset\defeq\{()\}$, \ie there is a unique measurement protocol measuring no sites, namely the empty protocol $()$.
	\item If  $q\in \MP_{U}$ is a linear measurement protocol, $v$ is a vector in $\ZZ_2^{U}$ and $x$ is a measurement at a site not measured by $q$, \ie  $s(x)\notin U$,
	then $p\defeq\branching{x}{v}{q}$ is a linear measurement protocol and $s(p)=\{s(x)\}\cup U$, so that  $p\in \MP_{\{s(x)\} \cup U}$.
\end{itemize}
\end{definition}

The above definition encodes the data needed to describe a linear measurement protocol.
However, we also need to specify how suitably-sized vectors act on measurement protocols, so that the continuation $v.q$ is well defined. 

\begin{definition}\label{def:actiononlinearMPs}
The regular action of $\ZZ_2$ on itself induces a group action of $\ZZ_2$ on the set of measurements $\bigsqcup_{i \in \II} \ZZ_2$ of $\bell$: explicitly, this action is given by  $v.x_{i,a}=x_{i,a+v}$ for $v\in \ZZ_2$. 
Note that this action preserves the site the measurement lives on. 
We define an action of $\ZZ_2^U$ on $\MP_U$ by induction on the structure of the protocol. The action of $\ZZ_2^\emptyset\cong\{0\}$ is uniquely defined on $\MP_\emptyset$.  The action of $w\in \ZZ_2^{s(p)}$ on $p=\branching{x}{v}{q}$ is defined as $w.p\defeq\branching{\pi_{s(x)}(w).x}{v}{\pi_{s(q)}(w).q}$.
\end{definition}

We call these linear measurement protocols since the adaptivity in them is $\ZZ_2$-linear. While we later generalise this definition somewhat (allowing for some amount of non-linearity), we do not define measurement protocols in the full generality used in~\cite{abramskyetal:comonadicview}. 

The set $s(p)$ is linearly ordered  by the protocol $p$, with the order defined inductively for $\branching{x}{v}{q}$ by setting $s(x)<s(q)$, \ie $s(x)$ is below every element of $s(q)$, followed by the order on $s(q)$ induced by $q$. We will use this \emph{intrinsic order} on $s(p)$ throughout. 

We often represent linear measurement protocols as trees as in the beginning of this section. We use such visual notations more expressively in the sequel. For example, we write 
\[ \vcenter{\hbox{\begin{forest} [$p$ [$x$ [$q$, edge label={node[midway,left]{$0$}} ] [$v.q$, edge label={node[midway,right]{$1$}} ]]]\end{forest}}}\] 
to mean a protocol that performs the protocol $p$, then proceeds to $x$ and then to $q$ or $v.q$ depending on the outcome of $x$. However, this notation suppresses the fact that the outcome of $p$ may influence the measurement settings of $x$ and $q$, isolating just the action of $x$ on the sequel while reminding the reader that there indeed was a past prior to $x$.
\begin{definition}
Given a linear measurement protocol $p$, we define its \emph{adaptivity matrix} $T_p$ (or simply $T$), an $s(p)\times s(p)$ matrix over $\mathbb{Z}_2$, by induction on the structure of $p$. If $p$ is the empty protocol, then $T_p$ is the unique $0\times 0$ matrix. If $p=\branching{x}{v}{q}$, we define 
	\[T_p\defeq\begin{bNiceMatrix}[first-row,first-col]   &s(x) & s(q) \\
													 s(x)	&	0 & 0  \\
													s(q)&	v & T_q \end{bNiceMatrix}\]
\end{definition} 

Note that when $|s(p)|=1$, the definition above gives $T_p=\begin{bmatrix} 0 \end{bmatrix}$.
The operational meaning of the matrix is that the $i$th column specifies how the measurement outcome at site $i$ affects other measurement settings.
By construction, the matrix $T_p$ is strictly lower triangular when one orders $s(p)$ in the intrinsic order.
This constraint ensures that a measurement outcome can only affect the settings of (strictly) later measurements.

\begin{remark}\label{rem:reconstructinglinearMPfromamatrix}
The matrix $T_p$ does not fix the protocol $p$ uniquely.
First, the same matrix might be lower-triangular with respect to many different orderings. Any such ordering is the intrinsic order of some linear protocol with the same matrix.
However, even if we fix the linear order, we are still missing information.
This is because the indexing set only keeps track of the measurement site but not of the measurement setting.
This missing information can be provided by a vector $w_p\in \ZZ_2^{s(p)}$ specifying the default measurement settings (\ie those used when all prior outcomes are $0$). 
Then one can represent $p$ uniquely by $s(p)$ equipped with the intrinsic order of $p$, $T_p$, and $w_p$.
The action $(v,p)\mapsto v.p$ translates in this representation to $v.(T_p,w_p)=(T_p,v+w_p)$.
In~\cite{raussendorf:contextualityinmbqc}, the default measurement setting is (in a sense) fixed to be $0$, which corresponds to working with linear rather than affine functions of prior outcomes, so this issue does not arise. 
\end{remark}

\subsection{Computing with linear protocols}\label{ssec:computingwithlinearprotocols}

We discuss how such protocols, given access to an empirical model $e:\bell$, can be used to compute a Boolean function $f\colon \ZZ_2^l \to \ZZ_2^o$.
To do so, we first fix a base protocol $p$ and a linear input encoding $Q\colon \ZZ_2^l\to\ZZ_2^{s(p)}$: the idea is that given an input $w\in \ZZ_2^l$, we run the protocol $Q(w).p$ on our empirical model, obtaining a vector $\ZZ_2^{s(p)}$ of measurement outcomes (an outcome for each measurement performed while running $p$).
This outcome is then post-processed by another linear map $Z:\ZZ_2^{s(p)}\to \ZZ_2^o$ to produce the overall output of the computation.
This whole computation is then captured by fixing the empirical model $e:\bell$, the base protocol $p\in \MP$, and the pre- and post-processing maps $Q$ and $Z$.
We say that the MBQC $(e,p,Q,Z)$ computes $f$ deterministically if, regardless of the (in general probabilistic) outcomes obtained from $e$, every input $w$ in $\ZZ_2^l$ is always mapped to the correct output $f(w)\in\ZZ_2^o$.

This can be understood more formally as follows: once the protocol $p$ has been fixed, the empirical model induces a stochastic map $e_p\colon \ZZ_2^{s(p)}\to \ZZ_2^{s(p)}$ by sending a vector $v\in\ZZ_2^{s(p)}$ to the joint outcome in $\ZZ_2^{s(p)}$ obtained when running $v.p$.
Then, $(e,p,Q,Z)$ computing $f$ deterministically boils down to commutativity of
  \[
  \begin{tikzpicture}
     \matrix (m) [matrix of math nodes,row sep=1em,column sep=4em,minimum width=2em]
     {
      \ZZ_2^l & \ZZ_2^o \\
      \ZZ_2^{s(p)} & \ZZ_2^{s(p)} \\};
     \path[->]
     (m-1-1) edge node [left] {$Q$} (m-2-1)
            edge node [above] {$f$} (m-1-2)
     (m-2-2) edge node [right] {$Z$} (m-1-2)
     (m-2-1) edge node [below] {$e_p$} (m-2-2);
  \end{tikzpicture}
  \]
 
Now, roughly speaking, \cite[Theorem~2]{raussendorf:contextualityinmbqc} states that if an MBQC $(e,p,Q,Z)$ computes a non-affine function $f$, then $e$ is strongly contextual.
\footnote{More precisely, Raussendorf shows that $e$ is strongly contextual even when restricted to those joint measurements that could occur when running $p$.}
We strengthen this to show that $e$ is AvN-contextual when extended to a suitable scenario of measurement protocols.
We first make some further observations and reductions. 

As a function $f\colon \ZZ_2^l\to\ZZ_2^o$ is affine iff each of its coordinate functions is, it suffices to study functions $f\colon \ZZ_2^l\to\ZZ_2$ returning a single output bit.
We now show that we can further reduce to two input bits.
We say that a function $f\colon \ZZ^l_2\to \ZZ_2$
has \emph{even (odd) weight} if the preimage of $1$ under $f$ has even (odd) cardinality, \ie $\sum_{v\in \ZZ^l_2} f(v)=0$ ($=1$). While every affine function $\ZZ^l_2\to \ZZ_2$ has even weight when $l\geq 2$, the converse does not hold in general. However, when $l=2$, all functions with even weight are in fact affine. 
Moreover, since affineness is a condition stated with two variables, it can be tested on each two-dimensional subspace, so that a function is affine iff its restrictions to two-dimensional subspaces are affine. We thus obtain the following.

\begin{lemma}\label{lem:reductiontooddweight}
A function $f\colon \ZZ^l_2\to \ZZ_2$ is affine if and only if for every two-dimensional subspace $V\subseteq \ZZ^l_2$ the restriction $f|_V$ has even weight.
\end{lemma}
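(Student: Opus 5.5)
The plan is to reduce everything to additivity of $g\defeq f+f(0)$, using the fact that a two-dimensional subspace $V$ has exactly four elements, $V=\{0,u,w,u+w\}$ with $u,w$ linearly independent. The even-weight condition on $V$ therefore reads
\[ f(0)+f(u)+f(w)+f(u+w)=0 . \]
Adding the constant $f(0)$ to $f$ changes this sum by $4f(0)=0$. So $f|_V$ has even weight iff $g|_V$ does. Likewise $f$ is affine iff $g$ is, and since $g(0)=0$ this holds iff $g$ is linear. So I will work with $g$ throughout.

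\emph{Forward direction.} Suppose $g$ is linear. Then for any two-dimensional $V=\{0,u,w,u+w\}$ we get
\[ g(0)+g(u)+g(w)+g(u+w)=0+g(u)+g(w)+\bigl(g(u)+g(w)\bigr)=0 , \]
so $g|_V$, and hence $f|_V$, has even weight. This direction is immediate.

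\emph{Converse.} Assume every restriction $f|_V$ to a two-dimensional subspace has even weight, so that $g(u)+g(w)=g(u+w)$ whenever $u,w$ are linearly independent. I will check additivity $g(u+w)=g(u)+g(w)$ in the remaining, degenerate cases.
\begin{enumerate}
\item If $u=0$ or $w=0$, the identity holds trivially since $g(0)=0$.
\item If $u=w\neq 0$, both sides vanish: $g(0)=0=2g(u)$.
\end{enumerate}
Over $\ZZ_2$, two vectors are linearly dependent exactly when one of them is $0$ or they are equal, so these cases are exhaustive. Hence $g$ is additive on all of $\ZZ_2^l$, and therefore linear over $\ZZ_2$. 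It follows that $f=g+f(0)$ is affine.

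For $l<2$ there are no two-dimensional subspaces, and every function $\ZZ_2^l\to\ZZ_2$ is affine, so the statement holds vacuously. I do not expect a real obstacle. The only point needing care is that the hypothesis concerns linear subspaces rather than affine cosets, and the shift to $g$ with $g(0)=0$ is precisely what makes linear subspaces sufficient.
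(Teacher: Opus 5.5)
Your proof is correct and follows the paper's reasoning. The paper argues that affineness is a two-variable condition, so it can be tested on two-dimensional subspaces, and that on $\ZZ_2^2$ even weight coincides with affineness. Your identity, that $f(0)+f(u)+f(w)+f(u+w)=0$ is exactly additivity of $g=f+f(0)$ on an independent pair $u,w$, makes both steps explicit at once.
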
 

Consider an MBQC $(e,p,Q,Z)$ that deterministically computes a non-affine function $f\colon \ZZ_2^l\to\ZZ_2$. As $f$ is not affine, by \Cref{lem:reductiontooddweight} some restriction to a two-dimensional subspace has odd weight. We can then precompose both $f$ and $Q$ with the inclusion $\ZZ_2^2\hookrightarrow \ZZ^l_2$ of this subspace, resulting in a function $\ZZ_2^2\to \ZZ_2$ with odd weight computed deterministically by the resulting measurement protocol. Thus we have reduced to the case when $l=2$.

\subsection{Main result and proof sketch}\label{ssec:mainresult-pfsketch}

To get an AvN argument in the sense of~\cite{abramskyetal:contextualitycohomologyparadox},
we need a measurement scenario where all measurements have outcomes valued in a single ring (or module).
The scenario of all linear measurement protocols on $\bell$ will not do,
since different measurement protocols can return bit-strings of different lengths.
This can be fixed by incorporating the post-processing $Z$ into the measurements of the scenario.
However, in order to obtain an AvN argument, we also need to generalise our notion of a measurement protocol.
A crucial modification is to include certain subprotocols of linear measurement protocols as measurements in the scenario: this allows us to decompose protocols into sums of smaller protocols that may in turn satisfy further linear equations.
The resulting scenario is defined later in \Cref{def:MPScenario}
and denoted $\MP(\bell,\ZZ_2)$: any model $e:\bell$ extends to a model $\MP(e,\ZZ_2):\MP(\bell,\ZZ_2)$.

Let us state our main result:
\begin{theorem}\label{thm:main}
     If an MBQC $(e,p,Q,Z)$ deterministically computes a non-affine function $f\colon \ZZ_2^l\to\ZZ_2^o$ then the model $\MP(e,\ZZ_2)$ on $\MP(\bell,\ZZ_2)$ induced by $e$ has an inconsistent $\ZZ_2$-linear theory.
\end{theorem}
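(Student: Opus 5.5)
We first reduce to the case of a single output bit and two input bits, as in the discussion preceding the statement. A function into $\ZZ_2^o$ is affine iff each coordinate is. So we may replace $Z$ by $\pi_j\circ Z$ for a non-affine coordinate $j$ and assume $o=1$. By \Cref{lem:reductiontooddweight}, some two-dimensional subspace $V\subseteq\ZZ_2^l$ has $f|_V$ of odd weight. Precomposing $Q$ and $f$ with a parametrisation $\ZZ_2^2\cong V$ (an affine embedding if we allow cosets, which costs nothing since affine shifts can be absorbed into the default settings by \Cref{rem:reconstructinglinearMPfromamatrix}) gives an MBQC computing $f'\colon\ZZ_2^2\to\ZZ_2$ with $\sum_{w\in\ZZ_2^2}f'(w)=1$.

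The inconsistent system will then be the Anders--Browne/GHZ pattern lifted to protocols. For each $w\in\ZZ_2^2$, the measurement of $\MP(\bell,\ZZ_2)$ given by running $Q(w).p$ and post-processing with $Z$ is a single $\ZZ_2$-valued variable $m_w$. Determinism gives $\MP(e,\ZZ_2)\vDash m_w=f'(w)$. Summing over $w$ yields $\sum_w m_w=1$ in any putative global assignment. It therefore suffices to show that the \emph{formal} sum $\sum_w m_w$ is forced to equal $0$ by the linear theory, i.e.\ that the equations of $\MP(e,\ZZ_2)$, all of which hold with respect to \emph{every} model (namely, structural identities decomposing protocols into subprotocols), entail $\sum_{w\in\ZZ_2^2} m_w=0$. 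This is the single identity to which the theorem reduces. Together with $\sum_w f'(w)=1$, it gives $0=1$.

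To prove that identity, we argue by induction on the depth of $p$ (number of sites). Writing $p=\branching{x}{v}{q}$ and $Z=Z_x+Z_q$, each $m_w$ decomposes in some context of $\MP(\bell,\ZZ_2)$ as (outcome of $x$ weighted by $Z_x$) plus a protocol that first measures $x$ and then runs the $Z_q$-post-processed continuation. That latter protocol is in turn a sum of halting subprotocols ``measure $x$; if outcome is $b$, run $(bv+\pi_{s(q)}Q(w)).q$, else halt''. Equations of this form hold in every model because they hold pointwise on every joint outcome, and they are exactly why halting subprotocols must be included in the scenario.

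The four values $\pi_{s(x)}Q(w)$ take each value in the image an even number of times, since $Q$ is linear on $\ZZ_2^2$. So the $x$-terms cancel in pairs. For the continuation terms, one must rearrange and condense the halting branches into protocols on $s(q)$, indexed by a two-dimensional (or degenerate) family of settings $\pi_{s(q)}Q(w)+bv$. One then applies the induction hypothesis, which states that the sum over any affine two-dimensional family of settings of a linear protocol vanishes, with degenerate families cancelling in pairs.

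We expect this step to be the main obstacle. The branch settings depend jointly on $w$ and on the outcome $b$, so the four continuation protocols do not form a clean affine family of a smaller linear protocol. To handle this, we would strengthen the inductive statement to generalised protocols that may halt or measure several sites at once, and prove three lemmas: that branches can be reordered, that pairs of halting branches with equal settings can be condensed into an unconditional protocol, and that a protocol splits as a sum of its branches. Each lemma would be verified pointwise on joint outcomes, so that the equations lie in the linear theory of $\MP(e,\ZZ_2)$ for any $e$.
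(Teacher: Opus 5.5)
Your reduction to a two-input, one-output function of odd weight matches the paper. So does your observation that it then suffices to derive $\sum_{w\in\ZZ_2^2}Q(w).p=0$ from equations valid in every $\MP(e,\ZZ_2)$.

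\textbf{The gap.} That identity is the whole content of the theorem (\Cref{thm:sumstozero}), and you do not prove it. You stop exactly at the continuation terms and say you ``would strengthen the inductive statement to generalised protocols that may halt'', without saying what the strengthened statement is.

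The natural reading is false: that the sum over a two-dimensional affine family of settings vanishes also for protocols with halting nodes. Take $p=\multiplicative{x_{i,0}}{0}{0}$ and $Q=\mathrm{id}\colon\ZZ_2^2\to\ZZ_2^{\{i,\ast\}}$. Under $\gs$, the output of $Q(a,c).p$ is $c\,(1+\gs(x_{i,a}))$. Summing over $(a,c)$ gives $\gs(x_{i,0})+\gs(x_{i,1})$, which is not $0$ for all $\gs$, so it cannot be derived.

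This is why the paper's induction hypothesis requires input dimension $l\ge m(p)+2$, one extra dimension per multiplicative level. It is also why its lemmas are not the ones you list:
\begin{itemize}
\item a dimension count, $\dim\im T_p+\dim\im Q>\dim\ZZ_2^{s(p)+1}$, lets it splice $p$ into a protocol of non-maximal rank (\Cref{lem:cansplice});
\item it condenses that protocol into fewer branching levels with higher-arity nodes (\Cref{lem:nonmaximalrank});
\item it re-expands those nodes via \Cref{lem:separatingsums}, paying for each new multiplicative level with one extra input dimension.
\end{itemize}
Your three lemmas (reordering branches, merging halting pairs, decomposing into branches) are valid identities. You give no argument that they close the induction.

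\textbf{A possible repair.} Your first-node decomposition can, I believe, be completed without any strengthening, if you keep the halting node outside the induction hypothesis. Write $p=\branching{x}{v}{q}$ with $i=s(x)$. Let $a(w)$ be the setting of $x$ in $Q(w).p$, and set $A_w\defeq\pi_{s(q)+1}Q(w)$. Combine two facts: $Q(w).p=\multiplicative{x_{i,a(w)}}{0}{A_w.q}+\multiplicative{x_{i,a(w)}}{1}{(v+A_w).q}$, and the valid identity $\multiplicative{x_{i,a}}{0}{r}=r+\multiplicative{x_{i,a}}{1}{r}$ for $i\notin s(r)$. Together they give
\[ \sum_{w}Q(w).p=\sum_{w}A_w.q+\sum_{a\in\ZZ_2}\ \sum_{w:\,a(w)=a}\ \sum_{b\in\ZZ_2}\multiplicative{x_{i,a}}{1}{(bv+A_w).q}. \]
The first sum is over an affine image of $\ZZ_2^2$. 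For each $a$, the inner double sum is over one or two affine images of $\ZZ_2^2$ under $(w,b)\mapsto bv+A_w$.

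So an induction hypothesis for \emph{linear} protocols and arbitrary affine maps $\ZZ_2^2\to\ZZ_2^{s(q)+1}$ kills both, provided it is phrased as a $\ZZ_2$-combination of valid context equations among protocols on the sites of $q$. The prefix $\multiplicative{x_{i,a}}{1}{\cdot}$ is then carried through that derivation equation by equation using \Cref{lem:eqsarecongruent}.

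As written, though, your proposal contains neither this regrouping nor the paper's dimension bookkeeping, so its central step is missing.
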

The basic premise of the proof is simple. We first reduce to a function $\ZZ_2^{l}\to \ZZ_2$ by picking a non-affine coordinate function of $f$, and then restrict to a function $\hat{f}\colon \ZZ_2^2\to\ZZ_2$ with odd weight  by \Cref{lem:reductiontooddweight}.

Then, an MBQC $(e,p,Q,Z)$ deterministically computes a function $\hat{f}\colon \ZZ_2^2\to\ZZ_2$ iff
\begin{equation}\label{eq:computingf}\MP(e,\ZZ_2)\vDash Q(w).p;Z=\hat{f}(w) \;\text{ for all }w\in \ZZ_2^2,\end{equation}
where we (temporarily) write $Q(w).p;Z$ for the measurement in $\MP(\bell,\ZZ_2)$ which post-processes $Q(w).p$ by $Z$.
When $\hat{f}$ has odd weight, adding these four equations yields $1$ on the right-hand side, so that
\[\MP(e,\ZZ_2)\vDash^*\sum_{w\in \ZZ_2^2} Q(w).p;Z=1.\]
Hence, to prove \Cref{thm:main}, it suffices to show that 
\begin{equation}\label{eq:sumstozero}\MP(e,\ZZ_2)\vDash^*\sum_{w\in \ZZ_2^2} Q(w).p;Z=0\end{equation}
and that's where most of the effort lies.  In fact, this equation holds in the $\ZZ_2$-linear theory of $\MP(e,\ZZ_2)$ for \emph{any} $e:\bell$
(see \Cref{prop:allMPequations}). 
As a result, we will slightly abuse notation and often omit $\MP(e,\ZZ_2)\vDash$ and $\MP(e,\ZZ_2)\vDash^*$ from the equations we state.

We close this section by sketching the overall  structure of the inductive argument for \cref{eq:sumstozero}.
However, the reader should be aware that the sketch below is morally but not formally correct: roughly speaking, making it fully rigorous requires proving a more general result with a sufficiently strong induction hypothesis making everything work.

\begin{proof}[Proof sketch] 
The proof proceeds by induction on $n\defeq\left\vert s(p)\right\vert$, the number of sites in $p$.

If $n=1$, then $Q \colon \ZZ_2^2 \to \ZZ_2^1$ is not injective, since its domain has strictly larger dimension than its codomain.
As a result, each summand in \cref{eq:sumstozero} appears an even number of times and hence cancels out, so \cref{eq:sumstozero} holds trivially.

For $n>1$, the result follows from the following three lemmas: 
\begin{enumerate}
	\item If $\rank T_p=n$, we show in \Cref{lem:cansplice} that there is another measurement protocol $q$  such that 
		\[\sum_{w\in \ZZ_2^2} Q(w).p;Z \;=\, \sum_{w\in \ZZ_2^2} Q(w).q;Z\]
	and $\rank T_q<n$. In other words, without loss of generality, we can assume that $T_p$ is not of maximal rank.
	\item If $\rank T_p<n$, we show in \Cref{lem:nonmaximalrankacrossthesum} that we can condense $p$ to have strictly fewer levels without affecting the value of the sum $\sum_{w\in \ZZ_2^2} Q(w).p;Z$.
    The cost of this is that we need to allow \emph{higher-arity nodes}, which, rather than measuring a single measurement, measure the total parity $\sum x_i$ of a set of measurements in one step.
	\item Given a measurement protocol, we show in \Cref{lem:sumofsmallersums} that we can systematically replace all higher-arity branching nodes  by sums of protocols with unary nodes. This will let us write the overall sum as a sum of smaller sums $\sum_{w} Q_i(w).p_i;Z_i$, where each $p_i$ has strictly fewer than $n$  sites, therefore reducing to the induction hypothesis.
    However, the cost to pay is that we need to allow measurement protocols that have \emph{multiplicative nodes} which, instead of branching into two continuations, continue only if a certain fixed outcome is observed, otherwise returning $0$. While such nodes allow one to implement non-affine functions, it turns out that each level with such nodes increases the input dimension by $1$ as well, and the result remains true provided the input dimension is sufficiently large compared to the number of such levels. \qedhere
\end{enumerate}
\end{proof}

\section{Illustrative examples}\label{sec:illustrative}

In this section, we give examples of AvN arguments that arise in this manner.
We are not as precise here as in the formal treatment of the following sections, since the goal is to illustrate rather than to prove carefully. 

Let $\left\vert \II\right\vert=2$, and denote by $x_0,x_1$ the measurements at the first site and by $y_0,y_1$ those at the second site.
 Let the base protocol $p$ just measure $x_0$ and then $y_0$ regardless of the outcome, with the post-processing $Z$ returning the sum of the two outcome bits,
 and set the pre-processing $Q$ to be identity on $\ZZ_2^2$, \ie the first input bit determines the measurement choice for $x$ and the second input bit the measurement choice for $y$.
 If $e: \bell$ is an empirical model such that $(e,p,Q,Z)$ deterministically computes the AND function, equation~\eqref{eq:computingf} specialises to the following system of equations:
\[
\begin{aligned}
    \fitX{x_0} + \fitX{y_0}  &= 0 \qquad & 
    \fitX{x_0} + \fitX{y_1}  &= 0 \\
    \fitX{x_1} + \fitX{y_0}  &= 0 \qquad&
    \fitX{x_1} + \fitX{y_1}  &= 1
\end{aligned}
\]
The only empirical model satisfying these equations is the Popescu--Rohrlich (PR) box \cite{popescu1994quantum},
and thus \Cref{thm:main} recovers the usual AvN argument of the PR box (obtained by summing the four equations) as a special case.

Now, let $\left\vert \II\right\vert=3$, denoting the measurements at the first, second, and third sites by $x_a$, $y_a$, and $z_a$ for $a\in\ZZ_2$, respectively.
Let the base protocol $p$ measure $x_0$, then $y_0$, and finally $z_0$ (once again, adaptivity is trivial), with the  post-processing $Z$ returning the sum of the three outcome bits,
and let $Q$ be the  linear preprocessing function
$Q \colon \ZZ_2^2 \to \ZZ_2^3 \coloncolon  (a,b) \mapsto (a, b, a + b)$ from \Cref{sec:introduction}. If $e: \bell$ is an empirical model such that $(e,p,Q,Z)$ deterministically computes the OR function, equation~\eqref{eq:computingf} specialises to 
\[
\begin{aligned}
    \fitX{x_0} + \fitX{y_0} + \fitX{z_0} &= 0 \qquad & 
    \fitX{x_0} + \fitX{y_1} + \fitX{z_1} &= 1 \\
    \fitX{x_1} + \fitX{y_0} + \fitX{z_1} &= 1 \qquad&
    \fitX{x_1} + \fitX{y_1} + \fitX{z_0} &= 1
\end{aligned}
\]
\ie to \eqref{eq:ghz} from \Cref{sec:introduction}.
Thus \Cref{thm:main} recovers the usual AvN argument of the GHZ state as a special case.

We now turn to an example in which adaptivity plays a genuine role, again with $\left\vert \II\right\vert=3$.
The argument establishing~\eqref{eq:sumstozero} is correspondingly more involved: we sketch it here, indicating how each step of the argument corresponds to a general result formalised in the following sections.
We adopt the same notation as in the GHZ example, and let the base protocol $p$ start from $x_0$, then proceed to $y$, and finally to $z$, with the measurement settings for $y$ and $z$ given by the immediately preceding outcome.
The tree representing $p$ and the matrix $T_p$ (which has rank $2$) are given by 
\[\vcenter{\hbox{\begin{forest}
[     $x_0$
[$y_0$, edge label={node[midway,left]{$0$}}  
[$z_0$, edge label={node[midway,left]{$0$}}]		[$z_1$,  edge label={node[midway,right]{$1$}}]]		      [$y_1$,  edge label={node[midway,right]{$1$}}  
[$z_0$, edge label={node[midway,left]{$0$}}]		[$z_1$,  edge label={node[midway,right]{$1$}}] ]]
\end{forest}}}
\qquad\text{and}\qquad
T_p=\begin{bNiceMatrix}[first-row,first-col]    & x & y & z \\
													 x	& 0 & 0 & 0 \\
													 y & 1 & 0 & 0 \\
                                                  z & 0 & 1 & 0  \end{bNiceMatrix}.\]
We set the outcome post-processing to return only the final bit, \ie the outcome of the $z_a$ measurement, and define the input encoding $Q\colon \ZZ^2_2\to \ZZ_2^3$ by  $(a,b) \mapsto (a, b, b)$.
Now, if we assume that we are computing the AND function using these choices for $p,Q,Z$, we obtain the following equations:
\begin{subequations}
\label{eqs:depth3trees}
\begin{center}
\renewcommand{\theequation}{\arabic{parentequation}--\roman{equation}}
\begin{tabular}{cc}
\begin{minipage}{0.45\linewidth}
\begin{equation}\label{eq:tree-i}
\vcenter{\hbox{
\begin{forest}
[$x_0$
    [$y_0$, edge label={node[midway,left]{$0$}}
        [$z_0$, edge label={node[midway,left]{$0$}}]
        [$z_1$, edge label={node[midway,right]{$1$}}]
    ]
    [$y_1$, edge label={node[midway,right]{$1$}}
        [$z_0$, edge label={node[midway,left]{$0$}}]
        [$z_1$, edge label={node[midway,right]{$1$}}]
    ]
]
\end{forest}
}}
= 0
\end{equation}
\end{minipage}
&
\begin{minipage}{0.45\linewidth}
\begin{equation}\label{eq:tree-ii}
\vcenter{\hbox{
\begin{forest}
[$x_1$
    [$y_0$, edge label={node[midway,left]{$0$}}
        [$z_0$, edge label={node[midway,left]{$0$}}]
        [$z_1$, edge label={node[midway,right]{$1$}}]
    ]
    [$y_1$, edge label={node[midway,right]{$1$}}
        [$z_0$, edge label={node[midway,left]{$0$}}]
        [$z_1$, edge label={node[midway,right]{$1$}}]
    ]
]
\end{forest}
}}
= 0
\end{equation}
\end{minipage}
\\[3ex]
\begin{minipage}{0.45\linewidth}
\begin{equation}\label{eq:tree-iii}
\vcenter{\hbox{
\begin{forest}
[$x_0$
    [$y_1$, edge label={node[midway,left]{$0$}}
        [$z_1$, edge label={node[midway,left]{$0$}}]
        [$z_0$, edge label={node[midway,right]{$1$}}]
    ]
    [$y_0$, edge label={node[midway,right]{$1$}}
        [$z_1$, edge label={node[midway,left]{$0$}}]
        [$z_0$, edge label={node[midway,right]{$1$}}]
    ]
]
\end{forest}
}}
= 0
\end{equation}
\end{minipage}
&
\begin{minipage}{0.45\linewidth}
\begin{equation}\label{eq:tree-iv}
\vcenter{\hbox{
\begin{forest}
[$x_1$
    [$y_1$, edge label={node[midway,left]{$0$}}
        [$z_1$, edge label={node[midway,left]{$0$}}]
        [$z_0$, edge label={node[midway,right]{$1$}}]
    ]
    [$y_0$, edge label={node[midway,right]{$1$}}
        [$z_1$, edge label={node[midway,left]{$0$}}]
        [$z_0$, edge label={node[midway,right]{$1$}}]
    ]
]
\end{forest}
}}
= 1
\end{equation}
\end{minipage}
\\
\end{tabular}
\end{center}
\end{subequations}
These equations are slightly informal and in particular, they omit the additional detail that the protocols return only a single bit (the outcome of the final $z_a$ measurement), so we must remember this ourselves.
Again, the right-hand sides sum to $1$; we thus wish to show that the left-hand sides sum to $0$, establishing that \cref{eq:sumstozero} holds.
But to do so, we need to evaluate such sums of trees.

The first key identity is that a tree equals the sum of its branches; for the first tree in~\eqref{eqs:depth3trees}, this reads:
\[\vcenter{\hbox{\begin{forest}
[$x_0$[$y_0$, edge label={node[midway,left]{$0$}} [$z_0$, edge label={node[midway,left]{$0$}}][$z_1$,  edge label={node[midway,right]{$1$}}]][$y_1$,  edge label={node[midway,right]{$1$}} [$z_0$, edge label={node[midway,left]{$0$}}]	[$z_1$,  edge label={node[midway,right]{$1$}}] ]]\end{forest}}}
=\vcenter{\hbox{\begin{forest} [$x_0$ [$y_0$, edge label={node[midway,right]{$0$}} [$z_0$, edge label={node[midway,right]{$0$}} ]]]\end{forest}}} + 
	\vcenter{\hbox{\begin{forest} [$x_0$ [$y_0$, edge label={node[midway,right]{$0$}} [$z_1$, edge label={node[midway,right]{$1$}} ]]]\end{forest}}} +
    \vcenter{\hbox{\begin{forest} [$x_0$ [$y_1$, edge label={node[midway,right]{$1$}} [$z_0$, edge label={node[midway,right]{$0$}} ]]]\end{forest}}}
    + 
	\vcenter{\hbox{\begin{forest} [$x_0$ [$y_1$, edge label={node[midway,right]{$1$}} [$z_1$, edge label={node[midway,right]{$1$}} ]]]\end{forest}}}\]
To explain the meaning of such an equation, let us first look at the interpretation of a single branch, say the leftmost one:
\[\begin{forest} [$x_0$ [$y_0$, edge label={node[midway,right]{$0$}} [$z_0$, edge label={node[midway,right]{$0$}} ]]]\end{forest}\]

It depicts a (no longer linear) measurement protocol, which starts from the top and attempts to obtain the indicated outcomes in order: if it succeeds, it returns the outcome of the final measurement $z_0$; if at any point a measurement returns a non-indicated outcome (\eg measuring $x_0$ yields outcome $1$), the protocol halts and returns $0$.
Both the tree itself and its branches are measurements in their own right in $\MP(\bell,\ZZ_2)$.
Asserting an equation like this is to claim that the support of our empirical model (extended to $\MP(\bell,\ZZ_2)$) satisfies it, \ie that upon measuring both sides of the equation, the obtained joint outcome satisfies the equation. 

While all equations we discuss could be expressed by breaking every protocol into its branches, it is conceptually clearer to group branches into larger subprotocols.
Hence our base protocol also satisfies the equation
\[\vcenter{\hbox{\begin{forest}
[$x_0$[$y_0$, edge label={node[midway,left]{$0$}} [$z_0$, edge label={node[midway,left]{$0$}}][$z_1$,  edge label={node[midway,right]{$1$}}]][$y_1$,  edge label={node[midway,right]{$1$}} [$z_0$, edge label={node[midway,left]{$0$}}]	[$z_1$,  edge label={node[midway,right]{$1$}}] ]]\end{forest}}}
=\vcenter{\hbox{\begin{forest}[$x_0$[$y_0$, edge label={node[midway,left]{$0$}} [$z_0$, edge label={node[midway,left]{$0$}}][$z_1$,  edge label={node[midway,right]{$1$}}]]]\end{forest}}}
+ \vcenter{\hbox{\begin{forest}[$x_0$[$y_1$, edge label={node[midway,left]{$1$}} [$z_0$, edge label={node[midway,left]{$0$}}][$z_1$,  edge label={node[midway,right]{$1$}}]]]\end{forest}}}\]
asserting that the whole protocol equals the sum of its left and right halves.
A similar equation is satisfied by each of the four trees in~\eqref{eqs:depth3trees}.
As the aim is to show that the sum of these four trees is zero, when evaluating this sum we can regroup these halves as we wish.
In particular, if we add the left half of the base protocol~\eqref{eq:tree-i} and the right half of~\eqref{eq:tree-iii}, we obtain another linear measurement protocol:
\[\vcenter{\hbox{\begin{forest}[$x_0$[$y_0$, edge label={node[midway,left]{$0$}} [$z_0$, edge label={node[midway,left]{$0$}}][$z_1$,  edge label={node[midway,right]{$1$}}]]]\end{forest}}}
+ \vcenter{\hbox{\begin{forest}[$x_0$[$y_0$, edge label={node[midway,left]{$1$}} [$z_1$, edge label={node[midway,left]{$0$}}][$z_0$,  edge label={node[midway,right]{$1$}}]]]\end{forest}}}
=\vcenter{\hbox{\begin{forest}
[$x_0$[$y_0$, edge label={node[midway,left]{$0$}} [$z_0$, edge label={node[midway,left]{$0$}}][$z_1$,  edge label={node[midway,right]{$1$}}]][$y_0$,  edge label={node[midway,right]{$1$}} [$z_1$, edge label={node[midway,left]{$0$}}]	[$z_0$,  edge label={node[midway,right]{$1$}}] ]]\end{forest}}}
 \] 
Such re-groupings are formalised in \Cref{lem:splicing,lem:cansplice}.

This recombined measurement protocol has a different adaptivity structure from the original ones:
after measuring $x_0$, it measures $y_0$ regardless of the outcome, while the final measurement setting for $z$ is determined by the parity of the first two outcomes.
The corresponding adaptivity matrix is
\[\begin{bNiceMatrix}[first-row,first-col]    & x & y & z \\
													 x	& 0 & 0 & 0 \\
													 y & 0 & 0 & 0 \\
                                                  z & 1 & 1 & 0  \end{bNiceMatrix},\]
which now has rank $1$.
We could therefore condense this information and represent the measurement protocol as 
\[\begin{forest}
[     $x_0+y_0$
[$z_0$, edge label={node[midway,left]{$0$}}]		[$z_1$,  edge label={node[midway,right]{$1$}}]]
\end{forest}\]
which first measures both $x_0$ and $y_0$, then, based on their parity $x_0+y_0$,
selects the setting to use for $z$, and finally returns the outcome of this last measurement.
Such measurement protocols that may measure a sum of basic measurements in a single step are defined formally in \Cref{def:moregeneralmps}, and this operation of condensing a longer protocol into fewer levels is formalised in \Cref{lem:nonmaximalrank}.

Performing this regrouping for all four trees in~\eqref{eqs:depth3trees},
we find that showing equation~\eqref{eq:sumstozero} holds for them is equivalent to showing that it holds for the following four trees:
\begin{equation}\label{pic:simplertrees}
\vcenter{\hbox{\begin{forest}
[ $x_0+y_0$
[$z_0$, edge label={node[midway,left]{$0$}}]		[$z_1$,  edge label={node[midway,right]{$1$}}]] \end{forest}}}
 \qquad
\vcenter{\hbox{\begin{forest}[$x_1+y_0$ 
[$z_0$, edge label={node[midway,left]{$0$}}]		[$z_1$,  edge label={node[midway,right]{$1$}}]]
\end{forest}}}
 \qquad
\vcenter{\hbox{\begin{forest}[$x_0+y_1$ 
[$z_1$, edge label={node[midway,left]{$0$}}]		[$z_0$,  edge label={node[midway,right]{$1$}}]]
\end{forest}}}
 \qquad
\vcenter{\hbox{\begin{forest}[$x_1+y_1$ 
[$z_1$, edge label={node[midway,left]{$0$}}]		[$z_0$,  edge label={node[midway,right]{$1$}}]]
\end{forest}}}\end{equation}
In summary, we have regrouped the four original measurement protocols (without affecting their total sum) to obtain four protocols with simpler adaptivity, at the cost of allowing protocols that measure a sum of basic measurements in a single step.

Now, to show that these sum to zero, we first assert the equation 
            \[ \vcenter{\hbox{\begin{forest} [$x_a +y_b$ [$z_c$, edge label={node[midway,right]{$r$}} ]]\end{forest}}}
	= \vcenter{\hbox{\begin{forest} [$x_a$ [$z_c$, edge label={node[midway,right]{$r$}} ]]\end{forest}}} + 
	\vcenter{\hbox{\begin{forest} [$y_b$ [$z_c$, edge label={node[midway,right]{$1$}} ]]\end{forest}}}
	\]
(true in a measurement context of the scenario of measurement protocols on $\bell$; see proof of \Cref{lem:separatingsums})
for each branch.
Applying this to the first tree in \eqref{pic:simplertrees} yields
\[
\vcenter{\hbox{\begin{forest} [ $x_0+y_0$[$z_0$, edge label={node[midway,left]{$0$}}]		[$z_1$,  edge label={node[midway,right]{$1$}}]]\end{forest}}}=
  \vcenter{\hbox{\begin{forest} [$x_0 +y_0$ [$z_0$, edge label={node[midway,right]{$0$}} ]]\end{forest}}} + \vcenter{\hbox{\begin{forest} [$x_0 +y_0$ [$z_1$, edge label={node[midway,right]{$1$}} ]]\end{forest}}}
  = \vcenter{\hbox{\begin{forest} [$x_0$ [$z_0$, edge label={node[midway,right]{$0$}} ]]\end{forest}}} + 
	\vcenter{\hbox{\begin{forest} [$y_0$ [$z_0$, edge label={node[midway,right]{$1$}} ]]\end{forest}}}
	+ \vcenter{\hbox{\begin{forest} [$x_0$ [$z_1$, edge label={node[midway,right]{$1$}} ]]\end{forest}}} + 
	\vcenter{\hbox{\begin{forest} [$y_0$ [$z_1$, edge label={node[midway,right]{$1$}} ]]\end{forest}}}
=\vcenter{\hbox{\begin{forest} [$x_0$ [$z_0$, edge label={node[midway,left]{$0$}} ] [$z_1$, edge label={node[midway,right]{$1$}} ]]\end{forest}}}+
		\vcenter{\hbox{\begin{forest} [$y_0$ [$z_0$, edge label={node[midway,right]{$1$}} ]]\end{forest}}}+
	\vcenter{\hbox{\begin{forest} [$y_0$ [$z_1$, edge label={node[midway,right]{$1$}} ]]\end{forest}}}
\]
This kind of manipulation is demonstrated more carefully and in greater generality in \Cref{lem:separatingsums}.
Proceeding similarly for all four trees in~\eqref{pic:simplertrees} results in  
\begin{center}
\begin{tabular}{cc}
\begin{minipage}{0.45\textwidth}
\[
\vcenter{\hbox{\begin{forest}[ $x_0+y_0$[$z_0$, edge label={node[midway,left]{$0$}}]		[$z_1$,  edge label={node[midway,right]{$1$}}]]\end{forest}}}	
=\vcenter{\hbox{\begin{forest} [$x_0$ [$z_0$, edge label={node[midway,left]{$0$}} ] [$z_1$, edge label={node[midway,right]{$1$}} ]]\end{forest}}}+
		\vcenter{\hbox{\begin{forest} [$y_0$ [$z_0$, edge label={node[midway,right]{$1$}} ]]\end{forest}}}+
	\vcenter{\hbox{\begin{forest} [$y_0$ [$z_1$, edge label={node[midway,right]{$1$}} ]]\end{forest}}}
\]
\end{minipage}
&
\begin{minipage}{0.45\textwidth}
\[
\vcenter{\hbox{\begin{forest}[ $x_1+y_0$[$z_0$, edge label={node[midway,left]{$0$}}]		[$z_1$,  edge label={node[midway,right]{$1$}}]]\end{forest}}}	
=\vcenter{\hbox{\begin{forest} [$x_1$ [$z_0$, edge label={node[midway,left]{$0$}} ] [$z_1$, edge label={node[midway,right]{$1$}} ]]\end{forest}}}+
		\vcenter{\hbox{\begin{forest} [$y_0$ [$z_0$, edge label={node[midway,right]{$1$}} ]]\end{forest}}}+
	\vcenter{\hbox{\begin{forest} [$y_0$ [$z_1$, edge label={node[midway,right]{$1$}} ]]\end{forest}}}
\]
\end{minipage}
\\[3ex]

\begin{minipage}{0.45\textwidth}
\[
\vcenter{\hbox{\begin{forest}[ $x_0+y_1$[$z_1$, edge label={node[midway,left]{$0$}}]		[$z_0$,  edge label={node[midway,right]{$1$}}]]\end{forest}}}	
=\vcenter{\hbox{\begin{forest} [$x_0$ [$z_1$, edge label={node[midway,left]{$0$}} ] [$z_0$, edge label={node[midway,right]{$1$}} ]]\end{forest}}}+
		\vcenter{\hbox{\begin{forest} [$y_1$ [$z_1$, edge label={node[midway,right]{$1$}} ]]\end{forest}}}+
	\vcenter{\hbox{\begin{forest} [$y_1$ [$z_0$, edge label={node[midway,right]{$1$}} ]]\end{forest}}}
\]
\end{minipage}
&
\begin{minipage}{0.45\textwidth}
\[
\vcenter{\hbox{\begin{forest}[ $x_1+y_1$[$z_1$, edge label={node[midway,left]{$0$}}]		[$z_0$,  edge label={node[midway,right]{$1$}}]]\end{forest}}}	
=\vcenter{\hbox{\begin{forest} [$x_1$ [$z_1$, edge label={node[midway,left]{$0$}} ] [$z_0$, edge label={node[midway,right]{$1$}} ]]\end{forest}}}+
		\vcenter{\hbox{\begin{forest} [$y_1$ [$z_1$, edge label={node[midway,right]{$1$}} ]]\end{forest}}}+
	\vcenter{\hbox{\begin{forest} [$y_1$ [$z_0$, edge label={node[midway,right]{$1$}} ]]\end{forest}}}
\]
\end{minipage}
\\
\end{tabular}
\end{center}
Therefore, to evaluate the sum of the four trees in~\eqref{pic:simplertrees}, we may instead add the right-hand sides of the four equations above.
These split neatly into protocols that start from $x$ and others that start from $y$. 

The latter sum to zero as each branch occurs twice. As for the summands starting with $x$, they form a sum similar to the one we started with, but involving fewer variables. 
In the context of the general proof, this is then handled by the induction hypothesis.
Unwinding the resulting construction, we proceed by repeating the earlier trick of splitting trees into halves and rearranging. Specifically, when adding the two trees starting with $x_0$ we can compute as follows:
\begin{align*}
	&\vcenter{\hbox{\begin{forest} [$x_0$ [$z_0$, edge label={node[midway,left]{$0$}} ] [$z_1$, edge label={node[midway,right]{$1$}} ]] \end{forest}}}+ 
	\vcenter{\hbox{\begin{forest} [$x_0$ [$z_1$, edge label={node[midway,left]{$0$}} ] [$z_0$, edge label={node[midway,right]{$1$}} ]]\end{forest}}}
	=
	\vcenter{\hbox{\begin{forest} [$x_0$ [$z_0$, edge label={node[midway,right]{$0$}} ]]\end{forest}}} + 
	\vcenter{\hbox{\begin{forest} [$x_0$ [$z_1$, edge label={node[midway,right]{$1$}} ]]\end{forest}}} +
	\vcenter{\hbox{\begin{forest} [$x_0$ [$z_1$, edge label={node[midway,right]{$0$}} ]]\end{forest}}} + 
	\vcenter{\hbox{\begin{forest} [$x_0$ [$z_0$, edge label={node[midway,right]{$1$}} ]]\end{forest}}}
    \\
    &=
	\vcenter{\hbox{\begin{forest} [$x_0$ [$z_0$, edge label={node[midway,right]{$0$}} ]]\end{forest}}} + 
	\vcenter{\hbox{\begin{forest} [$x_0$ [$z_0$, edge label={node[midway,right]{$1$}} ]]\end{forest}}} +
	\vcenter{\hbox{\begin{forest} [$x_0$ [$z_1$, edge label={node[midway,right]{$0$}} ]]\end{forest}}} + 
	\vcenter{\hbox{\begin{forest} [$x_0$ [$z_1$, edge label={node[midway,right]{$1$}} ]]\end{forest}}}
	=
	\vcenter{\hbox{\begin{forest} [$x_0$ [$z_0$, edge label={node[midway,left]{$0$}} ] [$z_0$, edge label={node[midway,right]{$1$}} ]] \end{forest}}}+ 
	\vcenter{\hbox{\begin{forest} [$x_0$ [$z_1$, edge label={node[midway,left]{$0$}} ] [$z_1$, edge label={node[midway,right]{$1$}} ]]\end{forest}}}
	= z_0+z_1,
\end{align*}
where we used the equation $\vcenter{\hbox{\begin{forest} [$x_0$ [$z_c$, edge label={node[midway,left]{$0$}} ] [$z_c$, edge label={node[midway,right]{$1$}} ]] \end{forest}}}
	= z_c$.
This boils down to the fact that if a protocol returns the outcome of $z_c$ in every branch, then it might as well just measure $z_c$ directly. 
Similarly, the two trees starting with $x_1$ sum to $z_0+z_1$, cancelling the above.
Taken together, these computations show that the four trees in \eqref{eqs:depth3trees} sum to zero, concluding the AvN argument.
This kind of step is systematised in \Cref{lem:sumofsmallersums}.

The general proof proceeds along similar lines, \ie by simplifying the expressions and regrouping them until each grouping sums to zero because every expression in it occurs an even number of times. The main ways of simplifying expressions amount to:
\begin{itemize}
	\item splitting an expression into two halves, and recombining the halves with those from another tree (\Cref{lem:cansplice})
	\item simplifying the adaptivity at the cost of measuring sums of several measurements at once (\Cref{lem:nonmaximalrankacrossthesum})
	\item splitting a level involving a sum of variables into a sum of expressions (\Cref{lem:separatingsums}) and using this to write the whole sum as a sum of smaller things (\Cref{lem:sumofsmallersums}).
\end{itemize}
We make these steps rigorous in the next sections.

\section{Formalising the main result} 
\label{sec:formalising}

In this section, we formalise our main result.
We begin by extending the class of measurement protocols from \Cref{sec:mbqc}, then assemble these into a contextuality scenario, and finally state and prove the key equation~\eqref{eq:sumstozero} on which \Cref{thm:main} rests, modulo some lemmas deferred to the next section.
\subsection{Extended measurement protocols}
We change our measurement protocols in three ways:
\begin{itemize}
	\item We incorporate the single output bit returned by the protocol (\ie the function $Z\colon \ZZ_2^{s(p)}\to \ZZ_2$) into the protocol itself. 
	\item We allow for two kinds of nodes:
    \emph{branching nodes} (those considered so far), which split (linearly) into two continuations,
    and \emph{multiplicative nodes}, which continue only if a particular outcome is observed, otherwise returning $0$.   
	\item We allow nodes to have arbitrary arity, simultaneously performing a finite set $D=\{x_1,\ldots,x_d\}$ of measurements at distinct sites, rather than a single measurement at a single site. The operational meaning is that $D$ stands for performing all the measurements in $D$ and returning only their total parity $x_1 + \cdots + x_d \in \ZZ_2$ (rather than their individual outcomes).
\end{itemize}
The incorporation of the output bit and higher-arity nodes are technical conveniences that simplify the inductive argument; multiplicative nodes, on the other hand, genuinely extend the expressive power of protocols and play an essential role in the argument. 
Although these are less general than the measurement protocols of~\cite{abramskyetal:comonadicview}, we call them simply \emph{measurement protocols}, as the more general notion does not arise in this work.

From now on, given sets $U$ and $V$, we write $U + V$ for their disjoint union, and moreover we denote a singleton set by $1$.
In particular, given a set of sites $U$, $U+1$ stands for $U$ with an additional point $\ast$, which we use to index the output bit of a protocol.

\begin{definition}\label{def:moregeneralmps}
For sets of sites $U,V\subseteq \II$ (not necessarily disjoint) we define the set $\MP_{U,V}$ of measurement protocols
with multiplicative nodes at sites in $U$ and branching nodes at sites in $V$. 
For a protocol $p\in \MP_{U,V}$, we write $s(p) \defeq U\cup V$ for the set of sites it measures, and also define:
	 \begin{itemize}
	 		\item the set  $S_p\subseteq\Pow (s(p))$ of jointly measured sets of sites;
            \item the set $M_p\subseteq X$ of possibly performed measurements.
	\end{itemize}
    These notions are defined inductively. Call a set of measurements $D = \{x_1, \dots, x_d\}$ on distinct sites $s(D) \defeq \{s(x_1),\dots ,s(x_d)\}$ \emph{admissible for} $q \in \MP_{U,V}$ if
\begin{enumerate}
      \item $s(D)\notin S_q$, \ie a measurement on the exact same set of sites is not performed in $q$; 
      \item $\{x\in M_q \mid s(x)\in s(D)\}\subseteq D$, \ie if a site in $s(D)$ is possibly measured in $q$, it is always measured with the same setting as in $D$.
\end{enumerate}
The inductive clauses are then as follows:
\begin{itemize}
	\item We set $\MP_{\emptyset,\emptyset}\defeq\ZZ_2 = \{0,1\}$. For $p \in \MP_{\emptyset,\emptyset}$, we set $S_p, M_p\defeq\emptyset$.
 	\item If $q\in \MP_{U,V}$, $v\in \ZZ_2^{s(q)+1}$, and $D = \{x_1,\dots, x_d\}$ is admissible for $q$ and  $\pi_{s(D)\cap s(q)}v=0$,
	then $p\defeq\branching{D}{v}{q}$ is in $\MP_{U,V\cup s(D)}$. We call this a $d$-ary \emph{branching node}.
	We set $S_p\defeq S_q\cup\{s(D)\}$ and $M_p\defeq M_q\cup v.M_q\cup D$, where $v.M_q\defeq \{\pi_{s(x)} (v).x\mid x\in M_q\}$
 	\item If $q\in \MP_{U,V}$, $r\in \ZZ_2$, and $D = \{x_1,\dots, x_d\}$ is admissible for $q$,
	then $p\defeq\multiplicative{D}{r}{q}$ is a measurement protocol in $\MP_{U\cup s(D),V}$.
    We call this a $d$-ary \emph{multiplicative node}. We set $S_p\defeq S_q\cup\{s(D)\}$ and $M_p\defeq M_q\cup D$. 
\end{itemize}
We call a node (whether multiplicative or branching) \emph{unary} if $d=1$.
A \emph{branch} is a measurement protocol with only multiplicative nodes ending in $1$.
\end{definition}

We now extend the action from \Cref{def:actiononlinearMPs}.

\begin{definition}\label{def:actiononMPs}
We define an action of $\ZZ_2^{U\cup V+1}$ on $\MP_{U,V}$ by induction on the structure of measurement protocols. 
\begin{itemize}
\item For the base case, the action of $w \in \ZZ_2^1$ on $p \in \MP_{\emptyset,\emptyset} = \ZZ_2$ is given by $w.p \defeq w + p$, \ie  the regular action of $\ZZ_2$ on itself.
\item For a branching node, the action of $w\in \ZZ_2^{s(p)+1}$ on $p=\branching{D}{v}{q}$ is defined as
\[w.p \defeq \branching{\{\pi_{s(x)}(w).x \mid x \in D\}}{v}{\pi_{s(q)+1}(w).q}.\]
\item For a multiplicative node, the action of  $w\in \ZZ_2^{s(p)+1}$ on $p=\multiplicative{D}{r}{q}$ is defined as
\[w.p\defeq \multiplicative{\{\pi_{s(x)}(w).x \mid x \in D\}}{r}{\pi_{s(q)+1}(w).q}.\]
\end{itemize}
\end{definition}

Measurement protocols $\branching{D}{v}{q}$ (starting with a branching node) and $\multiplicative{D}{r}{q}$ (starting with a multiplicative node), with $D = \{x_1,\dots,x_d\}$, are respectively represented as
\[\vcenter{\hbox{\begin{forest}
[     $x_1+\dots + x_d$
[$q$, edge label={node[midway,left]{$0$}}]		[$v.q$,  edge label={node[midway,right]{$1$}}]]
\end{forest}}}
\qquad \text{and} \qquad
\vcenter{\hbox{\begin{forest} [$x_1+\dots + x_d$ [$q$, edge label={node[midway,right]{$r$}} ]]\end{forest}}}\]

In the first case, the operational interpretation is that the protocol starts by measuring $x_1,\dots, x_d$, then proceeds to either $q$ or $v.q$ depending on the total parity  $x_1+\dots +x_d$.
Note that the side-condition that $\pi_{s(D)\cap s(q)}v=0$ ensures that the outcome of $D$ cannot affect the measurement settings of sites in $s(D)$ (which have already been measured). Under the assumption that $D$ is admissible for $q$, this is equivalent to $D$ also being admissible for $v.q$.

In the second case, the interpretation is that $p$ measures $x_1,\dots, x_d$: if the total parity equals $r$, it proceeds to $q$; otherwise it returns $0$ immediately. 
For the cases $r=0$ and $r=1$, we could (but won't) depict a protocol $\multiplicative{D}{r}{q}$ starting with a multiplicative node  as 
\[\vcenter{\hbox{\begin{forest}[$x_1+\dots + x_d$ [$q$, edge label={node[midway,left]{$0$}}]	[$0$,  edge label={node[midway,right]{$1$}}]]\end{forest}}}\qquad \text{and} \qquad
\vcenter{\hbox{\begin{forest}[$x_1+\dots + x_d$ [$0$, edge label={node[midway,left]{$0$}}]	[$q$,  edge label={node[midway,right]{$1$}}]]\end{forest}}} \qquad \text{respectively.}\] 

\begin{definition}\label{def:numberofbranchingandmultlevels} 
For a protocol $p$, we define the \emph{number of multiplicative levels} $m(p)$ and the \emph{number of branching levels $n(p)$}
by induction on the structure of $p$:
\[
m(p) \defeq \begin{cases}
  0 & \text{if $p \in \MP_{\emptyset,\emptyset}$,} \\
  m(q) & \text{if $p = \branching{D}{v}{q}$,} \\
  m(q)+1 & \text{if $p = \multiplicative{D}{r}{q}$;}
\end{cases}
\qquad
n(p) \defeq \begin{cases}
  0 & \text{if $p \in \MP_{\emptyset,\emptyset}$,} \\
  n(q)+1 & \text{if $p = \branching{D}{v}{q}$,} \\
  n(q) & \text{if $p = \multiplicative{D}{r}{q}$.}
\end{cases}
\]
\end{definition}

We now slightly generalise the definition of the matrix associated to a linear measurement protocol, to accommodate unary multiplicative nodes as well.

\begin{definition}\label{def:adaptivitymatrix}
Given a measurement protocol $p\in\MP_{U,V}$ with all nodes unary,\footnote{Since all nodes are unary, $U$ and $V$ are necessarily disjoint, so $s(p)=U \cup V \cong U + V$. The restriction to unary nodes is not necessary to define the matrix, but it does make things technically simpler and is sufficiently general for our purposes.}
we define the \emph{adaptivity matrix} $T_p$ (or simply $T$) of $p$, a $(U+V+1)\times V$ matrix over $\mathbb{Z}_2$, by induction on the structure of $p$:
\begin{itemize}
 \item If $p \in \MP_{\emptyset,\emptyset}$, then $T_p$ is the unique  $1\times 0$ matrix.
 \item If $p = \branching{x}{v}{q}$ starts with a branching node where $q \in \MP_{U,V}$, we define
    \[T_p \defeq\; \begin{bNiceMatrix}[first-row,first-col]
            & s(x) & V\\ 
            s(x) & 0 & 0\\
            U+V+1 & v & T_q \end{bNiceMatrix}.
    \]
 \item If $p=\multiplicative{x}{r}{q}$ starts with a multiplicative node where $q \in \MP_{U,V}$, we define
    \[T_p \defeq\; \begin{bNiceMatrix}[first-row,first-col] 
            & V\\
            s(x) & 0\\
            U+V+1 & T_q \end{bNiceMatrix}.
    \]
\end{itemize}
The largest possible rank for the adaptivity matrix $T_p$ is $n(p)$, the number of branching levels; we say that $T_p$ (or $p$) is \emph{of maximal rank} if its rank is exactly $n(p)$.
\end{definition} 

Note that for protocols $p$ with all nodes unary, the sets $s(p)$ remain intrinsically linearly ordered: for $p=\branching{x}{v}{q}$ or $p=\multiplicative{x}{r}{q}$, the site $s(x)$ precedes those in $s(q)$.  This intrinsic order on $s(p)$ extends to $s(p)+1$ by placing the additional output point last; we use this extended order throughout.

By construction, $T_p$ is strictly lower triangular in the intrinsic order of $p$.
As $T_p$ is not a square matrix, this needs further clarification. The rows of $T_p$ are indexed by $U+V+1$ and the columns are indexed by $V\subseteq U+V+1$.
By strict lower triangularity we mean that for each $i\in V$, all entries on the $i$th column up to $i$ in the intrinsic order are zero; explicitly, $T_{j,i} = 0$ for all $i \in V$ and $j \leq i$ in this order. 

The operational meaning of $T_p$ is the same as in \Cref{ssec:linearprotocols} for linear protocols:
the column indexed by branching site $i \in V$ specifies how the outcome at site $i$ affects the measurement settings at other sites.
Strict lower triangularity ensures that a measurement outcome can only influence the settings of strictly later measurements.

\begin{remark}\label{rem:reconstructingMPfromamatrix}
As in \Cref{rem:reconstructinglinearMPfromamatrix}, specifying a $(U+V+1)\times V$-matrix $T$ that is strictly lower triangular in some ordering of $U+V+1$ (with the output point last) does not uniquely determine a protocol $p\in \MP_{U,V}$ with unary nodes:
the same matrix may be strictly lower triangular in several orderings, and additional data is required even once an ordering is fixed. Specifically, to reconstruct $p$, we must also provide
a vector $w_p \in \ZZ_2^{s(p)+1}$ of default measurement settings (and default output bit) and a vector $r_p \in \ZZ_2^U$ of required outcomes for the multiplicative nodes. Together with the ordering and the adaptivity matrix $T_p$, these data determine a unique measurement protocol $p\in \MP_{U,V}$. As before, the action $(v,p)\mapsto v.p$ translates in this representation to $v.(T_p,w_p,r_p)=(T_p,v+w_p,r_p)$.
\end{remark}

\subsection{The scenario of measurement protocols}

We now want to define the scenario whose measurements are the measurement protocols.
For this, we require an auxiliary notion to define compatibility for measurement protocols.
Intuitively, $W(p,\gs)$ is the set of measurements that are actually performed when running a protocol $p$ with outcomes determined by a global assignment $\gs$, following the path through the tree that $\gs$ selects at each node. In parallel, we define $o(p,\gs)\in\ZZ_2$ to record the output bit produced by that run.

\begin{definition}\label{def:Wando}
    Let $p$ be a measurement protocol on the scenario $\bell$ and
    $\gs \colon \coprod_{i\in \II} \ZZ_2\to\ZZ_2$ a global assignment of outcomes to the measurements of $\bell$.
    We define the set $W(p,\gs)$ of \emph{measurements performed by $p$ given $\gs$} and the \emph{output of $p$ given $\gs$}, $o(p,\gs)\in\ZZ_2$,
    by simultaneous induction on the structure of $p$:
    \begin{itemize}
    \item  If $p \in \MP_{\emptyset,\emptyset}$,
           set $W(p,\gs)=\emptyset$ and $o(p,\gs)=p$ (viewing $p\in\ZZ_2$).
    \item If $p=\branching{D}{v}{q}$ starts with a branching node, 
		    \begin{align*}
                W(p,\gs) &= \begin{cases}
			                    D \cup W(q,\gs)   & \text{ if $\sum_{x\in D}\gs(x)=0$,}\\
				                  D \cup W(v.q,\gs) & \text{ otherwise;}
		                      \end{cases}\\[4pt]
                o(p,\gs) &= \begin{cases}
			                    o(q,\gs)   & \text{ if $\sum_{x\in D}\gs(x)=0$,}\\
                                o(v.q,\gs) & \text{ otherwise.}
		                 \end{cases}
            \end{align*}
    \item If $p=\multiplicative{D}{r}{q}$ starts with a multiplicative node, 
            \begin{align*}
                W(p,\gs) &= \begin{cases}
			                    D \cup W(q,\gs)  & \text{ if $\sum_{x \in D} \gs(x)=r$,}\\
				                D  & \text{ otherwise.}
		                      \end{cases}\\[4pt]
                o(p,\gs) &= \begin{cases}
			                    o(q,\gs)   & \text{ if $\sum_{x \in D} \gs(x)=r$,}\\
				                0 & \text{ otherwise.}
		                    \end{cases}
            \end{align*}
    \end{itemize}
\end{definition}
Note that $o(p,\gs)$ depends on $\gs$ only through $\gs|_{W(p,\gs)}$, \ie the output is determined by the outcomes of the measurements actually performed in the run. By the admissibility requirements in \Cref{def:moregeneralmps}, the set $W(p,\gs)$ is always a context of $\bell$.

\begin{definition}\label{def:MPScenario}
Given the Bell scenario $\bell$ on a set of sites $\II$,
we define the \emph{scenario $\MP(\bell,\ZZ_2)$ of measurement protocols on $\bell$}:
\begin{itemize}
   \item The measurements are given by measurement protocols on $\bell$ (in the sense of \Cref{def:moregeneralmps}).
   \item A set $P$ of measurement protocols forms a \emph{context} if, for every global assignment $\gs$ of outcomes to the measurements of $\bell$,
   the set $W(P,\gs) \defeq\bigcup_{p\in P} W(p,\gs)$ is a context of $\bell$.
   \item The outcome set is fixed to be $\ZZ_2$.
\end{itemize}
\end{definition}

Every empirical model $e \colon \bell$ extends to an empirical model $\MP(e,\ZZ_2) \colon \MP(\bell,\ZZ_2)$,
defined by assigning to each context $P$ the distribution over joint outputs obtained by jointly running each protocol $p \in P$ on $e$ and recording the output bit. We briefly set out its definition below;
see~\cite[Definition~16]{abramskyetal:comonadicview} for a precise definition with a more detailed explanation in an analogous setting --
the main difference being that there the full sequence of outcomes observed throughout a run is recorded, rather than a single output bit; the model we define here is thus a coarse-graining thereof.

Recall from \Cref{sec:background} the definition of the event sheaf: $\ES(U)$ denotes the set of outcome assignments to the measurements in a context $U$. For the scenario $\bell$, $\ES(U)$ is the set of functions $U\to\ZZ_2$, where $U$ is a subset of the measurement set $X=\coprod_{i\in\II}\ZZ_2$ of $\bell$.

Given a context $P$ of $\MP(\bell,\ZZ_2)$, write
	\[\FF_P \defeq
	\{ \gs|_{W(P,\gs)} \in \ES(W(P,\gs))  \mid \gs \in \ES(X) \}
	\]
for the set of partial assignments of outcomes to measurements that may be observed in a (joint) run of the protocols in $P$.
Since, as remarked right after its definition, $o(p,\gs)$ depends on $\gs$ only through $\gs|_{W(p,\gs)}$,
each map
	$o(p,\cdot) \colon \ES(X) \to \ZZ_2$ 
factors through the restriction $\gs\mapsto\gs\vert_{W(P,\gs)}$,
yielding a well-defined map $\tilde{o}(p,\cdot)\colon\FF_P\to\ZZ_2$.

\begin{definition}\label{def:MPmodel}
	Given an empirical model $e \colon \bell$, the induced empirical model $\MP(e,\ZZ_2) \colon \MP(\bell,\ZZ_2)$
	is defined as follows: for each context $P$ of $\MP(\bell,\ZZ_2)$, the probability of observing an outcome assignment $r\colon P \to \ZZ_2$ is
	\[
		\MP(e,\ZZ_2)_P(r) = \sum_{\substack{(\ls \colon U \to \ZZ_2) \in \FF_P \\ \forall p \in P,\, \tilde{o}(p,\ls)=r(p)}}e_U(\ls)~.
	\]
\end{definition}

In much of what follows we prove equations that hold in the $\ZZ_2$-linear theory of $\MP(e,\ZZ_2)$ for \emph{every} empirical model $e\colon\bell$,
\ie equations in the intersection of the $\ZZ_2$-linear theories of $\MP(e,\ZZ_2)$ over all $e\colon\bell$.
This common theory may be thought of as the $\ZZ_2$-linear theory of the scenario $\MP(\bell,\ZZ_2)$ itself.
Of course, the theory common to all empirical models on any given measurement scenario is trivial, consisting only of tautologies.
Here it is not, because the models of the form $\MP(e,\ZZ_2)$ for $e\colon\bell$ constitute a proper subclass of all the empirical models on $\MP(\bell,\ZZ_2)$ seen as a bare measurement scenario:
it contains only those arising by \emph{running} measurement protocols on some underlying $\bell$-model.
Thus the $\MP$ construction implicitly constrains the class of empirical models deemed valid in the scenario.
The following proposition characterises this common theory, showing that validity of an equation reduces to a condition on global outcome assignments.

\begin{proposition}\label{prop:allMPequations}
  Let $P$ be a context of $\MP(\bell,\ZZ_2)$ and $a\in\ZZ_2$.
  The following are equivalent:
  \begin{enumerate}[label=(\roman*)]
	  \item $\MP(e,\ZZ_2)\vDash\sum_{p\in P}p=a$ for every empirical model $e\colon\bell$;
    	  \item for every global assignment $\gs\colon\coprod_{i\in\II}\ZZ_2\to\ZZ_2$, $\sum_{p\in P}o(p,\gs)=a$.
  \end{enumerate}
\end{proposition}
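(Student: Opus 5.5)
We prove the two implications separately; both amount to unwinding \Cref{def:MPmodel}.

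\emph{(ii)$\Rightarrow$(i).} Fix an empirical model $e\colon\bell$ and a joint outcome $r\colon P\to\ZZ_2$ in the support of $\MP(e,\ZZ_2)_P$. By \Cref{def:MPmodel} there is some $\ls\colon U\to\ZZ_2$ in $\FF_P$ with $e_U(\ls)>0$ and $\tilde{o}(p,\ls)=r(p)$ for all $p\in P$. Since $\ls\in\FF_P$, there is a global assignment $\gs$ with $\ls=\gs|_{W(P,\gs)}$. By the factorisation of $o$ through $\tilde o$ we get $r(p)=o(p,\gs)$ for every $p\in P$. Then (ii) gives $\sum_p r(p)=a$. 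As $r$ was an arbitrary element of the support, $\MP(e,\ZZ_2)\vDash\sum_p p=a$.

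\emph{(i)$\Rightarrow$(ii).} Given a global assignment $\gs$, we apply (i) to the deterministic model $e^\gs$ with $e^\gs_C=\delta_{\gs|_C}$. This is a legitimate empirical model on $\bell$: its marginals are compatible because $\ES$ is a sheaf. The plan is to show that $r(p)\defeq o(p,\gs)$ lies in the support of $\MP(e^\gs,\ZZ_2)_P$. Take $\ls\defeq\gs|_{W(P,\gs)}\in\FF_P$ with $U=W(P,\gs)$. This $U$ is a context of $\bell$ because $P$ is a context of $\MP(\bell,\ZZ_2)$. Then $e^\gs_U(\ls)=1$ and $\tilde o(p,\ls)=o(p,\gs)$, so the sum defining $\MP(e^\gs,\ZZ_2)_P(r)$ contains a term equal to $1$. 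Since all terms are nonnegative, this probability is positive. Hence (i) yields $\sum_p o(p,\gs)=a$.

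The only delicate point is well-definedness in \Cref{def:MPmodel}. Different elements $\ls\in\FF_P$ have different domains $U$, so we must ensure that, for the deterministic model, the summation does not pick up conflicting terms. This does not matter for the argument: we only need positivity of one term, and weights are nonnegative. If the formal definition instead requires that the $e_U$ be evaluated coherently, this is where no-signalling of $e$ enters, exactly as in \cite[Definition~16]{abramskyetal:comonadicview}. For $e^\gs$, every $\ls$ that is not a restriction of $\gs$ simply receives weight $0$.
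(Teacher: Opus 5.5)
Your proof is correct and follows the paper's argument. For (ii)$\Rightarrow$(i) you pull a support element back to a global assignment via $\FF_P$ and the factorisation through $\tilde o$; for (i)$\Rightarrow$(ii) you specialise to the deterministic model $\delta_\gs$, though where the paper asserts $\MP(\delta_\gs,\ZZ_2)=\delta_{o(\cdot,\gs)}$ outright, you only exhibit one positive term to place $o(\cdot,\gs)|_P$ in the support, which suffices and avoids the uniqueness check.
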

\begin{proof}
	($\Rightarrow$)\enspace
	Suppose $\MP(e,\ZZ_2)\vDash\sum_{p\in P}p=a$ for every $e\colon\bell$.
	In particular, any global assignment $\gs$ determines a (deterministic, non-contextual) empirical model $\delta_\gs\colon\bell$ defined by $(\delta_\gs)_C\defeq\delta_{\gs|_C}$, which produces outcomes according to $\gs$ with certainty.
    Such a deterministic model lifts to a deterministic model on $\MP(\bell,\ZZ_2)$ given by $\MP(\delta_\gs,\ZZ_2) = \delta_{o(\cdot,\gs)}$,
    with $o(\cdot,\gs)$ being a global assignment for the scenario $\MP(\bell,\ZZ_2)$.
	The equation holding for $\delta_\gs$ thus gives $\sum_{p\in P}o(p,\gs)=a$.

	($\Leftarrow$)\enspace
	Suppose $\sum_{p\in P}o(p,\gs)=a$ for every $\gs$, and let $e\colon\bell$ be an arbitrary empirical model.
	An outcome assignment $r\colon P\to\ZZ_2$ in the support of $\MP(e,\ZZ_2)_P$
	arises from some $\ls\in\FF_P$ with $\tilde{o}(p,\ls)=r(p)$ for all $p\in P$.
	In turn, such $\ls$ is the restriction of some global assignment $\gs$ with $W(P,\gs)=\dom(\ls)$
	(one may freely assign outcomes to measurements outside $\dom(\ls)$ to obtain such an extension), so that $o(p,\gs)=\tilde{o}(p,\ls)=r(p)$ for all $p\in P$.
	By assumption, $\sum_{p\in P}o(p,\gs)=a$, hence $\sum_{p\in P}r(p)=a$.
\end{proof}

Both the construction of the induced model $\MP(e,\ZZ_2)$ (\Cref{def:MPmodel}) and the characterisation of the theory common to all such models on a given scenario (\Cref{prop:allMPequations}) extend verbatim to
arbitrary measurement protocols in the broader sense considered in~\cite{abramskyetal:comonadicview}, where the possible continuations of a branching node may be arbitrary, rather than linearly related.
Specifically for the measurement protocols considered in this paper -- where adaptivity is \emph{linear}, in that the outcome of a branching node acts on the continuation by adding a vector to later measurement settings -- the following proposition gives a more concrete operational description using the adaptivity matrix.
We restrict to protocols with all nodes unary, since the adaptivity matrix has only been defined in that setting; this is a technical convenience that keeps the exposition simple, and could be lifted at the cost of some extra complexity.
In the statement below, the vector $b$ records the measurement outcomes that are (or would be) observed at each branching site, agreeing with the value assigned by $\gs|_{W(p,\gs)}$ above.

\begin{proposition}\label{prop:runviamatrix}
Let $p\in\MP_{U,V}$ be a  protocol with all nodes unary, with adaptivity matrix $T=T_p$, default measurement settings and output  $w=w_p\in\ZZ_2^{s(p)+1}$ and required outcomes  $r=r_p\in\ZZ_2^U$ for the multiplicative sites. Given a global assignment $\gs\colon\coprod_{i\in\II}\ZZ_2\to\ZZ_2$, define
the vector $b=b(p,\gs)\in\ZZ_2^V$ of measurement outcomes at the branching sites by recursion along the intrinsic order: for each branching site $j \in V$,
\[ b_j \defeq \gs\big(x_{j,\,(w+Tb)_j}\big)
          = \gs\big(x_{j,\;w_j+\sum_{i\in V,\,i<j}T_{j,i}b_i}\big),\]
which is well defined since $T$ is strictly lower triangular.

The run of $p$ against $\gs$ measures $x_{j,(w+Tb)_j}$ at each visited site $j$.  
If $\gs(x_{k,(w+Tb)_k}) = r_k$ for every multiplicative site $k \in U$, then the run completes without aborting:
\[ W(p,\gs)=\{\,x_{j,(w+Tb)_j}\mid j\in s(p)\,\},\qquad o(p,\gs)=(w+Tb)_\ast,\]
with $\ast$ the output point.
Otherwise, letting $k^\ast$ be the least $k\in U$ such that $\gs\big(x_{k,(w+Tb)_k}\big)\neq r_k$, the run aborts at $k^\ast$:
\[ W(p,\gs)=\{\,x_{j,(w+Tb)_j}\mid j\in s(p),\ j\le k^\ast\,\},\qquad o(p,\gs)=0.\]
\end{proposition}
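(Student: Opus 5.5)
We argue by structural induction on $p$, following the inductive clauses of \Cref{def:moregeneralmps} restricted to unary nodes. We compare the recursive description of $W(p,\gs)$ and $o(p,\gs)$ in \Cref{def:Wando} with the closed-form description via $(T_p,w_p,r_p)$ from \Cref{rem:reconstructingMPfromamatrix}. Throughout we use that the action $v.(T_q,w_q,r_q)=(T_q,v+w_q,r_q)$ only shifts the default settings. We also use that $T_p$ is strictly lower triangular in the intrinsic order, which makes $b$ well defined by recursion along that order.

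For the base case, $p\in\MP_{\emptyset,\emptyset}$. Then $s(p)=\emptyset$, so $b$ is the empty vector and the condition on multiplicative sites holds vacuously. Moreover $W(p,\gs)=\emptyset$ and $o(p,\gs)=p=w_\ast=(w+Tb)_\ast$, as required.

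In the branching case $p=\branching{x}{v}{q}$, with $x=x_{j_0,w_{j_0}}$ at the first site $j_0$, we have $w_p=(w_{j_0},w_q)$ and $T_p$ has first column $v$ and lower-right block $T_q$. The first coordinate gives $b_{j_0}=\gs(x_{j_0,w_{j_0}})$, since row $j_0$ of $T_p$ is zero. We then apply the induction hypothesis to the continuation $b_{j_0}v.q$: this is $q$ if $b_{j_0}=0$ and $v.q$ otherwise. Its data are $(T_q,\,w_q+b_{j_0}v,\,r_q)$. The key computation is that, for every site $j$ of $q$,
\[(w_q+b_{j_0}v+T_q b')_j=(w_p+T_pb)_j,\]
where $b'$ is the outcome vector computed for the continuation. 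This holds once we show $b'=\pi_{V\setminus\{j_0\}}b$, which follows by induction along the intrinsic order from the displayed identity itself. The same identity at $\ast$ gives the output. Consequently the visited settings, the abort site (if any), and the output of $p$ coincide with those of the continuation, with the single measurement $x$ added to $W$. The side condition $\pi_{s(x)\cap s(q)}v=0$ is not needed here, since unary nodes make $s(x)\notin s(q)$.

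In the multiplicative case $p=\multiplicative{x}{r_0}{q}$, the matrix $T_p$ is $T_q$ with a zero row added for $s(x)$ and no new column. Hence $b$ is just $b(q,\gs)$, and $(w+Tb)_{j_0}=w_{j_0}$ is the setting of $x$. If $\gs(x)\neq r_0$, then by \Cref{def:Wando} the run stops with $W=\{x\}$ and $o=0$. This matches the abort clause with $k^\ast=j_0$, which is the least element of $U$ because $j_0$ is first in the intrinsic order. If $\gs(x)=r_0$, the claim reduces to the induction hypothesis for $q$: the least failing multiplicative site of $p$, if any, is the least failing one of $q$, and we add $x$ to $W$.

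The main obstacle is bookkeeping rather than any conceptual step. We need to state the induction hypothesis in terms of the triple $(T,w,r)$ rather than the tree, so that it applies to the shifted continuation $v.q$. We must also verify that the recursively defined $b$ of $p$ restricts to that of the continuation; the identity displayed in the branching case carries this.
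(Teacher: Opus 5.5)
Your proposal is correct and takes the same route as the paper: structural induction on $p$, unfolding the definitions of $T_p$, $W(p,\gs)$ and $o(p,\gs)$, with the bookkeeping written out in full. You also note that the induction hypothesis must cover the translate $v.q$, which is not a structural subterm of $p$; your fix (inducting over the data $(T,w,r)$ with $w$ arbitrary, e.g.\ on the number of sites) is sound, and it addresses a point the paper's one-line proof leaves implicit.
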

\begin{proof}
    The proof is straightforward, by induction on the structure of $p$, using the definitions of $T_p$ (\Cref{def:adaptivitymatrix}) and of $W(p,\gs)$ and $o(p,\gs)$ (\Cref{def:Wando}). 
\end{proof}

\subsection{The main inductive argument}

As noted in \Cref{ssec:computingwithlinearprotocols}, to prove \Cref{thm:main}, it suffices to establish equation~\eqref{eq:sumstozero}. We now state the latter more precisely. 

\begin{theorem}\label{thm:sumstozero} Let $p\in\MP_{U,V}$ be a measurement protocol on $\bell$ with all nodes unary, and $e: \bell$ be an empirical model.
If $Q\colon \ZZ_2^{l}\to \ZZ_2^{s(p)+1}$ is a linear map where $l\geq m(p)+2$, then 
\begin{equation}\label{eq:sumstozerov2}
\MP(e,\ZZ_2)\vDash^*\sum_{w\in \ZZ_2^{l}} Q(w).p=0~.
\end{equation}
\end{theorem}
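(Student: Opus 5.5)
The plan is to prove the statement by induction on the lexicographic pair $(n(p), \rank T_p)$, where $n(p)$ is the number of branching levels, with the hypothesis $l\geq m(p)+2$ carried along. The key structural point is that the three lemmas from the proof sketch trade branching levels for multiplicative levels while adding one input dimension per new multiplicative level.

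\textbf{Base case $n(p)=0$.} Here $p$ has only multiplicative nodes, so it is a single branch. By \Cref{rem:reconstructingMPfromamatrix}, $Q(w).p$ depends only on $Q(w)\in\ZZ_2^{s(p)+1}$. Multiplicative sites contribute only the test $\gs(x_{k,Q(w)_k+w_k})=r_k$, and the output is $Q(w)_\ast+w_\ast$.

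Using \Cref{prop:allMPequations}, it suffices to show, for each global assignment $\gs$, that $\sum_w o(Q(w).p,\gs)=0$. The output $o(Q(w).p,\gs)$ is a product of $m(p)$ affine-in-$w$ indicator factors times an affine function of $w$. Each factor has the form $[\gs(x_{k,a})=r_k]$ with $a$ affine in $w$, and as a function of the bit $a$ every such map is affine. So $o$ is a polynomial of degree at most $m(p)+1$ in $w\in\ZZ_2^l$.

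The sum of a polynomial of degree $<l$ over all of $\ZZ_2^l$ vanishes, and $l\ge m(p)+2$ gives exactly this. This degree argument is the clean version of the ``every summand appears an even number of times'' observation, and it yields the equation as a valid consequence. The remaining care is that the sum is read via $\vDash^*$: each $\{Q(w).p\}$ is itself a measurement, so we only need classical closure of singleton equations, which is fine.

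\textbf{Inductive step.} Fix $p$ with $n(p)\ge1$. Proceed as follows.

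\begin{enumerate}
\item If $T_p$ has maximal rank, apply \Cref{lem:cansplice} to replace $\sum_w Q(w).p$ by $\sum_w Q'(w).q$ (equal in the theory). Here $q$ has the same numbers of levels but $\rank T_q<n(q)$; the splicing uses the redundancy of $Q$ on a direction, which exists because $l\ge 2$ exceeds what the rank allows.
\item With non-maximal rank, apply \Cref{lem:nonmaximalrankacrossthesum} to condense $p$ into a protocol with fewer branching levels, at the cost of one higher-arity node.
\item Apply \Cref{lem:sumofsmallersums} to expand the higher-arity node as sums of unary protocols $p_i$ with $n(p_i)<n(p)$, each paired with maps $Q_i$. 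Wherever a new multiplicative level appears, $m(p_i)\le m(p)+1$ and the input dimension satisfies $l_i\ge l+1$, so $l_i\ge m(p_i)+2$ still holds.
\end{enumerate}

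The induction hypothesis then kills each smaller sum, and adding the equations finishes the step.

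\textbf{Main obstacle.} I expect the hard part to be the bookkeeping in step 3: showing that splitting a $d$-ary parity node into unary pieces, as in \Cref{lem:separatingsums}, really yields sums indexed over an enlarged input space $\ZZ_2^{l+1}$ of the required form $Q_i(w).p_i$, with $Q_i$ linear. The extra input bit must parametrise the multiplicative requirement $r$, so that ranging over it reproduces both branches. I would first verify this on the worked depth-three example above and then generalise, keeping $l$ relative to $m$ as the invariant.
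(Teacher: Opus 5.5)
Your inductive step follows the paper's chain: splice with \Cref{lem:cansplice}, condense with \Cref{lem:nonmaximalrankacrossthesum}, expand with \Cref{lem:sumofsmallersums}, then apply the induction hypothesis on $n$. Your base case, however, does not prove the statement.

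The degree argument correctly shows $\sum_{w} o(Q(w).p,\gs)=0$ for every global assignment $\gs$ of $\bell$. But that is a fact about non-contextual assignments. It enters the $\ZZ_2$-linear theory of $\MP(e,\ZZ_2)$ through \Cref{prop:allMPequations} only if $\{Q(w).p\mid w\in\ZZ_2^{l}\}$ is a context. In general it is not, since different $w$ select different settings at the same multiplicative site. Appealing to ``classical closure of singleton equations'' does not help: the theory contains $Q(w).p=a$ only when $Q(w).p$ is deterministic under $e$.

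A sanity check shows the inference must be flawed. By \Cref{prop:runviamatrix}, and since $a\mapsto\gs(x_{j,a})$ is affine, every $b_j$ is affine in $w$. So your degree bound $m(p)+1<l$ holds for \emph{every} unary protocol, not just when $n(p)=0$. If the inference yielded $\vDash^*$, it would prove \Cref{thm:sumstozero} outright with no induction. That would upgrade Raussendorf's strong-contextuality argument to an AvN argument for free, which is exactly what the splicing, condensing and expanding machinery exists to achieve. So the degree argument is not ``the clean version'' of even multiplicity: the former is semantic, the latter is an actual derivation in the linear theory.

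The base case needs the syntactic cancellation the paper uses. When $n(p)=0$, $\dim\ZZ_2^{s(p)+1}=m(p)+1<l$, so $\ker Q\neq 0$. Since $Q(w).p$ depends only on $Q(w)$, each protocol occurs a multiple of $2^{\dim\ker Q}$ times in the formal sum. The sum therefore vanishes identically, and the equation is the tautology $0=0$.

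You also omit the matching case split in the inductive step. \Cref{lem:cansplice} requires $Q$ injective, because its dimension count uses $\dim\im Q=l\geq m(p)+2$ (not merely $l\geq 2$) to force $\im T_p\cap\im Q\neq 0$. A non-injective $Q$ must first be dispatched by the same cancellation.

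With these two repairs your argument coincides with the paper's. The lexicographic measure $(n(p),\rank T_p)$ is then unnecessary, since one pass through the three lemmas already lowers $n$.
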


Ultimately we care about the case when $m(p)=0$:
the added generality afforded by multiplicative nodes
is not of intrinsic interest here, but is needed to ensure a sufficiently strong induction hypothesis.

We now give the overall structure of the proof, formalising the proof sketch of \Cref{ssec:computingwithlinearprotocols}; the key lemmas are stated and proved in \Cref{sec:toolsfortheproof}.
\begin{proof}[Proof of \Cref{thm:sumstozero}]
We prove the claim by induction on the number of branching levels $n(p)$.

If $n(p)=0$, then the codomain of $Q$ has dimension $m(p)+n(p)+1=m(p)+1<l$, so $Q$ is not injective. As a result, each summand in \cref{eq:sumstozerov2} appears an even number of times and hence cancels out, so that the equation holds.

For $n(p)>0$, if $Q$ fails to be injective,  \cref{eq:sumstozerov2} holds by the argument above. If $Q$ is injective, the result follows from the following three lemmas: 
\begin{enumerate}
	\item If $\rank T_p=n(p)$, \Cref{lem:cansplice} shows that there is another measurement protocol $q\in \MP_{U,V}$ with all nodes unary such that
		\[\sum_{w\in \ZZ_2^{l}} Q(w).p \;=\, \sum_{w\in \ZZ_2^{l}} Q(w).q\]
	and $\rank T_q<n(p)=n(q)$. In other words, without loss of generality, we can assume that $T_p$ is not of maximal rank.
	\item If $\rank T_p<n(p)$, \Cref{lem:nonmaximalrankacrossthesum} shows that we can condense $p$ to have strictly fewer branching levels without affecting the value of the sum $\sum_w Q(w).p$, at the cost of increasing the arity of some nodes.
	\item Given a measurement protocol, \Cref{lem:sumofsmallersums} shows that we can systematically replace all higher-arity branching levels by sums of protocols with unary nodes without increasing the number of branching levels.  Combined with Step 2, this lets us write the overall sum as a sum of smaller sums $\sum_v Q_i(v).p_i$, where each $p_i$ has strictly fewer than $n(p)$ branching levels, therefore reducing to the induction hypothesis. \qedhere
\end{enumerate}
\end{proof}

\section{Tools for the proof}\label{sec:toolsfortheproof}

In this section, we establish the key lemmas on which the proof of \Cref{thm:sumstozero} rests.
After a preliminary result on decomposing protocols into branches (\Cref{ssec:branches}), each of the three main subsections addresses one step of the proof:
splicing to reduce the rank of the adaptivity matrix (\Cref{ssec:splicing}), condensing protocols with non-maximal rank into fewer branching levels (\Cref{ssec:condensing}), and removing higher-arity branching nodes by expressing them as sums of simpler protocols (\Cref{ssec:higherarity}).
Each of these three follows a common pattern: a key construction is introduced and a general lemma about it established, which is then applied in the specific context of \Cref{thm:sumstozero}.

\subsection{Decomposing into branches}
\label{ssec:branches}
We begin by establishing that every measurement protocol decomposes as the sum of its branches, a fact used throughout the sequel.

Recall from \Cref{def:moregeneralmps} that a branch is a measurement protocol with only multiplicative nodes, ending in $1$. It records a single deterministic path through a protocol tree, given by a sequence of parity measurements each with a required outcome, that returns $1$ when every one of them is observed, and $0$ otherwise.

\begin{definition}\label{def:branches} For a measurement protocol $p$, we define its set of branches, $\Branches(p)$, by induction on the structure of $p$:
\begin{itemize}
	\item For $p \in \MP_{\emptyset,\emptyset}$, we set $\Branches(0)=\emptyset$ and $\Branches(1)=\{1\}$. 
	\item If $p=\multiplicative{D}{r}{q}$ starts with a multiplicative node, then 
		\[\Branches(p)\defeq \left\{ \multiplicative{D}{r}{t} \mid t\in \Branches(q) \right\}.\]
	\item If $p = \branching{D}{v}{q}$ starts with a branching node, then 
            \[\Branches(p) \defeq \Branches(\multiplicative{D}{0}{q}) \cup \Branches(\multiplicative{D}{1}{v.q}).\]
\end{itemize}
\end{definition}

\begin{lemma}\label{lem:eqsarecongruent}
The $\ZZ_2$-linear theory of $\MP(e,\ZZ_2)$ respects pre-composition by a multiplicative node: if 
$\sum_{i} p_i = \sum_{j} q_j$
and each 
\[\multiplicative{D}{r}{p_i} = \vcenter{\hbox{\begin{forest} [$x_1+\dots + x_d$ [$p_i$, edge label={node[midway,right]{$r$}} ]]\end{forest}}}
\quad \text{and} \quad
\multiplicative{D}{r}{q_j} = \vcenter{\hbox{\begin{forest} [$x_1+\dots + x_d$ [$q_j$, edge label={node[midway,right]{$r$}} ]]\end{forest}}}
\]
is a well-defined protocol,
then $\sum_i \multiplicative{D}{r}{p_i} = \sum_j \multiplicative{D}{r}{q_j}$, \ie
\[
\sum_{i}\vcenter{\hbox{\begin{forest} [$x_1+\dots + x_d$ [$p_i$, edge label={node[midway,right]{$r$}} ]]\end{forest}}}
 \;\;=\;\; 
\sum_{j} \vcenter{\hbox{\begin{forest} [$x_1+\dots + x_d$ [$q_j$, edge label={node[midway,right]{$r$}} ]]\end{forest}}}. 
\]
\end{lemma}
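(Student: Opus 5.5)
The plan is to reduce everything to statements about individual runs, in the spirit of \Cref{prop:allMPequations}, and then compare runs of the prefixed protocols with runs of the original ones. Write $P \defeq \{p_i\}_i \cup \{q_j\}_j$ for the context in which the hypothesis $\sum_i p_i = \sum_j q_j$ holds, and $P' \defeq \{\multiplicative{D}{r}{p_i}\}_i \cup \{\multiplicative{D}{r}{q_j}\}_j$.

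\textbf{Step 1: $P'$ is a context of $\MP(\bell,\ZZ_2)$.} Fix a global assignment $\gs$. By \Cref{def:Wando}, $W(P',\gs)$ equals $D \cup W(P,\gs)$ if $\sum_{x\in D}\gs(x)=r$, and equals $D$ otherwise. The set $D$ is a context of $\bell$, since its elements lie at distinct sites. The set $W(P,\gs)$ is a context because $P$ is one. Admissibility of $D$ for each $p_i$ and $q_j$ (clause (2) of \Cref{def:moregeneralmps}) ensures that any site of $D$ that is also measured inside some $p_i$ or $q_j$ is measured there with the same setting as in $D$. Hence $D \cup W(P,\gs)$ still contains at most one measurement per site, so it is a context of $\bell$.

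\textbf{Step 2: the equation holds in $\MP(e,\ZZ_2)$.} Let $t \colon P' \to \ZZ_2$ lie in the support of $\MP(e,\ZZ_2)_{P'}$. By \Cref{def:MPmodel}, $t$ arises from some $\ls \in \FF_{P'}$ with $e_{\dom \ls}(\ls) > 0$, and, as in the proof of \Cref{prop:allMPequations}, we may extend $\ls$ to a global $\gs$ with $W(P',\gs) = \dom \ls$. There are two cases.
\begin{itemize}
\item If $\sum_{x\in D}\gs(x) \neq r$, every protocol in $P'$ outputs $0$, so both sides of the desired equation evaluate to $0$.
\item If $\sum_{x\in D}\gs(x) = r$, then $o(\multiplicative{D}{r}{p},\gs) = o(p,\gs)$ for every $p \in P$. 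The restriction $\ls' \defeq \ls|_{W(P,\gs)}$ lies in $\FF_P$, since $W(P,\gs) = W(P,\gs|_{W(P,\gs)})$, and it induces the outputs $o(p,\gs)$. By the compatibility (no-disturbance) condition on $e$, $\ls'$ lies in the support of $e_{W(P,\gs)}$, because it is the restriction of a section in the support of $e$ on the larger context $\dom\ls$. So the tuple $(o(p,\gs))_{p\in P}$ lies in the support of $\MP(e,\ZZ_2)_P$, and the hypothesis gives $\sum_i o(p_i,\gs) = \sum_j o(q_j,\gs)$. Transporting back along the identification of outputs yields the equation for $t$.
\end{itemize}

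\textbf{Step 3: the version for all models.} If the hypothesis is instead read as holding in the common theory of all $\MP(e,\ZZ_2)$, the argument is simpler. By \Cref{prop:allMPequations} it suffices to check, for each global $\gs$, the same two-case computation of outputs, which is immediate.

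\textbf{Main obstacle.} The delicate point is the support-transfer in the second case of Step 2. The context $W(P,\gs)$ on which the original equation is used depends on $\gs$. So I must check carefully that the restricted section $\ls'$ really is a legitimate element of $\FF_P$ whose run produces exactly those outputs. This relies on the observation that a protocol's run depends only on the outcomes of the measurements it actually performs. The context condition in Step 1 is where admissibility is used, and it is routine.
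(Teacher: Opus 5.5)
Your proposal is correct. Your Step 3 is exactly the paper's proof: the paper reads the hypothesis as an equation valid in $\MP(e,\ZZ_2)$ for every $e\colon\bell$, applies \Cref{prop:allMPequations}, and case-splits on whether $\sum_{x\in D}\gs(x)=r$ for each global assignment $\gs$.

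Your main argument, Step 2, takes a different and stronger route. It fixes a single model $e$ and shows that an equation in the linear theory of that particular $\MP(e,\ZZ_2)$ survives a multiplicative prefix. It does so by restricting a supported section on $D\cup W(P,\gs)$ to $W(P,\gs)$ and using no-disturbance of $e$. This per-model congruence cannot be read off from \Cref{prop:allMPequations}, which only characterises the theory common to all models. So your version also covers model-specific equations such as those in \eqref{eq:computingf}. The cost is that the support transfer needs weights not to cancel, both under marginalisation and under the summation in \Cref{def:MPmodel}. That holds for probabilities and possibilities, whereas the paper's route needs no property of $e$ at all. Since the paper only applies the lemma to equations valid in all models (e.g.\ in \Cref{lem:sumofbranches}), the weaker version suffices there.

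Your Step 1 is a useful addition. The paper applies \Cref{prop:allMPequations} to the prefixed family without checking that it forms a context. Your appeal to admissibility clause (2) fills this in, together with the fact $W(p,\gs)\subseteq M_p$, which you use implicitly and which follows by a routine induction.
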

\begin{proof}
     This follows from \Cref{prop:allMPequations} by a straightforward case split, for each global assignment $\gs$, on whether $\sum_{x\in D}\gs(x)=r$.
\end{proof}

\begin{lemma}\label{lem:sumofbranches} Every measurement protocol $p$ equals the sum of its branches:
\begin{equation}\label{eq:sumofbranches} p=\sum_{s\in \Branches(p)} s~.
\end{equation}
\end{lemma}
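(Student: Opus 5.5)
Both readings of the claim reduce, via \Cref{prop:allMPequations}, to a statement about global assignments: the set $\{p\}\cup\Branches(p)$ should form a context of $\MP(\bell,\ZZ_2)$, and for every global assignment $\gs$ we should have $o(p,\gs)=\sum_{s\in\Branches(p)}o(s,\gs)$. I would prove both facts together by structural induction on $p$, following the three clauses of \Cref{def:branches}. An alternative is to use \Cref{lem:eqsarecongruent} directly for the inductive step, but the pointwise route avoids having to check separately that each intermediate equation lives in a context.

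\emph{Base case.} If $p\in\MP_{\emptyset,\emptyset}$, then $\Branches(0)=\emptyset$ and $o(0,\gs)=0$, which is the empty sum. Also $\Branches(1)=\{1\}$ and $o(1,\gs)=1$. In both cases no measurements are performed, so compatibility is trivial.

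\emph{Multiplicative node.} Let $p=\multiplicative{D}{r}{q}$ and fix $\gs$, splitting on whether $\sum_{x\in D}\gs(x)=r$.
\begin{itemize}
\item If the parity equals $r$, then $p$ and every branch $\multiplicative{D}{r}{t}$ output $o(q,\gs)$ and $o(t,\gs)$ respectively. The identity then follows from the induction hypothesis for $q$.
\item Otherwise, every term outputs $0$, so the identity holds trivially.
\end{itemize}
For compatibility, note that $W(p,\gs)$ and each $W(\multiplicative{D}{r}{t},\gs)$ is either $D$ or $D\cup W(\cdot,\gs)$. So their union is contained in $D\cup W(\{q\}\cup\Branches(q),\gs)$, which is a context by the induction hypothesis. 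Admissibility of $D$ for $q$ makes this union a context of $\bell$: a site of $D$ is only measured with the setting it has in $D$, and $D$ measures distinct sites.

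\emph{Branching node.} Let $p=\branching{D}{v}{q}$. The branches are those of $\multiplicative{D}{0}{q}$ together with those of $\multiplicative{D}{1}{v.q}$. Both are well-defined protocols: $D$ is admissible for $v.q$, as remarked after \Cref{def:actiononMPs}, using $\pi_{s(D)\cap s(q)}v=0$.
\begin{itemize}
\item If $\sum_{x\in D}\gs(x)=0$, then $o(p,\gs)=o(q,\gs)$. Every branch in the second family outputs $0$, and the first family sums to $o(q,\gs)$ by the multiplicative case applied to $q$.
\item If the parity is $1$, the situation is symmetric, with $v.q$ in place of $q$.
\end{itemize}
Compatibility is handled as before. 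For a fixed $\gs$, the branches in the non-selected family each perform only $D$. The selected family, together with $p$, performs measurements within $D\cup W(\{q'\}\cup\Branches(q'),\gs)$, where $q'\in\{q,v.q\}$ is the selected continuation. This is a context by induction and admissibility.

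The main obstacle is not the arithmetic but the bookkeeping showing that $\{p\}\cup\Branches(p)$ is genuinely a context. That step needs the induction hypothesis to be strengthened to include compatibility, and it needs the admissibility and $v$-side conditions of \Cref{def:moregeneralmps}, so that the measurements performed along the selected path never conflict with those of the aborted branches.
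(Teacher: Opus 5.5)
Your proof is correct. It has the same skeleton as the paper's: induction over the three clauses of \Cref{def:branches}, with a case split on the parity $\sum_{x\in D}\gs(x)$.

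The paper stays at the level of equations. It first splits $\branching{D}{v}{q}=\multiplicative{D}{0}{q}+\multiplicative{D}{1}{v.q}$. It then pushes the induction hypothesis under a multiplicative node via \Cref{lem:eqsarecongruent}, which is itself proved by exactly your case split through \Cref{prop:allMPequations}. So the paper obtains the identity by chaining several equations, and leaves the required contexts implicit.

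You instead inline that lemma and strengthen the hypothesis to include that $\{p\}\cup\Branches(p)$ is a context. This gives a single equation valid in one context, rather than a consequence of several equations under $\vDash^*$. It also makes explicit the compatibility that the paper's appeal to \Cref{prop:allMPequations} tacitly requires. The cost is the extra bookkeeping.

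Two points in that bookkeeping deserve to be stated.
\begin{enumerate}
\item Admissibility of $D$ only constrains measurements in $M_q$. You therefore need that every measurement performed by a protocol $q'$, or by any of its branches, lies in $M_{q'}$. This follows by a short induction using $M_{v.q}=v.M_q$.
\item Like the paper, you apply the hypothesis to $v.q$, which is not a subterm of $p$. Since the action preserves the shape of the tree, the induction is best phrased on depth.
\end{enumerate}
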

\begin{proof} We prove this by induction on the structure of $p$.
\begin{itemize}
    \item If $p \in \MP_{\emptyset,\emptyset}$, the claim is clear.
    \item If $p=\multiplicative{D}{r}{q}$ starts with a multiplicative node, where $D=\{x_1,\dots,x_d\}$, then 
\[\vcenter{\hbox{\begin{forest} [$x_1+\dots + x_d$ [$q$, edge label={node[midway,right]{$r$}} ]]\end{forest}}}=\sum_{t\in \Branches(q)} \vcenter{\hbox{\begin{forest} [$x_1+\dots + x_d$ [$t$, edge label={node[midway,right]{$r$}} ]]\end{forest}}} = \sum_{s\in \Branches(p)} s \]
 where the first equality follows from the induction hypothesis by \Cref{lem:eqsarecongruent} and the second by the multiplicative-node case of the definition of $\Branches(p)$ (\Cref{def:branches}).
 \item If $p = \branching{D}{v}{q}$ starts with a branching node, where $D=\{x_1,\dots,x_d\}$, then 
\begin{align*}
&\vcenter{\hbox{\begin{forest}
[     $x_1+\dots + x_d$
[$q$, edge label={node[midway,left]{$0$}}]		[$v.q$,  edge label={node[midway,right]{$1$}}]] \end{forest}}} =
\vcenter{\hbox{\begin{forest} [$x_1+\dots + x_d$ [$q$, edge label={node[midway,right]{$0$}} ]]\end{forest}}} + \vcenter{\hbox{\begin{forest} [$x_1+\dots + x_d$ [$v.q$, edge label={node[midway,right]{$1$}} ]]\end{forest}}} 
\\
&=\sum_{t\in \Branches(q)} \vcenter{\hbox{\begin{forest} [$x_1+\dots + x_d$ [$t$, edge label={node[midway,right]{$0$}} ]]\end{forest}}} + \sum_{t\in \Branches(v.q)} \vcenter{\hbox{\begin{forest} [$x_1+\dots + x_d$ [$t$, edge label={node[midway,right]{$1$}} ]]\end{forest}}} = \sum_{s\in \Branches(p)} s
\end{align*}
where the first equality holds for all models on $\bell$, because both sides measure $x_1+\dots +x_d$ and then depending on the outcome return either $q$ or $v.q$,  the second follows from the induction hypothesis by \Cref{lem:eqsarecongruent}, and the final one by the branching-node case of the definition of $\Branches(p)$ (\Cref{def:branches}).
\end{itemize}
\end{proof}

\subsection{Splicing}
\label{ssec:splicing}

We now address Step 1 of the proof of \Cref{thm:sumstozero}: given a protocol whose adaptivity matrix has maximal rank, we produce an equivalent one of strictly lower rank. We achieve this via a \emph{splicing} construction, which operates on a pair of related protocols, rearranging their branches to modify the adaptivity structure while preserving their sum.

The main construction of this subsection operates on a measurement protocol $p$ with unary nodes and its translate $v.p$ by a vector $v$ with sufficiently many leading zeroes.
Let $x$ be the measurement at any branching node of $p$ that precedes the first nonzero entry of $v$ (in the intrinsic order of $p$). Since $v$ has leading zeroes up to and including $s(x)$,  $p$ and $v.p$ agree up to $x$, diverging only in the continuations that follow it. Thus, schematically, these two protocols look like
\[p=\vcenter{\hbox{\begin{forest}[$p_{_1}$ [ $x$  [$p_{_2}$, edge label={node[midway,left]{$0$}}]	[$v'.p_{_2}$,  edge label={node[midway,right]{$1$}}]]]\end{forest}}} 
\enspace \text{and} \enspace 
v.p = \!\!\!\!\!
\vcenter{\hbox{\begin{forest}[$p_{_1}$ [ $x$  [$\pi_{s(p_2)+1}(v).p_{_2}$, edge label={node[midway,left,yshift=2.5pt,xshift=1pt]{$0$}}]	[$(\pi_{s(p_2)+1}(v) + v').p_{_2}$,  edge label={node[midway,right,yshift=2.5pt,xshift=-1pt]{$1$}}]]]\end{forest}}}\] 
where $p_1$ denotes an initial subprotocol on which $v$ acts trivially.

We split both protocols along $x$ and recombine their halves, obtaining the two protocols below. Denoting the first of these by $q$, the second is then $v.q$:
\[q \defeq \vcenter{\hbox{\begin{forest}[$p_{_1}$ [ $x$  [$p_{_2}$, edge label={node[midway,left,yshift=4pt,xshift=1pt]{$0$}}]	[$(\pi_{s(p_{2})+1}(v) + v').p_{_2}$,  edge label={node[midway,right,yshift=2.5pt,xshift=-1pt]{$1$}}]]]\end{forest}}}  
\enspace \text{and} \enspace
v.q = \vcenter{\hbox{\begin{forest}[$p_{_1}$ [ $x$  [$\pi_{s(p_{2})+1}(v).p_{_2}$, edge label={node[midway,left,yshift=2.5pt,xshift=1pt]{$0$}}]	[$v'.p_{_2}$,  edge label={node[midway,right,yshift=4pt,xshift=-1pt]{$1$}}]]]\end{forest}}}\] 
We call these two protocols the result of \emph{splicing $p$ and $v.p$ along $x$}.\footnote{While we could give a detailed inductive definition and accompanying proofs for the results in this subsection, this would be notationally heavy. We believe that the slightly less formal exposition here is more illuminating.}

The multiset of branches across the two resulting protocols remains the same, so their total sum is preserved.
Moreover, the spliced pair of protocols $q$ and $v.q$  still differ by the same vector $v$ as the original pair $p$ and $v.p$.
However, while in $p$ and $v.p$ the outcome of $x$ affects later settings via $v'$, in $q$ and $v.q$, it does so via $v'+\pi_{s(p_2)+1}(v)$.
Choosing a suitable splicing thus lets us rewrite $\sum_w Q(w).p$ as $\sum_w Q(w).q$ with a precise description of how the adaptivity matrix $T_q$ differs from $T_p$.
We collect this into the following statement.

\begin{lemma}\label{lem:splicing}
Let $p$ be a measurement protocol with all nodes unary, $x$ a branching node of $p$, and $v \in \ZZ_2^{s(p)+1}$ such that $\pi_i(v) = 0$ for all sites $i \leq s(x)$ in the intrinsic order of $p$. Then splicing $p$ and $v.p$ along $x$ results in protocols $q$ and $v.q$ satisfying
\begin{enumerate}
    \item\label{lem:splicing-item1} $p+v.p=q+v.q$.
    \item\label{lem:splicing-item2} The  adaptivity matrix $T_q$ differs from $T_p$ only in column $s(x)$, where it equals the corresponding column of $T_p$ plus $v$. Equivalently, for $u \in \ZZ_2^V$,  $T_q(u)=T_p(u)+\pi_{s(x)}(u)\,v$. 
\end{enumerate}
\end{lemma}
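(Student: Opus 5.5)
The plan is to work with the schematic decomposition of $p$ described just above the statement: $p$ consists of an initial segment $p_1$, then the node at $x$, then the continuations $p_2$ (on outcome $0$) and $v'.p_2$ (on outcome $1$). Here $v'$ is the column of $T_p$ at $s(x)$, restricted to $s(p_2)+1$.

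The first step is to check the shape of $v.p$. By \Cref{def:actiononMPs}, the action of $v$ proceeds node by node, acting at each node on that node's measurement setting. Since $\pi_i(v)=0$ for all $i\le s(x)$, the action is trivial on $p_1$ and on the measurement at $x$ itself, and it reaches the continuations as $\pi_{s(p_2)+1}(v)$. Inside $p_1$, each branching node's continuation is acted on by its own update vector. Since the action of $\ZZ_2^{s+1}$ is a group action and the group is abelian, these updates commute with the action of $v$. So every copy of the subtree rooted at $x$ inside $p_1$ (one for each path through $p_1$) is transformed in the same way, namely by $u.(x\text{-subtree})$ with $u$ an accumulated vector. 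This gives the displayed forms of $v.p$, $q$ and $v.q$. The same observation shows that $v.q$, computed directly from the action, coincides with the second spliced protocol. Indeed, $v$ sends $q$'s $0$-continuation $p_2$ to $\pi(v).p_2$, and its $1$-continuation $(\pi(v)+v').p_2$ to $v'.p_2$, because $2\pi(v)=0$. I would also check admissibility: $q$ is a legitimate protocol because its update vector $v'+\pi(v)$ vanishes on $s(x)$, and the sets of possibly performed measurements are unchanged in kind.

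For (1), I would apply \Cref{lem:sumofbranches} to all four protocols. By \Cref{def:branches}, the branches of $p$ are the branches through $p_1$ to $x$ with outcome $0$ followed by branches of $p_2$, together with those with outcome $1$ followed by branches of $v'.p_2$; the same description holds for the other three protocols. Hence the multiset union $\Branches(p)\uplus\Branches(v.p)$ equals $\Branches(q)\uplus\Branches(v.q)$, because we have only swapped which protocol owns the outcome-$1$ halves. Summing the two decompositions from \Cref{lem:sumofbranches} then gives $p+v.p=q+v.q$. If an alternative check is wanted, one can apply \Cref{prop:allMPequations} directly: for each global assignment $\gs$, the runs of $p$ and $v.p$ coincide through $x$. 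One of them is then paired with $q$ and the other with $v.q$ according to the outcome at $x$, so the two sums of outputs agree pointwise.

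For (2), I would use the inductive definition of $T$ (\Cref{def:adaptivitymatrix}). The nodes of $p_1$ and of $p_2$ are unchanged, so every column other than $s(x)$ is the same. The column at $s(x)$ is the update vector at $x$, which was $v'$ and becomes $v'+\pi_{s(p_2)+1}(v)$. Since $v$ vanishes on all entries at or above $s(x)$, this new column is exactly the old column plus $v$, which is the formula $T_q(u)=T_p(u)+\pi_{s(x)}(u)v$.

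The main obstacle is making the schematic rigorous when $p_1$ is itself branching. In that case there are several copies of the $x$-subtree, each acted on by different accumulated vectors, and splicing must be carried out uniformly in all of them. The commutativity observation above is what makes this uniform treatment consistent. I would formalise it cleanly through the matrix representation of \Cref{rem:reconstructingMPfromamatrix}: $q$ is defined as the protocol with matrix $T_p+ve_{s(x)}^{\top}$, the same ordering, and the same $w_p$ and $r_p$. Then \Cref{prop:runviamatrix} lets me compare runs assignment by assignment.
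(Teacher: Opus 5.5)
Your main argument is the paper's: decompose all four protocols into branches via \Cref{lem:sumofbranches}, observe that splicing only swaps which protocol owns the outcome-$1$ halves below $x$, and read (2) off the changed update vector at $s(x)$.

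One correction concerns your optional ``alternative check'' via \Cref{prop:allMPequations}. The same applies if you use the assignment-by-assignment comparison through \Cref{prop:runviamatrix} to conclude (1) directly. Neither works, because \Cref{prop:allMPequations} needs $\{p,v.p,q,v.q\}$ to be a context of $\MP(\bell,\ZZ_2)$, and in general it is not. Let $j>s(x)$ be the first site with $\pi_j(v)=1$. For any $\gs$ whose run reaches $j$, the runs of $p$ and $v.p$ agree before $j$. At $j$ they measure $x_{j,a}$ and $x_{j,a+1}$ respectively, so $W(\{p,v.p\},\gs)$ is not a context of $\bell$.

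Consequently, $p+v.p=q+v.q$ is not an equation valid in any single context. It holds only in $\vDash^*$, obtained by summing the four branch decompositions, with the branches cancelling in pairs. Each of those decompositions does live in a genuine context: a protocol together with its own branches. The pointwise identity $o(p,\gs)+o(v.p,\gs)=o(q,\gs)+o(v.q,\gs)$ is true, but by itself it does not place the equation in $\vDash^*$. Use the run comparison only to identify the branch multisets.

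Your worry about several copies of the $x$-subtree also disappears in the inductive representation. A protocol is a chain $\branching{D}{v}{q}$ whose $1$-continuation is the translate $v.q$. Splicing is therefore just replacing the single update vector $v'$ at level $s(x)$ by $v'+\pi_{s(p_2)+1}(v)$. This is exactly your matrix definition $T_p+v\,e_{s(x)}^{\top}$.
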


We apply \Cref{lem:splicing} in the context of \Cref{thm:sumstozero}.
The following lemma implements Step 1.
A suitable splicing is found via a dimension count: the images of $T_p$ and $Q$ together exceed the ambient dimension, so they must intersect nontrivially, yielding the required vector $v$.

\begin{lemma}\label{lem:cansplice}
Let $p\in \MP_{U,V}$ be a measurement protocol with all nodes unary and $T_p$ of maximal rank,
and let $Q\colon \ZZ_2^{l}\to \ZZ_2^{s(p)+1}$ be an injective linear map where $l\geq m(p)+2$.
Then there exists a measurement protocol $q\in\MP_{U,V}$  
with $n(q)=n(p)$ such that $\rank(T_q)<n(q)$
and
 \[\sum_{w\in \ZZ_2^{l}} Q(w).p \;=\, \sum_{w\in \ZZ_2^{l}} Q(w).q.\]
\end{lemma}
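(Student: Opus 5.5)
The plan is a dimension count that produces a single vector $v$ lying in both $\im T_p$ and $\im Q$, followed by one splice (\Cref{lem:splicing}) applied uniformly across the sum.

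\emph{Step 1: finding $v$.} The codomain $\ZZ_2^{s(p)+1}$ has dimension $m(p)+n(p)+1$. Since $T_p$ has maximal rank, $\im T_p$ has dimension $n(p)$. Since $Q$ is injective, $\im Q$ has dimension $l\geq m(p)+2$. The two dimensions add up to more than the ambient dimension, so there is a nonzero
\[ v \;=\; T_p(u) \;=\; Q(w_0), \]
where $u\in\ZZ_2^V$ is nonzero and, because $v\neq 0$, also $w_0\neq 0$. Let $i_0$ be the least site in $V$, in the intrinsic order of $p$, with $u_{i_0}=1$, and let $x$ be the (unary) branching node at $i_0$.

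Since $T_p$ is strictly lower triangular, column $i$ vanishes at all rows $j\leq i$. Now $v$ is the sum of the columns $i$ with $u_i=1$, and all such $i$ satisfy $i\geq i_0$. Hence $\pi_j(v)=0$ for every $j\leq s(x)$, which is exactly the hypothesis of \Cref{lem:splicing}.

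\emph{Step 2: the spliced protocol drops rank.} Let $q$ be the result of splicing $p$ and $v.p$ along $x$. It has the same node types at the same sites, all unary, so $q\in\MP_{U,V}$ and $n(q)=n(p)$. By \Cref{lem:splicing}(\ref{lem:splicing-item2}),
\[ T_q(u) \;=\; T_p(u)+\pi_{s(x)}(u)\,v \;=\; v+v \;=\; 0. \]
So $u\neq 0$ lies in $\ker T_q$, and therefore $\rank T_q<n(q)$.

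\emph{Step 3: the sums agree.} Partition $\ZZ_2^l$ into the cosets $\{w,w+w_0\}$ of $\langle w_0\rangle$; these are genuine pairs because $w_0\neq 0$. By linearity of $Q$, $Q(w+w_0).p=v.(Q(w).p)$. The goal is then to show
\[ Q(w).p + v.Q(w).p \;=\; Q(w).q + v.Q(w).q \]
for each coset representative $w$. Summing over the cosets gives the lemma.

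For this identity I would apply \Cref{lem:splicing}(\ref{lem:splicing-item1}) to the protocol $p'=Q(w).p$ in place of $p$. This uses the same $v$ and $x$; the leading-zeros hypothesis depends only on $v$ and the intrinsic order, which the action does not change. It remains to check that splicing $p'$ and $v.p'$ along $x$ yields exactly $Q(w).q$ and $v.Q(w).q$.

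\emph{Main obstacle.} This last check — that splicing commutes with the action — is where I expect the work to be. I would handle it through the representation in \Cref{rem:reconstructingMPfromamatrix}. There, $p$ corresponds to $(\text{order},T_p,w_p,r_p)$, and the action only shifts the default vector: $Q(w).p$ has data $(T_p,\,Q(w)+w_p,\,r_p)$. Splicing leaves the order, the required outcomes and the default settings unchanged; the left branch after $x$ and the prefix $p_1$ are untouched. It only modifies column $s(x)$ of the matrix, adding $v$. Therefore splicing $Q(w).p$ yields the data $(T_p+v\,e_{s(x)}^{\top},\,Q(w)+w_p,\,r_p)$, which is exactly $Q(w).q$.

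Making this precise needs the informal splicing construction to be pinned down in these coordinates: one must check that the data of the spliced tree is as claimed, reading $w_q$ and $r_q$ off the $0$-branch at $x$. This is routine once done carefully.
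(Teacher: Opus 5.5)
Your proposal is correct and follows the paper's proof essentially step for step: the same dimension count giving $T_p(u)=Q(w_0)\neq 0$, splicing along the first site in the support of $u$, pairing $w$ with $w+w_0$, and showing $u\in\ker T_q$ via \Cref{lem:splicing}. Your explicit check, using \Cref{rem:reconstructingMPfromamatrix}, that splicing $Q(w).p$ and $v.Q(w).p$ yields exactly $Q(w).q$ and $v.Q(w).q$ spells out a step the paper leaves implicit when it writes $Q(w).(p+Q(w^*).p)=Q(w).(q+Q(w^*).q)$.
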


\begin{proof} The space $\ZZ_2^{s(p)+1}$ has dimension $m(p)+n(p)+1$.
By the maximal rank assumption, $\im T_p$ is a subspace of dimension $n(p)$ while since $Q$ is injective, $\im Q$ is a subspace of dimension $l\geq m(p)+2$.
Since \[\dim(\im T_p) + \dim(\im Q) = n(p)+l\geq m(p)+n(p)+2>\dim(\ZZ_2^{s(p)+1}),\]
these subspaces must intersect nontrivially, so we can find $u^*\in \ZZ_2^V$ and $w^*\in \ZZ_2^{l}$ such that $T_p u^*=Q w^*\neq 0$. Let $x$ be the measurement corresponding to the first (in the intrinsic order) nonzero component of $u^*$, which must exist since $T_p(u^*)\neq 0$ and so $u^*\neq 0$.
By strict lower triangularity of $T_p$, the first nonzero component of $T_p(u^*)=Q(w^*)$ lies strictly after $s(x)$. Hence, $p$ and $Q(w^*).p$ can be spliced along $x$, resulting in protocols $q$ and $Q(w^*).q$ as in \Cref{lem:splicing}. 

To show that \[\sum_{w\in \ZZ_2^{l}} Q(w).p \;=\, \sum_{w\in \ZZ_2^{l}} Q(w).q , \]
express $\ZZ_2^{l}$ as $\ZZ_2 w^*\oplus W$. Then 
 \begin{align*}\sum_{w\in \ZZ_2^{l}} Q(w).p \;&=\,\sum_{w\in W}Q(w).p+Q(w+w^*).p \\ 
 &=\,\sum_{w\in W}Q(w). (p+ Q(w^*).p) \\
 &=\,\sum_{w\in W}Q(w). (q+ Q(w^*).q)\\ 
 &=\,\sum_{w\in W}  Q(w).q+Q(w+w^*).q \\
 &=\,\sum_{w\in \ZZ_2^{l}} Q(w).q\end{align*}
where we used \Cref{lem:splicing}--\ref{lem:splicing-item1} when moving to the third line.

To see that $\rank(T_q)<n(q)$, it suffices to exhibit a nonzero vector in the kernel of $T_q$.
As $T_q$ differs from $T_p$ only at the column of $s(x)$, the leading $1$ of $u^*$, \Cref{lem:splicing}--\ref{lem:splicing-item2}
gives us $T_q(u^*)=T_p(u^*)+\pi_{s(x)}(u^*)\, Q(w^*) = T_p(u^*)+ Q(w^*) = 0$
as desired. 
\end{proof}

\subsection{Condensing}
\label{ssec:condensing}

We now address Step 2 of the proof of \Cref{thm:sumstozero}: given a protocol not of maximal rank, we condense it into an equivalent one with strictly fewer branching levels, at the cost of introducing higher-arity nodes. The key point is that a non-maximal-rank adaptivity matrix effectively encodes fewer independent branching decisions than its depth suggests; we exploit this redundancy to condense it into a shallower equivalent.

\begin{lemma}\label{lem:nonmaximalrank} Let $p\in \MP_{U,V}$ be a measurement protocol with all nodes unary. Then there exists a measurement protocol $q\in \MP_{U,V'}$ for some $V'\subseteq V$ that has $n(q)=\rank T_p$  branching levels and satisfies $\MP(e,\ZZ_2)\vDash q=p$ for any $e:\bell$.
\end{lemma}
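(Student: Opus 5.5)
The plan is to read the condensed protocol off a well-chosen basis of the row space of $T=T_p$. By \Cref{prop:runviamatrix}, the run of $p$ against a global assignment $\gs$ depends on the branching outcomes $b=b(p,\gs)\in\ZZ_2^V$ only through $Tb$. Both the measurement settings $(w+Tb)_j$ and the output $(w+Tb)_\ast$ are determined this way. So the whole run is governed by the $\rank T$ linear functionals of $b$ spanning the row space of $T$. We realise each such functional as one higher-arity branching node that measures the parity of a set of branching sites.

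\textbf{Step 1: choose the rows.} Order the rows of $T$, indexed by $s(p)+1$, in the intrinsic order, with $\ast$ last. Select rows greedily: keep row $j$ iff it is linearly independent of the rows before it. This gives rows $j_1<\dots<j_k$ with $k=\rank T$. Every row $j$ is then a linear combination of selected rows $j_t\le j$, and by strict lower triangularity each selected row $T_{j_t}$ is supported on branching sites $D_t\subseteq V$ strictly before $j_t$. Put $\varphi_t(b)\defeq\sum_{i\in D_t}b_i$. Then every entry $(Tb)_j$ is a fixed linear combination of $\varphi_t(b)$ with $j_t\le j$.

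\textbf{Step 2: build $q$.} Walk through $s(p)$ in the intrinsic order, keeping every multiplicative site of $p$ as the same unary multiplicative node. At position $j_t$, insert a $|D_t|$-ary branching node measuring the parity of the measurements at the sites of $D_t$. Each setting is $w_i$ plus the appropriate combination of the earlier $\varphi$'s. Branching sites of $p$ that lie in no $D_t$ are dropped; their outcomes never influence anything. The continuation vectors are chosen so that outcome $\varphi_t=1$ adds to every later setting, and to the output, the coefficient of $\varphi_t$ in that row's expansion. The setting of a site $i\in D_t$ only involves $\varphi_{t'}$ with $j_{t'}\le i<j_t$. Hence the recursion is well founded, and the side condition $\pi_{s(D_t)\cap s(q')}v=0$ holds, since a node never alters settings of sites it or earlier nodes measured. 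The sets $D_t$ are pairwise distinct because the selected rows are independent, giving admissibility clause (i). A site in several $D_t$ gets the same setting each time, since its setting is fixed by earlier $\varphi$'s; this gives clause (ii). So $q\in\MP_{U,V'}$ with $V'=\bigcup_t D_t\subseteq V$ and $n(q)=k$.

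\textbf{Step 3: verify $q=p$.} The plan is to check condition (ii) of \Cref{prop:allMPequations}, after confirming that $\{p,q\}$ is a context. For each $\gs$, induction along the intrinsic order shows that $q$ measures each site with the setting $x_{j,(w+Tb)_j}$ used by $p$. Thus $W(q,\gs)\subseteq W(p,\gs)$ up to abort truncation, so $W(\{p,q\},\gs)$ is a context of $\bell$. The same induction shows $q$ obtains the parities $\varphi_t(b)$ and aborts exactly when $p$ does. Hence $o(q,\gs)=(w+Tb)_\ast=o(p,\gs)$, or both outputs are $0$ on abort.

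\textbf{Main obstacle.} I expect Step 2 to be the delicate part. One must check that the greedily chosen basis makes the settings of the sites feeding $\varphi_t$ depend only on strictly earlier functionals, and that all admissibility conditions of \Cref{def:moregeneralmps} hold once a site is shared among several higher-arity nodes. If the greedy row basis causes trouble with a site measured inside $D_t$ whose setting depends on a functional with index $\ge j_t$, I would try first to instead place the node for $\varphi_t$ immediately after the last site of $D_t$ rather than at $j_t$.
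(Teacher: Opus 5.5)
Your proposal is correct and follows essentially the same route as the paper. Both arguments use the greedy pivot-row basis, place one higher-arity branching node per pivot row with continuation vectors given by the expansion coefficients, keep the multiplicative nodes, drop the non-pivot branching sites, and verify via \Cref{prop:runviamatrix} and \Cref{prop:allMPequations}. The one detail to state explicitly is that when a pivot row $j_t$ indexes a multiplicative site, the branching node for $D_t$ must come \emph{before} that multiplicative node, since its setting has coefficient $1$ on $\varphi_t$; the paper's construction does exactly this.
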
 

\begin{proof}
The construction relies on choosing particular rows from the adaptivity matrix, resulting in a convenient basis for the row space. Each row in the basis corresponds to a branching level, determining the sum of measurements to be performed; multiplicative levels are interleaved wherever they appear, with the choice of the basis ensuring that the resulting protocol is well defined. In a bit more detail, we build the protocol by going through each row of the matrix from bottom to top. For each row, what we do depends on (i) whether the row in question was included in the basis and (ii) whether the row in question is for a multiplicative or a branching site. When the row in question was included in the basis, we add a corresponding branching node. When the row in question was for a multiplicative node, we add the same multiplicative node. When both conditions hold, we do both constructions, with the branching node before the multiplicative one. Let us now fill in the details.  

Let $T$ be the  $(U+V+1)\times V$ adaptivity matrix of $p$, $w\in\ZZ_2^{s(p)+1}$ the vector of default measurement settings of $p$, and $r\in\ZZ_2^U$ the vector of required outcomes for the multiplicative nodes. For $j\in U+V+1$, write $t_j$ for the $j$th row of $T$. Call a row $t_j$ a pivot if $t_j$ is not in the span of the set $\{t_i \mid i<j\}$ of strictly earlier rows. The pivot rows form a basis of the row space with the property that every row $t_j$ can be uniquely written as a sum $t_j=\sum_i c_{j,i}t_i$ of pivot rows weakly before the $j$th site so that $c_{j,i}=0$ whenever $i>j$.
For a row $t_j$, let $D_j$ be the set of measurements at the sites indexing the non-zero entries of $t_j$, with default measurement settings determined by $w$, \ie $D_j = \{x_{i,w_i} \mid T_{j,i} \neq 0\}.$

Let $1,\dots ,N$ enumerate $s(p)+1$ in the intrinsic order, with $N=\ast$ the output point.
We construct protocols $q_N,\dots q_1$ inductively, with $q\defeq q_1$. We first set 
\[ q_N \defeq
   \begin{cases}
     \branching{D_N}{1}{w_N} & \text{if $t_N$ is a pivot row,}\\
     w_N & \text{otherwise.}
   \end{cases} \]
For $i<N$, we  first define $v_i\in \ZZ_2^{s(q_{i+1})+1}$ by $v_i\defeq (c_{j,i})_{j\in s(q_{i+1})+1}$ and then set
\[q_{i}\defeq
   \begin{cases}
     \hfil\vcenter{\hbox{\begin{forest}
       [ $D_i$
         [$q_{i+1}$, edge label={node[midway,left]{$0$}}]
         [$v_i.q_{i+1}$, edge label={node[midway,right]{$1$}}]]
     \end{forest}}}
       & \text{if $t_{i}$ is a pivot row and $i$ is a branching site.}\\[12pt]
     \hfil\vcenter{\hbox{\begin{forest}
       [ $D_i$
         [$x_{i,w_i}$, edge label={node[midway,left]{$0$}}
           [$q_{i+1}$, edge label={node[midway,left]{$r_i$}}]]
         [$1.x_{i,w_i}$, edge label={node[midway,right]{$1$}}
           [$v_i.q_{i+1}$, edge label={node[midway,right]{$r_i$}}]]]
     \end{forest}}}
       & \text{if $t_{i}$ is a pivot row and $i$ is a multiplicative site.}\\[12pt]
     \hfil q_{i+1} & \text{if $t_{i}$ is not a pivot row and $i$ is a branching site.}\\[6pt]
     \hfil\vcenter{\hbox{\begin{forest}
       [ $x_{i,w_i}$ [$q_{i+1}$, edge label={node[midway,right]{$r_i$}}]]
     \end{forest}}}
       & \text{if $t_{i}$ is not a pivot row and $i$ is a multiplicative site.}
   \end{cases}
\]

We first argue that each $q_i$ is a well-formed measurement protocol. By construction, the branching nodes correspond to basis vectors and hence operate on pairwise distinct sets of sites. The multiplicative nodes are already multiplicative nodes in $p$, and hence pairwise distinct. Moreover, $U$ and $V$ are disjoint (as all nodes in $p$ were unary), so that all branching nodes of each $q_i$ are also distinct from the multiplicative nodes.  Furthermore, $c_{j,i}=0$ whenever $i>j$ in the intrinsic order, and lower triangularity of $T$ implies that if $x\in D_i$ for some $i$, then $s(x)<i$.
Therefore, the coordinates of later action vectors $v_j$ for $j\geq i$ vanish on $s(D_i)$, so that every occurrence in $q_i$ of a site in $D_i$ is with the same measurement setting as in $D_i$. This ensures admissibility, so that each $q_i$ is indeed a measurement protocol. By construction  $q\in \MP_{U,V'}$ for some $V'\subseteq V$, and $q$ has precisely $n(q)=\rank T$  branching levels as these correspond to a basis of the row space of $T$. 

To see that $\MP(e,\ZZ_2)\vDash q=p$ for any $e: \bell$, by \Cref{prop:allMPequations} it suffices to show that for all global assignments on $\bell$, both $q$ and $p$ result in the same outcome.
For that purpose, fix a global assignment $\gs \colon \coprod_{i\in \II} \ZZ_2\to\ZZ_2$. 
Let $b\in \ZZ_2^V$ be defined as in \Cref{prop:runviamatrix}, which applies to $p$. While the proposition as stated does not apply to $q$ (which has non-unary nodes and hence we haven't defined its matrix), $q$ behaves similarly to $p$ in that, 
barring them aborting early, the vector giving the measurement settings and the final outcome bit of both $p$ and $q$ is given by $w+Tb$. Moreover, as $p$ and $q$ have the same multiplicative nodes with the same expected outcomes, $p$ aborts iff $q$ aborts, in which case they both return $0$. Thus  $W(q,\gs)\subseteq W(p,\gs)$ and in particular $p$ and $q$ are compatible. Finally, as the final output bit of both $p$ and $q$ is given by $(w+Tb)_{\ast}$ when they don't abort, we have concluded that $p$ and $q$ return the same outcome for every global assignment, so that $\MP(e,\ZZ_2)\vDash q=p$. 
\end{proof}

We apply \Cref{lem:nonmaximalrank} in the context of \Cref{thm:sumstozero}. The following lemma implements Step 2, condensing each protocol in the sum uniformly along the action of $Q$.

\begin{lemma}\label{lem:nonmaximalrankacrossthesum} Let $p$ be a measurement protocol with all nodes unary and $Q\colon \ZZ_2^{l}\to \ZZ_2^{s(p)+1}$ be a linear map where $l\geq m(p)+2$. Denote by $q$ the protocol produced by \Cref{lem:nonmaximalrank}. Then  
\[\sum_{w\in \ZZ_2^{l}} Q(w).p \;=\, \sum_{w\in \ZZ_2^{l}} (\pi_{s(q)+1}\circ Q)(w).q~.\]
\end{lemma}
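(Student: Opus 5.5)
The plan is to reduce the statement to a termwise identity. We show that for every $w\in\ZZ_2^{l}$,
\[
\MP(e,\ZZ_2)\vDash Q(w).p=(\pi_{s(q)+1}\circ Q)(w).q .
\]
The claimed equation of sums then follows by adding these equations in the linear theory.

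The key observation is that the construction in \Cref{lem:nonmaximalrank} commutes with the action. By \Cref{rem:reconstructingMPfromamatrix}, the translate $u.p$ has data $(T_p,u+w_p,r_p)$. Its adaptivity matrix and required outcomes are therefore the same as those of $p$, and it has the same intrinsic order. So the pivot rows, the coefficients $c_{j,i}$, and the vectors $v_i$ used in building $q$ are identical for $u.p$ and for $p$. The only data that changes is the default vector, which is now $u+w_p$. This changes the measurement settings in the sets $D_j$ and in the multiplicative nodes, together with the final output default $w_N$.

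Write $q^{(u)}$ for the protocol produced by \Cref{lem:nonmaximalrank} from $u.p$. I will check, by a downward induction on the stages $q_N,\dots,q_1$ of the construction, that $q^{(u)}=\pi_{s(q)+1}(u).q$. At each stage the node contents are $x_{i,w_i}$ versus $x_{i,w_i+u_i}$, which is exactly the action of $\pi_{s(x)}(u)$ from \Cref{def:actiononMPs}. At the leaf, $w_N+u_N$ is the action on $\MP_{\emptyset,\emptyset}$. The branching continuations $v_i.q_{i+1}$ also commute with the action, because the action is by an abelian group: $u.(v_i.q_{i+1})=v_i.(u.q_{i+1})$.

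Applying \Cref{lem:nonmaximalrank} to $u=Q(w)$ then gives $Q(w).p=q^{(Q(w))}=\pi_{s(q)+1}(Q(w)).q$ in every model. Summing over $w$ finishes the proof.

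The main obstacle is the bookkeeping in the commutation step: $s(q)$ may be a proper subset of $s(p)$, and the higher-arity nodes are sets $D_j$. One has to confirm that coordinates of $u$ at sites dropped from $q$ do not affect $q^{(u)}$. These are the non-pivot branching sites whose measurements never occur in any $D_j$, and I would need to check that they really are absent. One also has to confirm that the action on a set node translates each member by its own site coordinate, as in \Cref{def:actiononMPs}. I would verify this directly from the description of the stages $q_i$.

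Should the exact equality of protocols prove delicate, there is a fallback, which I would try first if needed. Since $q^{(u)}$ and $\pi_{s(q)+1}(u).q$ both agree with $u.p$ on every global assignment by the argument of \Cref{prop:runviamatrix}, and are compatible with it, equality holds in the linear theory by \Cref{prop:allMPequations}, and that suffices.
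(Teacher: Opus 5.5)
Your proposal is correct and is essentially the paper's proof. The paper's proof is the single observation that the construction of \Cref{lem:nonmaximalrank} reads only the intrinsic order, $T_p$ and $r_p$, which are invariant under the action by \Cref{rem:reconstructingMPfromamatrix}, and merely carries the default settings along, so that $Q(w).p=(\pi_{s(q)+1}\circ Q)(w).q$ termwise; your stage-by-stage induction on $q_N,\dots,q_1$ makes this explicit. One minor correction to your bookkeeping: the branching sites dropped from $q$ are exactly those whose column of $T_p$ vanishes, whether or not their row is a pivot (a pivot row $t_i$ produces a node $D_i$ on earlier sites, not on site $i$), but this is harmless, since default settings of branching sites enter the construction only through the sets $D_j$.
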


\begin{proof}
As the construction in \Cref{lem:nonmaximalrank} reads only the matrix and the order, with the default measurement settings carried along, it commutes with the action of the input in the sense that $Q(w).p = (\pi_{s(q)+1}\circ Q)(w).q$,
giving the result. 
\end{proof}

\subsection{Removing higher-arity nodes}
\label{ssec:higherarity}

We now address Step 3 of the proof of \Cref{thm:sumstozero}: eliminating higher-arity branching nodes by expressing $\sum_w Q(w).p$ as a sum of smaller sums over protocols with all nodes unary. 
Each such reduction introduces new multiplicative nodes while correspondingly increasing the input dimension, so that the bound $l \geq m(p)+2$ is maintained throughout.

\begin{lemma}\label{lem:separatingsums} For any measurement protocol of the form 
\[ \vcenter{\hbox{\begin{forest} [$p$ [$x_1+\dots +x_d$ [$q$, edge label={node[midway,left]{$0$}} ] [$v.q$, edge label={node[midway,right]{$1$}} ]]]\end{forest}}}\]
we have 
\begin{equation}\label{eq:separatingsums} 
\vcenter{\hbox{\begin{forest} [$p$ [$x_1+\dots +x_d$ [$q$, edge label={node[midway,left]{$0$}} ] [$v.q$, edge label={node[midway,right]{$1$}} ]]]\end{forest}}}
	= \vcenter{\hbox{\begin{forest} [$p$ [$x_1$ [$q$, edge label={node[midway,left]{$0$}} ] [$v.q$, edge label={node[midway,right]{$1$}} ] ]]\end{forest}}} + 
    \sum_{i=2}^d \left (
	\vcenter{\hbox{\begin{forest} [$p$ [$x_i$ [$q$, edge label={node[midway,right]{$1$}} ]]]\end{forest}}} +
	\vcenter{\hbox{\begin{forest} [$p$ [$x_i$ [$v.q$, edge label={node[midway,right]{$1$}} ]]]\end{forest}}} \right )~.
	\end{equation}
\end{lemma}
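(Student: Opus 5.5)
The plan is to reduce the identity to a pointwise check using \Cref{prop:allMPequations}. For that we must verify two things: that all the protocols appearing in \cref{eq:separatingsums} together form a context of $\MP(\bell,\ZZ_2)$, and that for every global assignment $\gs$ the outputs $o(\cdot,\gs)$ of the two sides agree in $\ZZ_2$.

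\emph{Well-definedness and compatibility.} I take as part of the hypothesis that each protocol on the right-hand side is well formed. This means each unary node $\{x_i\}$ is admissible for the continuation following it, and the side condition that $v$ vanishes on $s(x_1)$ holds; the latter is inherited from $\pi_{s(D)\cap s(q)}v=0$.

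\begin{itemize}
\item \emph{Common prefix.} All protocols share the prefix $p$. For a fixed $\gs$, they therefore follow the same path through $p$ with the same measurement settings, since the actions applied along the way depend only on outcomes inside $p$. Consequently they reach the node $D$ (respectively $x_1$ or $x_i$) with the same settings, namely the same translate of $D=\{x_1,\dots,x_d\}$.
\item \emph{Continuations.} Every continuation afterwards is $q$ or $v.q$, with a common action vector.
\item \emph{Consistency of settings.} Admissibility of $D$ for $q$, together with $v$ vanishing on $s(D)$, ensures that every site of $s(D)$ occurring later is measured with the setting it has in $D$.
\end{itemize}

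Hence $\bigcup W(\cdot,\gs)$ over all protocols in the equation contains at most one measurement per site, so it is a context of $\bell$.

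\emph{Pointwise computation.} Fix $\gs$. If the run through $p$ aborts at a multiplicative node, every protocol outputs $0$, and both sides are $0$ because the number of summands on the right, $1+2(d-1)$, is irrelevant when every term vanishes. Otherwise, we may treat $D$, $q$ and $v.q$ as the translates reached at that point. Write $a_i\defeq\gs(x_i)$ and $s\defeq\sum_i a_i$, and write $o_0\defeq o(q,\gs)$ and $o_1\defeq o(v.q,\gs)$. The left-hand side then outputs $o_s$. On the right-hand side:
\begin{itemize}
\item the first term outputs $o_{a_1}$;
\item for $i\ge2$, the pair of multiplicative terms outputs $a_i\,(o_0+o_1)$, since both vanish when $a_i=0$ and give $o_0+o_1$ when $a_i=1$.
\end{itemize}
Summing, the right-hand side equals $o_{a_1}+\bigl(\sum_{i\ge2}a_i\bigr)(o_0+o_1)$. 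Using the identity $o_a+c\,(o_0+o_1)=o_{a+c}$ in $\ZZ_2$, this equals $o_{a_1+\sum_{i\ge2}a_i}=o_s$, matching the left-hand side. \Cref{prop:allMPequations} then gives the equation for every $e\colon\bell$.

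\emph{Main obstacle.} The computation itself is routine. The delicate part is the bookkeeping that justifies treating the prefix $p$ uniformly. Concretely, one must argue that running each summand against $\gs$ visits the same measurements in $p$ and reaches the split node with identical settings and identical continuations. One must also confirm that the right-hand protocols are admissible, so that the whole collection is a legitimate context.

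To organise this, I would first prove the case where $p$ is empty. I would then extend to general $p$ by induction on the prefix. For multiplicative nodes, \Cref{lem:eqsarecongruent} handles the extension directly. For branching nodes in $p$, the analogous congruence follows by the same case split on the parity of the node, applied to both continuations.
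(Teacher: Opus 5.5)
\emph{The gap is in your compatibility step.} The protocols in \cref{eq:separatingsums} do not, in general, jointly form a context of $\MP(\bell,\ZZ_2)$, so \Cref{prop:allMPequations} cannot be applied to the identity as a single equation.

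Whenever $\gs(x_i)=1$ for some $i\geq 2$, both right-hand terms in which $x_i$ is a multiplicative node with required outcome $1$ continue past $x_i$. One continues into $q$ and the other into $v.q$, so $\bigcup W(\cdot,\gs)$ contains a run of $q$ \emph{and} a run of $v.q$. Similarly, the left-hand side and the $x_1$-term follow different continuations whenever $\gs(x_1)\neq\sum_i\gs(x_i)$.

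The protocols $q$ and $v.q$ differ by the action of $v$, which in general changes settings at sites outside $s(D)$. Your ``consistency of settings'' bullet only controls sites in $s(D)$, and the ``continuations'' bullet does not rule out such conflicts. The paper's own example already breaks the claim. Take the node $x_0+y_0$ branching to $z_0$ or $z_1$. The right-hand side contains $y_0$ (required outcome $1$) followed by $z_0$, and $y_0$ (required outcome $1$) followed by $z_1$. Whenever $\gs(y_0)=1$, these jointly measure $z_0$ and $z_1$. So the lemma holds only in the closure $\vDash^*$, derived from several context-equations, not as one equation in one context.

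The missing idea is to first reduce to equations with a single continuation. The paper does this by decomposing into branches via \Cref{lem:sumofbranches}. A branch consists of a branch $s$ of $p$, the node $x_1+\dots+x_d$ with required outcome $r$, and a branch $t$ of $q$ or $v.q$. For each such branch, the identity replacing $x_1+\dots+x_d$ by $x_1$ (outcome $r$) plus the terms $x_i$ (outcome $1$), each followed by the same $t$, lives in a context. It is verified by noting that both sides behave like $x_1$ with required outcome $r+\gs(x_2)+\dots+\gs(x_d)$. Summing over branches and regrouping then gives the lemma.

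You can repair your argument more economically along the same lines:
\begin{enumerate}
\item Split the left-hand side as $p$ followed by $\multiplicative{D}{0}{q}$, plus $p$ followed by $\multiplicative{D}{1}{v.q}$. This is a genuine context equation, since only one of these proceeds under any $\gs$.
\item For each $r$ and each $t\in\{q,v.q\}$, prove the single-continuation identity. Its protocols share the one tail $t$ and therefore do form a context.
\item Recombine, using the analogous split of the $x_1$-term.
\end{enumerate}
Your parity computation carries over to each piece as $[a_1=r]+\sum_{i\geq2}a_i=[s=r]$ in $\ZZ_2$, with Iverson brackets.
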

\begin{proof} 
We first break the protocol into the sum of its branches using \Cref{lem:sumofbranches}.
Any branch of this protocol is of the form
\[ \vcenter{\hbox{\begin{forest} [$s$ [$x_1+\dots +x_d$ [$t$, edge label={node[midway,right]{$r$}} ]]]\end{forest}}} \]
for some branches  $s \in \Branches(p)$ and $t \in \Branches(q)$ and a bit $r\in\ZZ_2$.

We first argue that the following equation holds for each branch:
\begin{equation}\label{eq:separatingsumsforbranches} \vcenter{\hbox{\begin{forest} [$s$ [$x_1+\dots +x_d$ [$t$, edge label={node[midway,right]{$r$}} ]]]\end{forest}}}
	= \vcenter{\hbox{\begin{forest} [$s$ [$x_1$ [$t$, edge label={node[midway,right]{$r$}} ]]]\end{forest}}} + 
	\vcenter{\hbox{\begin{forest} [$s$ [$x_2$ [$t$, edge label={node[midway,right]{$1$}} ]]]\end{forest}}} +\dots + 
	\vcenter{\hbox{\begin{forest} [$s$ [$x_d$ [$t$, edge label={node[midway,right]{$1$}} ]]]\end{forest}}}
	\end{equation}
Observe that the above is an equation in a context of $\MP(\bell,\ZZ_2)$ (in fact, all the measurements appearing in it, namely the $x_i$ as well as those occurring in $s$ and $t$, necessarily form a context of $\bell$), so we can apply \Cref{prop:allMPequations}.
For a global assignment $\gs\colon\coprod_{i\in\II}\ZZ_2\to\ZZ_2$, both sides of \cref{eq:separatingsumsforbranches} have, under $\gs$, the same output as the protocol
\[
\vcenter{\hbox{\begin{forest} [$s$ [$x_1$ [$t$, edge label={node[midway,right]{$r+\gs(x_2)+\cdots+\gs(x_d)$}} ]]]\end{forest}}}
\] 
Thus \cref{eq:separatingsumsforbranches} holds for all global assignments, and hence is satisfied by the $\ZZ_2$-linear theory of $\MP(e,\ZZ_2)$ by \Cref{prop:allMPequations}.

The claim then follows by expressing the original protocol as a sum of branches using \Cref{lem:sumofbranches}, applying \cref{eq:separatingsumsforbranches} to each branch, and then recollecting the terms. 
\end{proof}

We pause to address a subtlety in the previous proof.
Namely, by assumption, the left-hand sides of these equations denote well-defined protocols.
However, the same is not obvious for terms appearing on the right-hand side.
The issue has to do with repeated measurements:
while the contexts appearing in a measurement protocol are distinct,
they may share variables (allowing such overlaps is in fact  crucial for the overall proof).
When applying \Cref{lem:separatingsums} to expand two different higher-arity branching levels of the protocol, 
it could happen that a variable $x_i$ occurs as a unary node twice in the same protocol in the end result, which \Cref{def:moregeneralmps} does not permit. 
Rather than introducing yet another generalisation of measurement protocols (``measurement protocols with potentially repeating nodes''), we treat each such expression unambiguously as shorthand for an honest protocol in the sense of \Cref{def:moregeneralmps}, as we now explain.

In a nutshell, the convention for branches is: given repeated occurrences of the same variable, we keep only the first one, provided they agree on the required outcome value, and otherwise the protocol equals zero.
Slightly more formally, we assert the following equations as definitions: 
for a measurement $x$, branches $s_1,s_2,s_3$, and an outcome $r \in \ZZ_2$,
\[\vcenter{\hbox{\begin{forest} [$s_1$ [$x$ [$s_2$, edge label={node[midway,right]{$r$}} [$x$ [$s_3$, edge label={node[midway,right]{$r$}} ]]]]]\end{forest}}} \defeq \vcenter{\hbox{\begin{forest} [$s_1$ [$x$ [$s_2$, edge label={node[midway,right]{$r$}}  [$s_3$ ]]]]\end{forest}}} 
\qquad \text{and} \qquad 
\vcenter{\hbox{\begin{forest} [$s_1$ [$x$ [$s_2$, edge label={node[midway,right]{$r$}} [$x$ [$s_3$, edge label={node[midway,right]{$r + 1$}} ]]]]]\end{forest}}}\defeq 0 .
\] 

Informally, the conventions for protocols more complicated than branches follow from this: just decompose them into branches and apply this convention branch-wise.
Formally, this is not fully rigorous without defining these protocols with repeated measurements and proving a version of \Cref{lem:sumofbranches} decomposing them into branches.
Hence we extend our conventions to cover cases in which the repeated measurement $x$ occurs once or twice as a branching node. When the first occurrence is multiplicative and the second is branching, we assert the following equations as definitions:  
\[\vcenter{\hbox{\begin{forest} [$p_1$ [$x$ [$p_2$, edge label={node[midway,right]{$0$}} [$x$ [$p_3$, edge label={node[midway,left]{$0$}} ] [$v.p_3$, edge label={node[midway,right]{$1$}} ]]]]]\end{forest}}}\defeq 
\vcenter{\hbox{\begin{forest} [$p_1$ [$x$ [$p_2$, edge label={node[midway,right]{$0$}}  [$p_3$ ]]]]\end{forest}}}
\qquad \text{and} \qquad 
\vcenter{\hbox{\begin{forest} [$p_1$ [$x$ [$p_2$, edge label={node[midway,right]{$1$}} [$x$ [$p_3$, edge label={node[midway,left]{$0$}} ] [$v.p_3$, edge label={node[midway,right]{$1$}} ]]]]]\end{forest}}}\defeq 
\vcenter{\hbox{\begin{forest} [$p_1$ [$x$ [$p_2$, edge label={node[midway,right]{$1$}}  [$v.p_3$ ]]]]\end{forest}}}
\]
Analogous equations are asserted when the first occurrence is branching and the second is multiplicative. 
When both occurrences are branching, only the first branching node remains as is, while the vectors that act on the continuation upon obtaining outcome $1$ are added together (after mapping them to the correct space):
\[\vcenter{\hbox{\begin{forest} dense [$p_1$ [$x$ [$p_2$, edge label={node[midway,left]{$0$}} [$x$ [$p_3$, edge label={node[midway,left]{$0$}} ] [$v_2.p_3$, edge label={node[midway,right]{$1$}} ]]] [$\pi_{s(p_2)}v_1.p_2$, edge label={node[midway,right]{$1$}} [$x$ [$\pi_{s(p_3)+1}v_1.p_3$, edge label={node[midway,left]{$0$}} ] [$(\pi_{s(p_3)+1}(v_1)+v_2).p_3$, edge label={node[midway,right]{$1$}} ]]]]]\end{forest}}}\defeq
\vcenter{\hbox{\begin{forest}  [$p_1$ [$x$ [$p_2$, edge label={node[midway,left]{$0$}} [$p_3$ ]] [$\pi_{s(p_2)}v_1.p_2$, edge label={node[midway,right]{$1$}} [$(\pi_{s(p_3)+1}(v_1)+v_2).p_3$ ]]]]\end{forest}}}\]
In the above, we think of the left-hand side as arising from an honest measurement protocol via \Cref{lem:separatingsums},
where $v_1$ (resp.~$v_2$) is the vector affecting the continuation of the first (resp.~second) occurrence of $x$.
The admissibility requirement in \Cref{def:moregeneralmps} ensures that the component of $v_1$ at $s(x)$  is zero.

What if, after repeated application of the equations above, we end up with more than two occurrences of a given variable in a single protocol?
In some sense, the issue does not arise: if we apply \Cref{lem:separatingsums} to an honest protocol, we end up with at most two occurrences of a given variable. Moreover, since the definitions above always prioritise the first occurrence, the order in which repeated pairs are handled does not matter (and in particular, neither does the order in which higher-arity nodes are eliminated via \Cref{lem:separatingsums}): each variable ultimately resolves to its earliest occurrence,
much like the operation of projecting a pair to its first coordinate defines the associative operation of selecting the head of any non-empty list.

We apply \Cref{lem:separatingsums} in the context of \Cref{thm:sumstozero}.
The following two lemmas implement Step 3 of removing higher-arity branching nodes: the first eliminates a single higher-arity branching node, and \Cref{lem:sumofsmallersums} removes all such nodes by iteration.

\begin{lemma}
Let $p$ be a measurement protocol with all multiplicative nodes unary, and $Q\colon \ZZ_2^{l}\to \ZZ_2^{s(p)+1}$ be a linear map where $l\geq m(p)+2$.  If $p$ has a $d$-ary branching node with $d>1$,
then there exist measurement protocols  $p_1, \ldots, p_d$,
each with $n(p_i)\leq n(p)$,
strictly fewer higher-arity branching levels than $p$,
and all multiplicative nodes unary, and linear maps $Q_i\colon \ZZ_2^{l_i}\to \ZZ_2^{s(p_i)+1}$ with $l_i\geq m(p_i)+2$
for $i=1,\dots, d$ such that 
\begin{equation}\label{eq:sumofsmallersums}\sum_{w\in \ZZ_2^{l}} Q(w).p\;=\,\sum_{i=1}^d\sum_{w\in \ZZ_2^{l_i}} Q_i(w).p_i .\end{equation}
\end{lemma}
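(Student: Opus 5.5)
The plan is to apply \Cref{lem:separatingsums} to a single higher-arity branching node, uniformly across all the translates $Q(w).p$. I then regroup the resulting terms into $d$ families, each of which is itself a sum of translates of one fixed protocol. Fix a $d$-ary branching node of $p$ with $d>1$, say $D=\{x_1,\dots,x_d\}$ with continuation vector $v$, so that schematically $p$ is a prefix followed by $D$ branching into $q$ and $v.q$. Since the action of $Q(w)$ only changes measurement settings, each $Q(w).p$ has the same shape, with $D$ replaced by $\{\pi_{s(x)}Q(w).x\mid x\in D\}$. Applying \cref{eq:separatingsums} to each $Q(w).p$ and summing over $w$ splits the total into:
\begin{itemize}
\item a first family, in which the node $D$ is replaced by the unary branching node $x_1$ with the same continuations $q$ and $v.q$;
\item for each $i=2,\dots,d$, a family of terms in which $D$ is replaced by a unary multiplicative node $x_i$ with required outcome $1$, followed by $q$ or by $v.q$.
\end{itemize}

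For the first family, I let $p_1$ be $p$ with $D$ replaced by its unary branching node at site $s(x_1)$. I take $Q_1 \defeq \pi_{s(p_1)+1}\circ Q$ and $l_1 \defeq l$. Then $m(p_1)=m(p)$ and $n(p_1)=n(p)$, $p_1$ has one fewer higher-arity branching level, and its multiplicative nodes are unchanged. Moreover $Q(w).p$ with $D$ replaced by $x_1$ is exactly $Q_1(w).p_1$, since dropping the sites of $D\setminus\{x_1\}$ just projects those coordinates away.

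For $i\ge 2$, I let $p_i$ be the prefix, followed by the unary multiplicative node $\multiplicative{x_i}{1}{q}$. The key point is that the two terms with continuations $q$ and $v.q$ differ by the action of a single vector $\tilde v\in\ZZ_2^{s(p_i)+1}$. This vector is $v$ on the sites of $q$ and the output, and $0$ on the prefix and on $s(x_i)$; the latter is allowed because $\pi_{s(D)\cap s(q)}v=0$. I therefore add one input bit: $l_i\defeq l+1$ and
\[
Q_i(w,c)\defeq \pi_{s(p_i)+1}Q(w)+c\,\tilde v ,
\]
so that summing over $c\in\ZZ_2$ produces exactly the two terms. This gives $m(p_i)=m(p)+1$, hence $l_i=l+1\ge m(p_i)+2$. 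It also gives $n(p_i)=n(p)-1\le n(p)$, one fewer higher-arity branching level, and all multiplicative nodes unary, as required. Summing the $d$ families then yields \cref{eq:sumofsmallersums}.

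The main obstacle is well-definedness, not the bookkeeping above. First, $\tilde v$ acting on the part of $p_i$ strictly after $x_i$ must reproduce $v.q$ while leaving the prefix and $x_i$ fixed. This needs the leading zeros of $\tilde v$ together with the fact that $Q(w)$ acts on the prefix and on $q$ independently of the choice of $i$. Second, the protocols $p_1,\dots,p_d$ may contain repeated variables, because $x_i$ could also occur later in $q$ with a different role. There I would rely on the convention stated after \Cref{lem:separatingsums}, which resolves each repeated variable to its earliest occurrence and turns the expressions into honest protocols. I would then have to check that this resolution commutes with the actions of $Q_i$, so that the regrouped sums are genuinely sums of translates of the fixed protocols $p_i$.
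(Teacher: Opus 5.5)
Your proposal is correct and follows essentially the same route as the paper. It uses the same $p_1$ (unary branching node $x_1$, with $Q_1=\pi_{s(p_1)+1}\circ Q$ and $l_1=l$) and the same multiplicative protocols $p_i$ for $i\ge 2$ with $l_i=l+1$. Your $Q_i(w,c)=\pi_{s(p_i)+1}Q(w)+c\,\tilde v$ is exactly the paper's block map $\begin{bmatrix}\pi_{s(p_i)+1}\circ Q & \iota_{s(p_i)+1}(v)\end{bmatrix}$. The well-definedness points you flag are handled in the paper the same way: by the repeated-variable convention after \Cref{lem:separatingsums}, and by the remark that the decomposition does not depend on the default measurement settings, so it transfers verbatim to each $Q(w).p$.
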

\begin{proof}
The key idea is to apply \Cref{lem:separatingsums} across the action of $Q$. 
Fixing a $d$-ary level of $p$ that measures $x_1+\dots +x_d$, we can write $p$ as \[ \vcenter{\hbox{\begin{forest} [$q$ [$x_1+\dots +x_d$ [$q'$, edge label={node[midway,left]{$0$}} ] [$v.q'$, edge label={node[midway,right]{$1$}} ]]]\end{forest}}}\] 
and use \Cref{lem:separatingsums} to deduce
\[
\vcenter{\hbox{\begin{forest} [$q$ [$x_1+\dots +x_d$ [$q'$, edge label={node[midway,left]{$0$}} ] [$v.q'$, edge label={node[midway,right]{$1$}} ]]]\end{forest}}}
	= \vcenter{\hbox{\begin{forest} [$q$ [$x_1$ [$q'$, edge label={node[midway,left]{$0$}} ] [$v.q'$, edge label={node[midway,right]{$1$}} ] ]]\end{forest}}} + 
    \sum_{i=2}^d \left (
	\vcenter{\hbox{\begin{forest} [$q$ [$x_i$ [$q'$, edge label={node[midway,right]{$1$}} ]]]\end{forest}}} +
	\vcenter{\hbox{\begin{forest} [$q$ [$x_i$ [$v.q'$, edge label={node[midway,right]{$1$}} ]]]\end{forest}}} \right ).
\]

We then define $p_1$ and $p_i$ for $i=2,\dots, d$ by setting
\[p_1\defeq \vcenter{\hbox{\begin{forest} [$q$ [$x_1$ [$q'$, edge label={node[midway,left]{$0$}} ] [$v.q'$, edge label={node[midway,right]{$1$}} ] ]]\end{forest}}} 
\qquad \text{and} \qquad 
p_i\defeq\vcenter{\hbox{\begin{forest} [$q$ [$x_i$ [$q'$, edge label={node[midway,right]{$1$}} ]]]\end{forest}}} .\]
We set $l_1\defeq l$ and $l_i \defeq l+1$ for $i=2,\dots, d$. By construction,  each $p_i$ satisfies $n(p_i)\leq n(p)$, has
strictly fewer higher-arity branching levels than $p$
and all multiplicative nodes unary,  and $l_i\geq m(p_i)+2$.

It remains to define the linear maps $Q_i$ and verify \cref{eq:sumofsmallersums}.
We set $Q_1\defeq\pi_{s(p_1)+1}\circ Q$ and for $i=2,\dots,d$ we set
\[Q_i\defeq\begin{bmatrix} \pi_{s(p_i)+1}\circ Q & \iota_{s(p_i)+1}(v)\end{bmatrix}.\]

With these definitions, we now have that 
\[p=p_1 +\sum_{i=2}^d (p_i+\iota_{s(p_i)+1}(v).p_i)=p_1+\sum_{i=2}^d (p_i + Q_i(e_{l+1}).p_i)  \]
where $e_{l+1}=(0,\dots, 0,1)\in\ZZ_2^{l+1}$.
This decomposition does not depend on the initial choices of measurement settings, and hence goes through verbatim for $Q(w).p$.
This then gives 
\[Q(w).p=Q_1(w).p_1+\sum_{i=2}^d \left( Q_i(\iota_{l+1}(w)).p_i+Q_i(\iota_{l+1}(w)+e_{l+1}).p_i \right).\]
As $\ZZ_2^{l+1}= \im (\iota_{l+1})\oplus \ZZ_2 e_{l+1}$, summing over all $w\in \ZZ_2^l$ gives~\eqref{eq:sumofsmallersums}, concluding the proof.
\end{proof}

Repeating the previous Lemma recursively until all branching levels are unary yields the following.

\begin{lemma}\label{lem:sumofsmallersums}
Let $p$ be a measurement protocol with all multiplicative nodes unary, and $Q\colon \ZZ_2^{l}\to \ZZ_2^{s(p)+1}$ be a linear map where $l\geq m(p)+2$.
Then there are measurement protocols $p_i$ with all nodes unary and $n(p_i)\leq n(p)$, and linear maps $Q_i\colon \ZZ_2^{l_i}\to \ZZ_2^{s(p_i)+1}$ with $l_i\geq m(p_i)+2$
for $i=1,\dots, d$ such that
\begin{equation*}\sum_{w\in \ZZ_2^{l}} Q(w).p \;=\, \sum_{i=1}^d\sum_{w\in \ZZ_2^{l_i}} Q_i(w).p_i .\end{equation*}
\end{lemma}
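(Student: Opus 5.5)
We prove \Cref{lem:sumofsmallersums} by iterating the preceding (unnamed) lemma. The induction runs on the number $h(p)$ of higher-arity branching levels of $p$, and the statement being proved is exactly the one in \Cref{lem:sumofsmallersums}: for every $p$ with all multiplicative nodes unary and every linear $Q\colon\ZZ_2^{l}\to\ZZ_2^{s(p)+1}$ with $l\geq m(p)+2$, the sum $\sum_w Q(w).p$ is a finite sum of sums $\sum_w Q_i(w).p_i$. Each $p_i$ has all nodes unary, $n(p_i)\le n(p)$, and $l_i\ge m(p_i)+2$.

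\emph{Base case.} If $h(p)=0$, every branching node of $p$ is unary. By hypothesis every multiplicative node is unary too, so $p$ has all nodes unary. We take $d=1$, $p_1\defeq p$ and $Q_1\defeq Q$, and all the required conditions hold trivially.

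\emph{Inductive step.} If $h(p)>0$, pick any $d$-ary branching node of $p$ with $d>1$ and apply the preceding lemma to it. This yields protocols $p_1,\dots,p_d$ and maps $Q_i\colon\ZZ_2^{l_i}\to\ZZ_2^{s(p_i)+1}$ such that:
\begin{itemize}
\item $\sum_w Q(w).p=\sum_{i=1}^d\sum_w Q_i(w).p_i$;
\item each $p_i$ has all multiplicative nodes unary and $n(p_i)\le n(p)$;
\item $h(p_i)<h(p)$;
\item $l_i\ge m(p_i)+2$.
\end{itemize}
So each pair $(p_i,Q_i)$ again satisfies the hypotheses of the statement. The induction hypothesis applied to it gives a decomposition
\[\sum_w Q_i(w).p_i=\sum_j\sum_w Q_{i,j}(w).p_{i,j},\]
where each $p_{i,j}$ has all nodes unary, $n(p_{i,j})\le n(p_i)\le n(p)$, and the bound $l_{i,j}\ge m(p_{i,j})+2$ holds. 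Substituting these into the first equation and re-indexing the finite double sum as a single sum gives the claim. Since equality in the $\ZZ_2$-linear theory is transitive and compatible with adding equations, the substitution is legitimate.

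\emph{Main obstacle.} The delicate point is checking that the hypotheses really are preserved under iteration. The multiplicative nodes created by the preceding lemma are unary, namely the $x_i$ nodes in $p_i$ for $i\ge2$, so the ``multiplicative nodes unary'' hypothesis survives. A second concern is that later eliminations may produce repeated variables. These are handled by the shorthand conventions stated just before the preceding lemma, which turn each resulting expression into an honest protocol with unary nodes and no more branching levels. With those conventions in force, the constructed protocols fit \Cref{def:moregeneralmps}, and the descent measure $h$ strictly decreases at each step, so the recursion terminates.
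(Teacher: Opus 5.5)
Your proposal is correct and follows the paper's own argument, which simply iterates the preceding lemma until every branching level is unary. Your strong induction on the number of higher-arity branching levels, with base case $p_1 \defeq p$ and $Q_1 \defeq Q$, and with the hypotheses (unary multiplicative nodes, $n(p_i)\le n(p)$, $l_i\ge m(p_i)+2$) checked to survive each step, is precisely the formal version of that recursion.
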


\section{(Cohomo)logical consequences}
\label{sec:cohomology}

In this section we draw further consequences from our main result.
In particular, we use it to answer the question raised by Robert Raussendorf~\cite{raussendorf:cohomologicalframework,raussendorf:paradoxestowork} on how to lift  the cohomological criterion for contextuality to the adaptive case.
We do so in two distinct cohomological frameworks for contextuality:
the sheaf-theoretic (\v{C}ech) one of~\cite{abramskyetal:cohomology-of-contextuality,abramskyetal:contextualitycohomologyparadox} and the group-cohomological one of~\cite{raussendorf:cohomologicalframework,okay2017topological,raussendorf:roleofcohomology}, in which Raussendorf originally posed his question.
We treat each in turn,
and then also examine the flattening construction through the lens of Raussendorf's de-iffification~\cite{raussendorf:paradoxestowork} of conditional AvN arguments.
While the full details would take us outside the scope of this article, we sketch the main ideas here,
which we intend to develop further in a follow-up paper.

\subsection{\v{C}ech-cohomological witnesses}
\label{ssec:cechcohomology}
In the sheaf-theoretic framework of~\cite{abramskyetal:cohomology-of-contextuality,abramskyetal:contextualitycohomologyparadox,caru2017onthecohomology,caru2018towards}, strong contextuality is witnessed by a non-vanishing \v{C}ech cohomology class.
The following result states that an AvN argument over measurement protocols implies the existence of such a witness not just at the level of measurement protocols, but already over the original scenario.

To state the theorem, we need to introduce some notation not used elsewhere in this paper: if $S$ is a scenario, we write $\MP(S)$ for the set of general measurement protocols over $S$ as defined in~\cite{abramskyetal:comonadicview}, and $\MP_{\ZZ_2}(S)$ for the scenario of $\ZZ_2$-coarse-grained measurement protocols over $S$.
Finally, for an empirical model $e$ the notation $\CSC_{\ZZ_2}(e)$ means that $e$ is cohomologically strongly contextual with coefficients in $\ZZ_2$, as defined in~\cite{abramskyetal:cohomology-of-contextuality,abramskyetal:contextualitycohomologyparadox} -- that is, the strong contextuality of $e$ is witnessed by a non-vanishing \v{C}ech cohomology class.

\begin{theorem}\label{thm:adaptiveAvNsandcohomology}
Let $S=(X, \ctx, \ZZ_2)$ be a scenario and  $e: S$ an empirical model.
Then $\AvN_{\ZZ_2}(\MP_{\ZZ_2}(e))\Rightarrow \CSC_{\ZZ_2}(e)$.
That is, if the induced empirical model $\MP_{\ZZ_2}(e)$ on the scenario of $\ZZ_2$-coarse-grained measurement protocols exhibits AvN contextuality, then the original model $e$ is cohomologically strongly contextual.
\end{theorem}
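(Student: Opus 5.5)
The plan is to proceed by contraposition, using the known cohomological characterisation of the vanishing of the \v{C}ech obstruction. Suppose $e$ is not $\CSC_{\ZZ_2}$. By the results of \cite{abramskyetal:cohomology-of-contextuality,abramskyetal:contextualitycohomologyparadox}, this means that for every local section $\ls\in\supp(e_C)$ the obstruction class vanishes. Concretely, there is a compatible family $\{\alpha_{C'}\}_{C'\in\ctx}$ of $\ZZ_2$-linear combinations of sections of $\supp(e_{C'})$, an element of $\ZZ_2\FF(C')$ for the support presheaf $\FF$, that extends $\ls$. These are ``$\ZZ_2$-valued global sections'': formal sums of global assignments $\sum_k \gs_k$ with each $\gs_k|_{C'}$ in the support on every context, counted modulo $2$, whose restrictions agree. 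The key fact I will use is the one underlying ``$\AvN\Rightarrow\CSC$'' in \cite{abramskyetal:contextualitycohomologyparadox}. If $e\vDash\sum_{x\in D}x=a$ on a context $C'$, then every section in $\supp(e_{C'})$ satisfies the equation. A $\ZZ_2$-combination of such sections with total coefficient $1$ (the augmentation) then satisfies it ``on average'', so the linear theory of $e$ is consistent with the family. I will first recall how, from such a family with augmentation $1$, one obtains a consistent assignment for each equation.

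The core step is to lift such a family from $S$ to $\MP_{\ZZ_2}(S)$. Given a context $P$ of $\MP_{\ZZ_2}(S)$, I will define $\beta_P\in\ZZ_2\ES(P)$ by pushing forward along the run map $\gs\mapsto o(\cdot,\gs)|_P$. Because $W(P,\gs)$ is a context of $S$ for each $\gs$ but depends on $\gs$, this pushforward is not a single restriction. Instead I will decompose according to the finitely many possible runs: $\FF_P$ partitions global assignments by the partial section $\gs|_{W(P,\gs)}$. For each run-type I will use the component $\alpha_{W}$ on $W=W(P,\gs)$, taking only those sections of $\alpha_W$ that follow that run. Here the family $\{\alpha_{C'}\}$ is a compatible $\ZZ_2$-family and the run-types are determined by restrictions, as in the comonadic/Kleisli treatment \cite{abramskyetal:comonadicview}, so this recovers a compatible family $\{\beta_P\}$ on the protocol scenario extending the lifted section. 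Its support lies in $\supp(\MP_{\ZZ_2}(e)_P)$, by the definition of the induced model (\Cref{def:MPmodel}).

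Finally, I will apply to $\MP_{\ZZ_2}(e)$ the flat implication $\AvN_{\ZZ_2}\Rightarrow\CSC_{\ZZ_2}$ from \cite{abramskyetal:contextualitycohomologyparadox}, in contrapositive form. A $\ZZ_2$-global section with augmentation $1$ over the support satisfies every equation of the linear theory. Summing the equations in an inconsistent subsystem, as in the GHZ computation, then yields $0=1$ on an object of augmentation $1$, which is a contradiction. Hence $\AvN_{\ZZ_2}(\MP_{\ZZ_2}(e))$ fails, which proves the contrapositive.

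The main obstacle is the lifting step. \v{C}ech vanishing gives a family indexed by contexts of $S$, whereas compatibility for protocols involves adaptively chosen contexts. In particular, one must check that the definition of $\beta_P$ is independent of choices and that restrictions $P'\subseteq P$ commute with the pushforward. My first approach will be to work with genuine formal $\ZZ_2$-combinations of global assignments; this works only if non-$\CSC$ yields a global $\ZZ_2$-section over the whole support, and I would verify that this is available, perhaps via the cover of all contexts, before the case analysis. If that fails, I would fall back on an inductive argument on the depth of protocols, gluing branch by branch using \Cref{lem:eqsarecongruent}.
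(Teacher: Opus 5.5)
Your strategy is the paper's. You argue contrapositively, lift the compatible $\ZZ_2$-family witnessing a vanishing obstruction to the protocol scenario using the same run decomposition over $\FF_P$ that defines the induced model (\Cref{def:MPmodel}), and then apply the flat implication $\AvN_{\ZZ_2}\Rightarrow\CSC_{\ZZ_2}$ of \cite{abramskyetal:contextualitycohomologyparadox} to $\MP_{\ZZ_2}(e)$. The only structural difference is in how the lift reaches $\MP_{\ZZ_2}(e)$:
\begin{itemize}
\item the paper lifts first to the uncoarse-grained $\MP(e)$, then passes to $\MP_{\ZZ_2}(e)$ via the scenario map $\MP(S)\to\MP_{\ZZ_2}(S)$ and functoriality of cohomology;
\item you push straight to output bits, which is the composite of those two steps.
\end{itemize}
Your route spares you the functoriality bookkeeping. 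The price is that you must check compatibility of $\{\beta_P\}$ by hand; this is the same argument, linear in $e$, that shows $\MP(e)$ is no-signalling.

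Two statements need correcting.

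First, $\neg\CSC_{\ZZ_2}(e)$ means the obstruction vanishes for \emph{some} local section, not for every one. One section is all your argument uses.

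Second, the witness is not a formal sum of global assignments each of which restricts into the support on every context. Any such assignment would be a global section of the support presheaf, so $e$ would not even be strongly contextual. A vanishing obstruction supplies only a compatible family $\{\alpha_{C'}\}$ of local combinations.

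Your worry in the last paragraph nevertheless resolves, once put in the corrected form. Because $\ZZ_2$ is a field, every compatible family in $\ZZ_2\ES$ is the family of marginals of some $d\in\ZZ_2\ES(X)$. To see this, use a duality argument: expand functions on $\ZZ_2^X$ in the product basis built from $\{1,\delta_1\}$, each element of which depends only on the coordinates in its support; compatibility then makes the family annihilate the kernel of the transpose. The assignments occurring in $d$ need not lie in the support; only the cancellations mod $2$ place its restrictions in the span of the support.

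Pushing $d$ forward along $\gs\mapsto o(\cdot,\gs)$ then makes $\{\beta_P\}$ compatible for free. Your run-type computation places $\beta_P$ in the $\ZZ_2$-span of $\supp(\MP_{\ZZ_2}(e)_P)$: the coefficient of $\ls\in\FF_P$ is $\alpha_{\dom\ls}(\ls)$, because $\gs|_{\dom\ls}=\ls$ forces $W(P,\gs)=\dom\ls$.

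Finally, the fallback via \Cref{lem:eqsarecongruent} does not help. That lemma concerns equations in the linear theory, not the construction of compatible families.
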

\begin{proof}[(Proof sketch)] We prove the chain of implications
\[\AvN_{\ZZ_2}(\MP_{\ZZ_2}(e)) \Rightarrow\CSC_{\ZZ_2}(\MP_{\ZZ_2}(e))\Rightarrow   \CSC_{\ZZ_2}(\MP(e)) \Rightarrow \CSC_{\ZZ_2}(e)\]
The first implication is~\cite[Theorem 21]{abramskyetal:contextualitycohomologyparadox}. 
The second follows from the  existence of a canonical map of scenarios $\MP(S)\to \MP_{\ZZ_2}(S)$ (in the sense of~\cite{abramskyetal:comonadicview}),
together with the fact that cohomology cooperates suitably with maps of scenarios.

Finally, the vanishing of the cohomological obstruction to extending a local section of $e$ corresponds to a certain cochain being a coboundary in the relative cohomology.
Concretely, this amounts to a compatible family of $\ZZ_2$-linear combinations of local sections of $e$ extending the given local section, where compatibility is an analogue of no-signalling \cite{abramskyetal:cohomology-of-contextuality,abramsky2017contextuality}.
Therefore, just as an empirical model $e: S$ extends canonically to an empirical model $\MP(e): \MP(S)$, such a family extends to corresponding data for $\MP(e)$,
witnessing the vanishing of the cohomological obstruction over $\MP(e)$.
Taking contrapositives gives the final implication.
\end{proof}

The part of this argument requiring significant elaboration is the verification of the functoriality and naturality properties that ensure cohomology behaves well with respect to maps of scenarios.
While essentially straightforward, the details are somewhat lengthy.

By the theorem above, AvN contextuality of $\MP(e)$ implies the non-vanishing of the \v{C}ech-cohomological invariant on $e$.
Combining this with our main result (\Cref{thm:main}) yields an answer to Raussendorf's question within the sheaf-cohomological framework:
cohomological witnesses for contextuality extend to adaptive MBQC protocols. Moreover, in this framework, the witness lives already over the original scenario.

\subsection{Group-cohomological witnesses}
\label{ssec:groupcohomology}
Raussendorf originally formulated his question in a different cohomology framework~\cite{raussendorf:cohomologicalframework,okay2017topological,raussendorf:roleofcohomology} from the one considered above, based on group cohomology.
Those group-cohomological witnesses were formulated for sets of Pauli/Weyl operators closed under commuting products.
Aasn{\ae}ss~\cite{aasnaess2020cohomology,aasnaess2021dphil} made a direct comparison between the two frameworks,
first generalising the group-cohomological approach to a more abstract setting 
and then showing that a non-trivial group-cohomological obstruction implies a non-trivial \v{C}ech obstruction.
Note that this implication runs from group-cohomological to \v{C}ech-cohomological obstructions, so \Cref{thm:adaptiveAvNsandcohomology}, 
which establishes non-triviality of the \v{C}ech cohomology class, does not directly translate to group-cohomological witnesses;
we can nonetheless derive such witnesses for $\MP(e)$ from our main result, as we now describe.

The abstract formulation in \cite{aasnaess2021dphil} is in terms of $G$-bundle scenarios which are scenarios whose measurement set carry algebraic structure: each context is a commutative monoid on which the group $G$ of outcomes acts compatibly. We briefly describe how the measurement scenarios we consider can be interpreted as such, and draw the corresponding consequences from our main result.
The key idea is that AvN arguments are formulated for scenarios with algebraic structure on the \emph{outcomes};
the move is to lift that structure to the \emph{measurements} themselves.
Recall the setting of \Cref{ssec:AvNcontextuality}: a measurement scenario $S=(X, \ctx, O)$
where each measurement has outcomes valued in the abelian group $O = \ZZ_2$.
\footnote{\label{fn:module2}
More generally, as per \cref{fn:module}, one may take the outcome set $O$ to be an arbitrary abelian group, or even any $R$-module over a ring $R$, yielding analogous constructions; see \Cref{remark:moregeneral_module}.}
The abelian group structure of $\ZZ_2$ can be lifted from outcomes to measurements, giving rise to a partial algebraic structure.
Operationally, given two compatible $\ZZ_2$-valued measurements $x$ and $y$,
one may form a new $\ZZ_2$-valued measurement $x+y$ by jointly measuring $x$ and $y$ and adding their outcomes.
We have in fact been using this operation implicitly throughout, since it underlies the higher-arity nodes that arise in the condensing construction of \Cref{ssec:condensing}. One may also adjoin, for each $a \in \ZZ_2$, a constant measurement $c_a$ that always returns outcome $a$. The measurement $x + y$ is itself compatible with both $x$ and $y$, while each $c_a$ is compatible with every measurement,
so any context may be closed under these operations (binary addition and the two constants).

Under such closure and identification of operationally indistinguishable measurement procedures,
each context gives rise to a \emph{pointed Boolean group} $G(C)$:
an abelian group in which every element has order~$2$, with neutral element $c_0$, with an additional distinguished element $c_1$.
Equivalently, a Boolean group can be thought of as a $\ZZ_2$-vector space and the distinguished element singles out a vector, corresponding to moving to an affine space.
\footnote{\label{fn:linear-pBA}From a logical perspective, this structure captures the linear fragment of a Boolean algebra:
the group addition is exclusive-or, the constants $c_0$ and $c_1$ are the constants false and true, with negation being recovered as $m \mapsto m + c_1$; the multiplicative connectives (conjunction and disjunction) are absent.}
This same structure admits an equivalent description via a $\ZZ_2$-action $\theta \colon \ZZ_2 \times G(C) \to G(C)$,
required to be a monoid homomorphism (as per~\cite[Definitions 3.1.4 and 3.3.1]{aasnaess2021dphil}), or equivalently, to satisfy
$\theta(a,\, m + m') \;=\; m+\theta(a,\, m')$
for all $a \in \ZZ_2$ and $m, m' \in G(C)$ (as per~\cite{abramsky2026maifest}).
Note that this condition forces the action to be translation by constants (which form an image of $\ZZ_2$ inside $G(C)$):
setting $m' = c_0$ gives $\theta(a, m) = m + \theta(a, c_0)$ for all $m$, so $\theta$ is entirely determined by constants $c_a \defeq \theta(a, c_0)$.
Operationally, the non-trivial element $1 \in \ZZ_2$ acts as $m \mapsto m + c_1$, flipping the outcome
(that is, the negation of the logical interpretation in \cref{fn:linear-pBA}).

These pointed Boolean groups $G(C)$ on each context patch together into a \emph{partial pointed Boolean group} $G(S)$,
a partial algebra in the same spirit as partial Boolean algebras~\cite{kochen1967problem,abramsky2021logic}.
Namely, a partial pointed Boolean group is a set $G$ equipped with a reflexive symmetric (\emph{commeasurability}) relation $\odot$,
a partial binary (\emph{addition}) operation $+\colon{\odot}\to G$, and constants $c_0, c_1\in G$,
such that every set of pairwise commeasurable elements extends to a set of pairwise commeasurable elements forming a (total) pointed Boolean group under restriction of the operations.
\footnote{Concretely, this means that the operations (nullary $c_0, c_1$ and binary $+$) respect commeasurability --
$c_0\odot m$ and $c_1\odot m$ for every $m$, and $m_1+m_2\odot m'$ whenever $m_1,m_2,m'$ are pairwise commeasurable --
and that the axioms of pointed Boolean groups hold among any pairwise commeasurable elements. 

The partial pointed Boolean group $G(S)$ associated to a measurement scenario $S=(X,\ctx,O)$ is the free partial pointed Boolean group on the reflexive graph of measurement compatibility: generated by $\iota(x)$ for $x\in X$, subject to $\iota(x)\odot\iota(y)$ whenever $\{x,y\} \in \ctx$.
The explicit inductive construction of the free partial Boolean algebra on a reflexive graph given in \cite{abramsky2021logic,abramsky2022tacl} adapts directly to this setting, giving the free partial pointed Boolean group.}
The $\ZZ_2$-action extends compatibly to this partial algebraic structure, making $G(S)$ a $\ZZ_2$-bundle scenario in the sense of~\cite[Chapter 3]{aasnaess2021dphil}.

This structure $G(S)$ is the free partial pointed Boolean group generated by the scenario $S$.
The linear theory of an empirical model $e \colon S$ can now be incorporated into this structure. Recall that a linear equation over a context $C$ has the form $\sum_{x \in D} x = a$ where $D \subseteq C$ and $a \in \ZZ_2$ (see \Cref{ssec:AvNcontextuality}). Since all $\iota(x)$ for $x \in D$ are compatible in $G(S)$ (because $D$ is a context of $S$),
this equation makes sense as an equation between elements of $G(S)$: $\sum_{x \in D} \iota(x) = c_a$.
Quotienting $G(S)$ by the linear theory of $e$ thus yields a further partial pointed Boolean group $G(S,e) \defeq G(S)/\Th(e)$.

A global assignment $\gs \colon X \to \ZZ_2$ extends uniquely to a homomorphism $G(S) \to \ZZ_2$ of partial pointed Boolean groups by freeness,
and it satisfies the linear theory of $e$ precisely when this homomorphism factors through $G(S,e)$.
Therefore, the model $e$ is $\AvN_{\ZZ_2}$-contextual, \ie its $\ZZ_2$-linear theory is inconsistent,
if and only if no such homomorphism $\tilde{\gs} \colon G(S,e) \to \ZZ_2$ exists.
This is equivalent to saying that the sequence
\begin{equation}\label{eq:bundlesequence}
\begin{tikzcd}
  \ZZ_2 \arrow[r, "c", hook] &
  G(S,e) \arrow[r, "{[\cdot]_\theta}", two heads] &
  G(S,e)/\ZZ_2
\end{tikzcd}
\end{equation}
where $c$ is the inclusion $a \mapsto c_a$
and $[\cdot]_\theta$ is the quotient by the $\ZZ_2$-translation action,
has no left splitting, that is, no homomorphism $\tilde{\gs} \colon G(S,e) \to \ZZ_2$ of partial pointed Boolean groups
satisfying $\tilde\gs \circ c = \mathrm{id}_{\ZZ_2}$.
Since any homomorphism of partial pointed Boolean groups to $\ZZ_2$ sends $c_a \mapsto a$ automatically,
this is precisely a satisfying assignment for the linear theory of $e$.
Non-existence of such a splitting is exactly a cohomological obstruction in the sense of~\cite{aasnaess2021dphil}.
\footnote{This property is phrased as a splitting of the whole sequence~\eqref{eq:bundlesequence}, rather than as a retraction of $c$ alone:
a version of the splitting lemma~\cite[Lemma~3.1.1]{aasnaess2021dphil} establishes a bijection between such left splittings and \emph{trivialisations} of the bundle $[\cdot]_\theta\colon G(S,e)\twoheadrightarrow G(S,e)/\ZZ_2$.  It is the (non-)existence of a trivialisation of the bundle that the group-cohomological obstruction of~\cite{aasnaess2021dphil} measures.}

Applying this to $\MP_{\ZZ_2}(e)$: our main result (\Cref{thm:main}) produces an AvN argument over $\MP_{\ZZ_2}(e)$,
which by the above witnesses a non-trivial group-cohomological obstruction for $G(\MP_{\ZZ_2}(S),\, \MP_{\ZZ_2}(e))$.
In contrast to the \v{C}ech-cohomological case (\Cref{thm:adaptiveAvNsandcohomology}), this obstruction does \emph{not} propagate back to a group-cohomological obstruction for $G(S,e)$:
the reason is the same as discussed there, namely that our construction yields an AvN argument over measurement protocols, not necessarily over the original scenario.

\begin{remark}\label{remark:moregeneral_module}
The analysis above generalises beyond the $\ZZ_2$ setting considered here (see \cref{fn:module,fn:module2}).
For $O$ an arbitrary abelian group, the construction above goes through almost verbatim: one adjoins constants $c_a$ for each element $a\in O$, and closes each context under pairwise addition and constants, exactly as before.
This yields a partial abelian group $G(S)$ containing, via the homomorphism $a\mapsto c_a$, an image of $O$ commeasurable with every element of $G(S)$.
The $O$-action is again given by translation $m\mapsto m+c_a$.
This makes $G(S)$ an $O$-bundle scenario in the sense of~\cite[Chapter 3]{aasnaess2021dphil};
the analysis above carries through in this setting.

More generally, if $O$ is an $R$-module over a commutative ring $R$,
one lifts not only module addition and the constants $c_a$ for elements of $O$, but additionally closes each context under scalar multiplication by ring elements, equipping $G(S)$ with a total, unary operation $r\cdot \colon G(S)\to G(S)$ for each $r \in R$.
$G(S)$ then becomes a partial $R$-module containing an image of $O$ as a distinguished $R$-linear submodule, commeasurable with every element.
A similar analysis can be carried through in this setting.
\end{remark}

\subsection{De-iffifying adaptive AvN arguments}
\label{ssec:deiffify}

Our flattening construction also admits a logical perspective that sharpens the picture above.
Adaptive protocols allow for systems of \emph{conditional} equations providing AvN arguments, where the conditioning on variables corresponds to the adaptivity.
A concrete example of such a system of conditional equations appears in \cite{abramskyetal:minimumquantumresources}.
Such arguments have been dubbed ``iffy proofs'' by Raussendorf \cite{raussendorf:paradoxestowork}.
Thus we can interpret our flattening construction as a process of ``de-iffification''.

The situation is, however, somewhat more subtle:
arbitrary conditional systems can capture arbitrary supports,  hence in that setting inconsistency coincides with strong contextuality.
However, \Cref{thm:adaptiveAvNsandcohomology} shows that all-versus-nothing contextuality over measurement protocols is at most as strong as \v{C}ech-cohomological contextuality.
This leads to a further consequence:
$\AvN_{\ZZ_2}(e)$ is \emph{not} implied by $\AvN_{\ZZ_2}(\MP_{\ZZ_2}(e))$, and therefore not by $\CSC_{\ZZ_2}(e)$ either. In particular, the adaptive AvN arguments produced by our construction are genuinely new:  they go beyond what non-adaptive AvN arguments on the underlying scenario can detect.
This is witnessed by the iffy proof in~\cite[Section 5]{abramskyetal:minimumquantumresources}, since one can verify that the ordinary $\ZZ_2$-linear theory of the underlying model in  that example is consistent. 

By \Cref{ssec:groupcohomology}, the same AvN argument also witnesses a non-trivial group-cohomological obstruction for $G(\MP_{\ZZ_2}(S),\, \MP_{\ZZ_2}(e))$.
Unlike the \v{C}ech case, this cannot in general propagate back to a group-cohomological obstruction for $G(S,e)$ itself:
such an obstruction would amount to $\AvN_{\ZZ_2}(e)$, which the iffy proof example just discussed already shows is too strong.

\section{Further directions}
\label{sec:furtherdirections}

We close by outlining some directions for further work suggested by our results:
\begin{itemize}

\item Does the converse to \Cref{thm:adaptiveAvNsandcohomology} hold?
That is, does (\v{C}ech-)cohomological strong contextuality of $e$ imply that $\MP_{\ZZ_2}(e)$ exhibits algebraic contextuality?
Equivalently, does every cohomological proof of strong contextuality  give rise to an AvN argument over measurement protocols?

\item Our construction produces AvN arguments for the scenario $\MP(\bell,\ZZ_2)$ of measurement protocols.
What do these translate to when expressed in terms of the original measurements of $\bell$?
More precisely, which conditional AvN arguments (`iffy proofs', in the sense of~\cite{raussendorf:paradoxestowork,abramskyetal:minimumquantumresources}) on $\bell$ correspond to the AvN arguments our construction produces?

\item In~\cite{abramsky2017contextual}, a quantitative refinement of Raussendorf's result was established, with an inequality relating the success probability of an MBQC, the nonlinearity of the computed Boolean function, and the contextual fraction (a measure of contextuality) of the underlying empirical model.
Can an analogous quantitative refinement of our main result be developed for adaptive protocols?

\item Here we focussed on dichotomic measurements with outcomes in $\ZZ_2$, typically quantum-realised using qubits.
A natural extension considers measurements with outcomes in $\ZZ_d$ for $d > 2$, corresponding to qudit-based MBQC.
Contextuality and AvN arguments have been studied for that setting in~\cite{frembs2018contextuality,frembs2026no,GogiosoZeng:AvNs}.
Do analogous results hold for adaptive $\ZZ_d$-linear protocols? 

\item In \cite{abramsky2017complete}, there is a complete classification of all quantum AvN arguments for stabiliser states.
Does an analogous classification exist for adaptive AvN arguments of the kind produced by our construction?

\item We have focussed on algebraic AvN paradoxes.
How do these relate to the logical paradoxes (contradictions verified in partial Boolean algebras) studied by Kochen and Specker in their seminal work on contextuality  \cite{kochen1967problem,abramsky2021logic}?
In particular, how can one capture adaptive measurements on partial Boolean algebras? Can an analogue of our flattening construction be developed in that setting?

\item The scenario $\MP(\bell,\ZZ_2)$ is constructed from $\bell$. Can this be phrased for arbitrary algebraic contextuality scenarios, rather than just those of Bell type? Is this construction functorial with respect to morphisms of scenarios, which give rise to simulations between empirical models~\cite{karvonen2018categories,barbosa2023closing}, and does our main result interact naturally with such morphisms?

\item Building on the previous point: in~\cite{abramskyetal:comonadicview} a more general notion of measurement protocol was structured as a comonad on a category of empirical models. Can the version of $\MP$ studied here  be given a similar comonadic structure, possibly graded by the depth or other parameters of the protocol?

\item Throughout this paper we have worked within the MBQC paradigm.
Can our results be related to the circuit model of quantum computation, perhaps via the notion of sequential contextuality \cite{mansfield2018quantum}?

\item 
Our construction yields explicit AvN witnesses for adaptive protocols.
Can these be used for self-testing~\cite{mayers2004selftesting,supic2020selftesting} of quantum resources,
certifying not only the contextuality but the specific quantum state and measurements?
There is reason to expect so: AvN arguments, and the closely related linear constraint system games~\cite{cleve2014characterization} (which are in a sense a state-independent analogue of AvN arguments),
have already enabled self-testing in the non-adaptive setting~\cite{kaniewski2016analytic,coladangelo2017robust},
including scalable self-testing of graph states~\cite{baccari2020scalable} and CSS codewords~\cite{hart2025manybody}. Whether adaptive AvN witnesses admit analogous results, where the feed-forward structure itself must be certified, is an open problem.

\item There is a connection between contextuality and quantum error correction: \cite{khu2026contextualityQEC} show that, given any subsystem stabiliser code with two or more gauge qubits, the closure of its measurements as a partial abelian group yields a strong contextuality argument. Such contextuality, in the spirit of Kirby and Love~\cite{kirby2019contextuality}, is algebraic in nature, albeit state-independent. They further show that code-switching schemes for universal fault tolerance -- such as the doubled colour codes of Bravyi and Cross~\cite{bravyi2015doublecolorcodes} -- inherit such contextuality. However, their argument is \emph{static}, concerning the fixed algebraic structure of the code-switching gadget as a whole, rather than the protocol's adaptive unfolding. Could the techniques developed in this work deepen this connection by witnessing contextuality directly in the code-switching protocol itself?

\item Contextuality has been used for quantum hardware benchmarking: in~\cite{kumar2025quantumclassical}, AvN-type constraints are exploited for error detection beyond standard randomised benchmarking.
Our result provides explicit AvN witnesses for any adaptive $\ZZ_2$-linear MBQC computing a non-affine function.
Can such witnesses be used to certify or benchmark the correct functioning of adaptive quantum hardware, where one must account for the feed-forward structure?
\end{itemize}

\addtocontents{toc}{\SkipTocEntry}
\section*{Acknowledgements}
    This work was supported by
    the Horizon Europe project FoQaCiA, \textit{Foundations of Quantum Computational Advantage}, GA no.~\href{https://doi.org/10.3030/101070558}{101070558} (UCL and INL);
    the Engineering and Physical Sciences Research Council (EPSRC, U.K.) fellowship \textit{Resources in Computation}, \href{https://gtr.ukri.org/projects?ref=EP%2FV040944%2F1}{EP/V040944/1} (UCL);
    and the Fundação para a Ciência e Tecnologia (FCT, Portugal) project KaleidosQope, \textit{Contextual partial views: a logical foundation for quantum computational advantage}, \href{https://www.doi.org/10.54499/2023.13603.PEX}{2023.13603.PEX} (INL).

\bibliographystyle{plainurlarxiv}
\bibliography{refs}

@inproceedings{aasnaess2020cohomology,
  title     = {Cohomology and the algebraic structure of contextuality in measurement based quantum computation},
  author    = {Aasn{\ae}ss, Sivert},
  booktitle = {16th International Conference on Quantum Physics and Logic (QPL 2019)},
  editor    = {Coecke, Bob and Leifer, Matthew},
  series    = {Electronic Proceedings in Theoretical Computer Science},
  volume    = {318},
  pages     = {242--253},
  doi       = {10.4204/EPTCS.318.15},
  year      = {2020},
  month     = May,
  publisher = {Open Publishing Association}, 
}

@phdthesis{aasnaess2021dphil,
    author = {Aasn{\ae}ss, Sivert},
    title  = {Comparing two cohomological obstructions for contextuality, and a generalised construction of quantum advantage with shallow circuits},
    year   = {2021},
    school = {University of Oxford},
    type   = {{DPhil} thesis},
    url    = {https://ora.ox.ac.uk/objects/uuid:a31ce72c-3d5e-4ca2-afa3-a38afbb93833/},
    eprint    = {2212.09382},
    archivePrefix = {arXiv},
    primaryClass  = {quant-ph},
}

@article{ab,
  title={The sheaf-theoretic structure of non-locality and contextuality},
  author={Abramsky, Samson and Brandenburger, Adam},
  journal={New Journal of Physics},
  volume={13},
  number={11},
  pages={113036},
  year={2011},
  month = Nov,
  publisher={IOP Publishing},
  doi = {10.1088/1367-2630/13/11/113036},
}

@inproceedings{abramskyetal:cohomology-of-contextuality,
  title={The cohomology of non-locality and contextuality},
  author={Abramsky, Samson and Mansfield, Shane and Barbosa, Rui Soares},
  booktitle = {8th International Workshop on Quantum Physics and Logic (QPL 2011)},
  editor    = {Jacobs, Bart and Selinger, Peter and Spitters, Bas},
  series = {Electronic Proceedings in Theoretical Computer Science},
  volume    = {95},
  pages={1--14},
  year={2012},
  month = Oct,
  doi={10.4204/EPTCS.95.1},
  publisher = {Open Publishing Association}, 
}

@inproceedings{abramskyetal:contextualitycohomologyparadox,
  title = {Contextuality, cohomology and paradox},
  author =  {Abramsky, Samson and Barbosa, Rui Soares and Kishida, Kohei and Lal, Raymond and Mansfield, Shane},
  booktitle = {24th EACSL Annual Conference on Computer Science Logic (CSL 2015)},
  editor =  {Kreutzer, Stephan},
  series =  {Leibniz International Proceedings in Informatics (LIPIcs)},
  volume =  {41},
  pages = {211--228},
  year =  {2015},
  month = Sep,
  doi =   {10.4230/LIPIcs.CSL.2015.211},
  publisher = {Schloss Dagstuhl -- Leibniz-Zentrum f{\"u}r Informatik},
}

@article{abramsky2017contextual,
  title={Contextual fraction as a measure of contextuality},
  author={Abramsky, Samson and Barbosa, Rui Soares and Mansfield, Shane},
  journal={Physical Review Letters},
  volume={119},
  number={5},
  pages={050504},
  year={2017},
  month = Aug,
  publisher = {American Physical Society},
  doi={10.1103/PhysRevLett.119.050504},
}

@article{abramsky2017complete,
  title={A complete characterization of all-versus-nothing arguments for stabilizer states},
  author={Abramsky, Samson and Barbosa, Rui Soares and Car{\`u}, Giovanni and Perdrix, Simon},
  journal={Philosophical Transactions of the Royal Society A: Mathematical, Physical and Engineering Sciences},
  volume={375},
  number={2106},
  pages={20160385},
  year={2017},
  month = Nov,
  publisher={The Royal Society Publishing},
  doi = {10.1098/rsta.2016.0385},
}

@incollection{abramsky2017contextuality,
  title={Contextuality: At the borders of paradox},
  author={Abramsky, Samson},
  booktitle={Categories for the Working Philosopher},
  editor={Landry, Elaine},
  isbn = {9780198748991},
  chapter = {11},
  pages={262--285},
  year={2017},
  month = Nov,
  publisher={Oxford University Press},
  doi = {10.1093/oso/9780198748991.003.0011},
}

@inproceedings{abramskyetal:minimumquantumresources,
  title = {Minimum quantum resources for strong non-locality},
  author = {Abramsky, Samson and Barbosa, Rui Soares and Car{\`u}, Giovanni and de Silva, Nadish and Kishida, Kohei and Mansfield, Shane},
  booktitle =	{12th Conference on the Theory of Quantum Computation, Communication and Cryptography (TQC 2017)},
  editor = {Wilde, Mark M},
  series = {Leibniz International Proceedings in Informatics (LIPIcs)}, 
  volume = {73},
  pages =	{9:1--9:20},
  year = {2018},
  month = Mar,
  doi = {10.4230/LIPICS.TQC.2017.9},
  publisher = {Schloss Dagstuhl -- Leibniz-Zentrum f{\"u}r Informatik},
}

@inproceedings{abramskyetal:comonadicview,
  title={A comonadic view of simulation and quantum resources},
  author={Abramsky, Samson and Barbosa, Rui Soares and Karvonen, Martti and Mansfield, Shane},
  booktitle={34th Annual ACM/IEEE Symposium on Logic in Computer Science (LICS 2019)},
  articleno = {23},
  pages={23:1--23:12},
  year={2019},
  month = Aug,
  doi={10.1109/LICS.2019.8785677}, 
  publisher={IEEE}
}

@inproceedings{abramsky2021logic,
  title = {The logic of contextuality},
  author={Abramsky, Samson and Barbosa, Rui Soares},
  booktitle={29th EACSL Annual Conference on Computer Science Logic (CSL 2021)},
  editor = {Baier, Christel and Goubault-Larrecq, Jean},
  series = {Leibniz International Proceedings in Informatics (LIPIcs)}, 
  volume={183},
  pages={5:1--5:18},
  year={2021},
  month = Jan,
  doi =	{10.4230/LIPIcs.CSL.2021.5},
  publisher = {Schloss Dagstuhl -- Leibniz-Zentrum f{\"u}r Informatik},
}

@misc{abramsky2022tacl,
  author       = {Abramsky, Samson and Barbosa, Rui Soares},
  title        = {Contextuality in logical form: Duality for transitive partial {CABA}s},
  year         = {2022},
  month        = Jun,
  howpublished = {Talk presented at 10th Conference on Topology, Algebra, and Categories in Logic (TACL 2022)},
  url          = {https://www.mat.uc.pt/~tacl2022/pdfs/TACL_2022_paper_19.pdf},
  nonote       = {Abstract available at \url{https://www.mat.uc.pt/~tacl2022/pdfs/TACL_2022_paper_19.pdf}},
}

@incollection{abramsky2026maifest,
    title = {Contextuality, algebraic structure, and state-independence},
    author = {Abramsky, Samson},
    booktitle = {{M}ai {G}ehrke on Logic, Algebra and Duality},
    editor = {Reggio, Luca and van Gool, Sam and Fussner, Wesley},
    series = {Outstanding Contributions to Logic},
    novolume = {},
    nochapter = {},
    nopages = {},
    year = {2026},
    nomonth = 0,
    publisher = {Springer},
    note = {To appear},
    nodoi = {}
}

@article{anders2009computational,
  title = {Computational power of correlations},
  author = {Anders, Janet and Browne, Dan E.},
  journal = {Physical Review Letters},
  volume = {102},
  number = {5},
  pages = {050502},
  year = {2009},
  month = Feb,
  publisher = {American Physical Society},
  doi = {10.1103/PhysRevLett.102.050502},
}

@misc{arkhipov2012extending,
      title={Extending and characterizing quantum magic games},
      author = {Arkhipov, Alex},
      year={2012},
      month = Sep,
      eprint={1209.3819},
      archivePrefix={arXiv},
      primaryClass={quant-ph},
      url={https://arxiv.org/abs/1209.3819}, 
      doi = {10.48550/arXiv.1209.3819},
}

@article{baccari2020scalable,
  title = {Scalable {B}ell inequalities for qubit graph states and robust self-testing},
  author = {Baccari, Flavio and Augusiak, Remigiusz and {\ifmmode\check{S}\else\v{S}\fi{}}upi{\ifmmode\acute{c}\else\'{c}\fi{}}, Ivan and Tura, Jordi and Ac\'{\i}n, Antonio},
  journal = {Physical Review Letters},
  volume = {124},
  number = {2},
  pages = {020402},
  year = {2020},
  month = Jan,
  publisher = {American Physical Society},
  doi = {10.1103/PhysRevLett.124.020402},
}

@phdthesis{barbosa2015dphil,
    author = {Barbosa, Rui Soares},
    title = {Contextuality in quantum mechanics and beyond},
    year = {2015},
    school = {University of Oxford},
    type = {{DPhil} thesis},
}

@incollection{barbosa2023closing,
  title       = {Closing {B}ell: Boxing black box simulations in the resource theory of contextuality},
  author        = {Barbosa, Rui Soares and Karvonen, Martti and Mansfield, Shane},
  booktitle = {{S}amson {A}bramsky on Logic and Structure in Computer Science and Beyond},
  editor = {Palmigiano, Alessandra and Sadrzadeh, Mehrnoosh},
  series = {Outstanding Contributions to Logic},
  volume = {25},
  chapter = {13},
  pages = {475--529},
  year          = {2023},
  month = Aug,
  publisher = {Springer},
  doi = {10.1007/978-3-031-24117-8_13},
}

@article{bell1964einstein,
  title={On the {E}instein {P}odolsky {R}osen paradox},
  author={Bell, John S},
  journal={Physics Physique Fizika},
  volume={1},
  number={3},
  pages={195--200},
  year={1964},
  month = Nov,
  publisher = {American Physical Society},
  doi = {10.1103/PhysicsPhysiqueFizika.1.195},
}

@article{bell1966problem,
  author = {Bell, John S},
  title = {On the problem of hidden variables in quantum mechanics},
  journal = {Reviews of Modern Physics},
  volume = {38},
  number = {3},
  pages = {447--452},
  year = {1966},
  month = Jul,
  publisher = {American Physical Society},
  doi = {10.1103/RevModPhys.38.447}
}

@misc{bravyi2015doublecolorcodes,
      title={Double color codes},
      author={Bravyi, Sergey and Cross, Andrew},
      year={2015},
      month = Sep,
      eprint={1509.03239},
      archivePrefix={arXiv},
      primaryClass={quant-ph},
      url={https://arxiv.org/abs/1509.03239}, 
      doi = {10.48550/arXiv.1509.03239},
}

@article{bravyi2018quantum,
  title={Quantum advantage with shallow circuits},
  author={Bravyi, Sergey and Gosset, David and K{\"o}nig, Robert},
  journal={Science},
  volume={362},
  number={6412},
  pages={308--311},
  year={2018},
  month =Oct,
  publisher={American Association for the Advancement of Science},
  doi = {10.1126/science.aar3106},
}

@article{briegel2009mbqc,
    title = {Measurement-based quantum computation},
    author = {Briegel, Hans J. and Browne, Dan E. and D{\"u}r, Wolfgang and Raussendorf, Robert and van den Nest, Maarten},
    journal = {Nature Physics},
    volume = {5},
    number = {1},
    pages = {19--26},
    year = {2009},
    month = Jan,
    doi = {10.1038/nphys1157},
    publisher = {Nature Publishing Group}
}

@inproceedings{caru2017onthecohomology,
  title={On the cohomology of contextuality},
  author={Car{\`u}, Giovanni},
  booktitle={13th International Conference on Quantum Physics and Logic (QPL 2016)},
  editor    = {Duncan, Ross and Heunen, Chris},
  series    = {Electronic Proceedings in Theoretical Computer Science},
  volume = {236},
  pages     = {21--39},
  doi       = {10.4204/EPTCS.236.2},
  year={2017},
  month = Jan,
  publisher = {Open Publishing Association}, 
}

@misc{caru2018towards,
    title={Towards a complete cohomology invariant for non-locality and contextuality},
    author={Car{\`u}, Giovanni},
    year={2018},
    month= Jul,
    eprint={1807.04203},
    archivePrefix={arXiv},
    primaryClass={quant-ph},
    url={https://arxiv.org/abs/1807.04203},
    doi = {10.48550/arXiv.1807.04203}
}

@inproceedings{cleve2014characterization,
      title={Characterization of binary constraint system games}, 
      author={Cleve, Richard and Mittal, Rajat},
      booktitle = {41st International Colloquium on Automata, Languages, and Programming (ICALP 2014)},
      editor = {Esparza, Javier and Fraigniaud, Pierre and Husfeldt, Thore and Koutsoupias, Elias},
      series = {Lecture Notes in Computer Science},
      volume = {8572},
      pages = {320--331},
      year={2014},
      month = Jun,
      doi = {10.1007/978-3-662-43948-7_27},
      publisher = {Springer},
}

@misc{coladangelo2017robust,
      title={Robust self-testing for linear constraint system games}, 
      author={Coladangelo, Andrea and Stark, Jalex},
      year={2017},
      month = Sep,
      eprint={1709.09267},
      archivePrefix={arXiv},
      primaryClass={quant-ph},
      url={https://arxiv.org/abs/1709.09267}, 
      doi = {10.48550/arXiv.1709.09267},
}

@misc{frembs2026no,
  title={No quantum solutions to linear constraint systems from monomial measurement-based quantum computation in odd prime dimension},
  author={Frembs, Markus and Okay, Cihan and Chung, Ho Yiu},
  year={2026},
  month = Jan,
  eprint    = {2601.11367},
  archivePrefix = {arXiv},
  primaryClass  = {quant-ph},
      url={https://arxiv.org/abs/2601.11367},
      doi = {10.48550/arXiv.2601.11367},
}

@article{frembs2018contextuality,
  title = {Contextuality as a resource for measurement-based quantum computation beyond qubits},
  author = {Frembs, Markus and Roberts, Sam and Bartlett, Stephen D},
  journal = {New Journal of Physics},
  volume = {20},
  number = {10},
  pages = {103011},
  year = {2018},
  month = Oct,
  publisher = {IOP Publishing},
  doi = {10.1088/1367-2630/aae3ad},
}

@article{GogiosoZeng:AvNs,
  title = {Generalised {M}ermin-type non-locality arguments},
  author = {Gogioso, Stefano and Zeng, William},
  journal = {Logical Methods in Computer Science},
  volume = {15},
  number = {2},
  eid = {3},
  pages = {3:1--3:51},
  year = {2019},
  month = Apr, 
  doi = {10.23638/lmcs-15(2:3)2019},
}

@incollection{greenberger1989going,
  title={Going beyond {B}ell's theorem},
  author={Greenberger, Daniel M and Horne, Michael A and Zeilinger, Anton},
  booktitle={Bell's Theorem, Quantum Theory and Conceptions of the Universe},
  editor = {Kafatos, Menas},
  series = {Fundamental Theories of Physics},
  volume = {37},
  pages={69--72},
  year={1989},
  month = Oct,
  publisher={Springer},
  doi = {10.1007/978-94-017-0849-4_10},
}

@article{hardy1993nonlocality,
  title={Nonlocality for two particles without inequalities for almost all entangled states},
  author={Hardy, Lucien},
  journal={Physical Review Letters},
  volume={71},
  number={11},
  pages={1665--1668},
  year={1993},
  month = Sep,
  publisher = {American Physical Society},
  doi = {10.1103/PhysRevLett.71.1665},
}

@misc{hart2025manybody,
  title={Many-body contextuality and self-testing quantum matter via nonlocal games},
  author={Hart, Oliver and Stephen, David T and Wickenden, Evan and Nandkishore, Rahul},
  year = {2025},
  month = Dec,
  eprint    = {2512.16886},
  archivePrefix = {arXiv},
  primaryClass  = {quant-ph},
  url={https://arxiv.org/abs/2512.16886}, 
  doi = {10.48550/arXiv.2512.16886},
}

@article{howard2014contextuality,
  title={Contextuality supplies the {`}magic{'} for quantum computation},
  author={Howard, Mark and Wallman, Joel and Veitch, Victor and Emerson, Joseph},
  journal={Nature},
  volume={510},
  number={7505},
  pages={351--355},
  year={2014},
  month=Jun,
  publisher={Nature Publishing Group},
  doi = {10.1038/nature13460},
}

@article{kaniewski2016analytic,
  title = {Analytic and nearly optimal self-testing bounds for the {C}lauser-{H}orne-{S}himony-{H}olt and {M}ermin inequalities},
  author = {Kaniewski, J{\ifmmode\mbox{\k{e}}\else\k{e}\fi{}}drzej},
  journal = {Physical Review Letters},
  volume = {117},
  number = {7},
  pages = {070402},
  year = {2016},
  month=Aug,
  publisher = {American Physical Society},
  doi = {10.1103/PhysRevLett.117.070402},
}

@inproceedings{karvonen2018categories,
  title={Categories of empirical models},
  author={Karvonen, Martti},
  booktitle={15th International Conference on Quantum Physics and Logic (QPL 2018)},
  editor    = {Selinger, Peter and Chiribella, Giulio},
  series    = {Electronic Proceedings in Theoretical Computer Science},
  volume = {287},
  pages     = {239--252},
  doi       = {10.4204/EPTCS.287.14},
  year={2019},
  month = Jan,
  publisher = {Open Publishing Association}, 
}

@article{khu2026contextualityQEC,
  title = {Contextuality of quantum error-correcting codes},
  author = {Khu, Derek and Tanggara, Andrew and Jin, Chao and Bharti, Kishor},
  journal = {PRX Quantum},
  volume = {7},
  number = {1},
  pages = {010319},
  year = {2026},
  month = Jan,
  publisher = {American Physical Society},
  doi = {10.1103/7zb5-vs4x},
}

@article{kirby2019contextuality,
  title = {Contextuality test of the nonclassicality of variational quantum eigensolvers},
  author = {Kirby, William M. and Love, Peter J.},
  journal={Physical Review Letters},
  volume = {123},
  number = {20},
  pages = {200501},
  year = {2019},
  month = Nov,
  publisher = {American Physical Society},
  doi = {10.1103/PhysRevLett.123.200501},
}

@article{kirby2020classical,
  title={Classical simulation of noncontextual {P}auli {H}amiltonians},
  author={Kirby, William M and Love, Peter J},
  journal={Physical Review A},
  volume={102},
  number={3},
  pages={032418},
  year={2020},
  month = Sep,
  publisher = {American Physical Society},
  doi={10.1103/PhysRevA.102.032418},
}

@article{kochen1967problem,
  title={The problem of hidden variables in quantum mechanics},
  author={Kochen, Simon and Specker, Ernst P},
  journal={Journal of Mathematics and Mechanics},
  volume={17},
  number={1},
  pages={59--87},
  year={1967},
  month = Jul,
  doi={10.1512/iumj.1968.17.17004},
}

@misc{kumar2025quantumclassical,
  title={Quantum-classical separation in bounded-resource tasks arising from measurement contextuality},
  author = {{Google Quantum AI (Shashwat Kumar, Eliott Rosenberg,  Alejandro {Grajales Dau}, Rodrigo Corti{\~n}as, et al.)}},
  noauthor = {Kumar, Shashwat and Rosenberg, Eliott and {Grajales Dau}, Alejandro and Corti{\~n}as, Rodrigo and others},
  year={2025},
  month = Dec,
  eprint    = {2512.02284},
  archivePrefix = {arXiv},
  primaryClass  = {quant-ph},
  url = {https://arxiv.org/abs/2512.02284},
  doi = {10.48550/arXiv.2512.02284},
}

@article{lanyon2013mbqc,
  title = {Measurement-based quantum computation with trapped ions},
  author = {Lanyon, Benjamin P and Jurcevic, Petar and Zwerger, Michael and Hempel, Cornelius and Martinez, Esteban A and D{\"u}r, Wolfgang and Briegel, Hans J and Blatt, Rainer and Roos, Christian F},
  journal = {Physical Review Letters},
  volume = {111},
  number = {21},
  pages = {210501},
  year = {2013},
  month = Nov,
  publisher = {American Physical Society},
  doi = {10.1103/PhysRevLett.111.210501},
}

@article{larasatichoi2026circuitvsmbqc,
  title = {Circuit-based vs. measurement-based quantum computing: a comparative analysis, layered metrics, and decision flow for approach selection},
  author = {Larasati,  Harashta Tatimma and Choi,  Byung-Soo},
  journal = {EPJ Quantum Technology},
  volume = {13},
  number = {1},
  pages = {39},
  year = {2026},
  month = Feb, 
  publisher = {Springer Nature},
  doi = {10.1140/epjqt/s40507-026-00483-1},
}

@article{mansfield2018quantum,
  title={Quantum advantage from sequential-transformation contextuality},
  author={Mansfield, Shane and Kashefi, Elham},
  journal={Physical Review Letters},
  volume={121},
  number={23},
  pages={230401},
  year={2018},
  month = Dec,
  publisher = {American Physical Society},
  doi = {10.1103/PhysRevLett.121.230401},
}

@article{mayers2004selftesting,
  title={Self testing quantum apparatus},
  author={Mayers, Dominic and Yao, Andrew},
  journal={Quantum Information and Computation},
  volume={4},
  number={4},
  pages={273--286},
  year={2004},
  month = Jul,
  publisher = {Rinton Press},
  doi = {10.26421/QIC4.4-3},
  }

@article{mermin1990extreme,
  title={Extreme quantum entanglement in a superposition of macroscopically distinct states},
  author={Mermin, N David},
  journal={Physical Review Letters},
  volume={65},
  number={15},
  pages = {1838--1840},
  year={1990},
  month = Oct,
  publisher = {American Physical Society},
  doi = {10.1103/PhysRevLett.65.1838},
}

@article{mermin1990simple,
  title={Simple unified form for the major no-hidden-variables theorems},
  author={Mermin, N David},
  journal={Physical Review Letters},
  volume={65},
  number={27},
  pages={3373--3376},
  year={1990},
  month = Dec,
  publisher = {American Physical Society},
  doi = {10.1103/PhysRevLett.65.3373},
}

@article{okay2017topological,
  title = {Topological proofs of contextuality in quantum mechanics},
  author = {Okay, Cihan and Roberts, Sam and Bartlett, Stephen D. and Raussendorf,  Robert},
  journal = {Quantum Information and Computation},
  volume = {17},
  number = {13 \& 14},
  pages = {1135--1166},
  year = {2017},
  month = Nov,
  publisher = {Rinton Press},
  doi = {10.26421/QIC17.13-14-5},
}

@article{peres1990incompatible,
  title = {Incompatible results of quantum measurements},
  author = {Peres, Asher},
  journal = {Physics Letters A},
  volume = {151},
  number = {3},
  pages = {107--108},
  year = {1990},
  month = Dec,
  doi = {10.1016/0375-9601(90)90172-K},
}

@article{prevedel2007highspeed,
    author = {Prevedel, Robert and Walther, Philip and Tiefenbacher, Felix and B{\"o}hi, Pascal and Kaltenbaek, Rainer and Jennewein, Thomas and Zeilinger, Anton},
    title = {High-speed linear optics quantum computing using active feed-forward},
    year = {2007},
    month = Jan,
    journal = {Nature},
    volume = {445},
    number = {7123},
    pages = {65--69},
    doi = {10.1038/nature05346},
}

@article{popescu1994quantum,
	author = {Popescu, Sandu and Rohrlich, Daniel},
	title = {Quantum nonlocality as an axiom},
	journal = {Foundations of Physics},
	volume = {24},
	number = {3},
	pages = {379--385},
	year = {1994},
    month = Mar,
	doi = {10.1007/BF02058098},
}

@article{raussendorf:contextualityinmbqc,
  title={Contextuality in measurement-based quantum computation},
  author={Raussendorf, Robert},
  journal={Physical Review A},
  volume={88},
  number={2},
  pages={022322},
  year={2013},
  month = Aug,
  publisher = {American Physical Society},
  doi = {10.1103/PhysRevA.88.022322},
}

@article{raussendorf:cohomologicalframework,
  title = {Cohomological framework for contextual quantum computations},
  author = {Raussendorf,  Robert},
  journal = {Quantum Information and Computation},
  volume = {19},
  number = {13 \& 14},
  pages = {1141--1170},
  ISSN = {1533-7146},
  year = {2019},
  month = Nov,
  publisher = {Rinton Press},
  doi = {10.26421/qic19.13-14-4},
}

@InCollection{raussendorf:paradoxestowork,
  title = {Putting paradoxes to work: Contextuality in measurement-based quantum computation},
  author = {Raussendorf,  Robert},
  booktitle = {{S}amson {A}bramsky on Logic and Structure in Computer Science and Beyond},
  editor = {Palmigiano, Alessandra and Sadrzadeh, Mehrnoosh},
  series = {Outstanding Contributions to Logic},
  volume = {25},
  chapter = {16},
  pages = {595--622},
  year = {2023},
  month = Aug,
  publisher = {Springer},
  doi = {10.1007/978-3-031-24117-8_16},
}

@article{raussendorf2001one,
  title={A one-way quantum computer},
  author={Raussendorf, Robert and Briegel, Hans J},
  journal={Physical Review Letters},
  volume={86},
  number = {22},
  pages = {5188--5191},
  year={2001},
  month = May,
  publisher = {American Physical Society},
  doi = {10.1103/PhysRevLett.86.5188},
}

@article{raussendorf:roleofcohomology,
  title = {The role of cohomology in quantum computation with magic states},
  author = {Raussendorf, Robert and Okay, Cihan and Zurel, Michael and Feldmann, Polina},
  journal = {Quantum},
  volume = {7},
  pages = {979},
  year = {2023},
  month = Apr,
  publisher = {{V}erein zur {F}{\"{o}}rderung des {O}pen {A}ccess {P}ublizierens in den {Q}uantenwissenschaften},
  doi = {10.22331/q-2023-04-13-979},
}

@phdthesis{searle2024dphil,
    author = {Searle, Amy J},
    title = {Correlation and combinatorics: causal contextuality and spin systems},
    year = {2024},
    school = {University of Oxford},
    type = {{DPhil} thesis},
    url = {https://ora.ox.ac.uk/objects/uuid:7f3e70f3-5136-4a64-89e8-858034004181},
}

@misc{searle2024qpl_mapping,
  author       = {Searle, Amy J and Barbosa, Rui Soares and Abramsky, Samson},
  title        = {Mapping temporal correlations to contextuality correlations},
  year         = {2024},
  month        = Jul,
  howpublished = {Talk presented at 21st International Conference on Quantum Physics and Logic (QPL 2024)},
  url          = {https://qpl2024.dc.uba.ar/abstracts/100.pdf},
  nonote       = {Abstract available at \url{https://qpl2024.dc.uba.ar/abstracts/100.pdf}}
}

@misc{slofstra2024operator,
      title={Operator solutions of linear systems and small cancellation}, 
      author={Slofstra, William and Zhang, Lu-Ming},
      year={2024},
      month = Dec,
      eprint={2412.10305},
      archivePrefix={arXiv},
      primaryClass={quant-ph},
      url={https://arxiv.org/abs/2412.10305}, 
      doi = {10.48550/arXiv.2412.10305},
}

@article{supic2020selftesting,
  title = {Self-testing of quantum systems: a review},
  author = {{\ifmmode\check{S}\else\v{S}\fi{}}upi{\ifmmode\acute{c}\else\'{c}\fi{}}, Ivan and Bowles, Joseph},
  journal = {Quantum},
  volume = {4},
  pages = {337},
  year = {2020},
  month = Sep,
  publisher = {{V}erein zur {F}{\"{o}}rderung des {O}pen {A}ccess {P}ublizierens in den {Q}uantenwissenschaften},
  doi = {10.22331/q-2020-09-30-337},
}

@article{walther2005experimental,
    author = {Walther, Philip and Resch, Kevin J and Rudolph, Terry and Schenck, Emmanuel and Weinfurter, Harald and Vedral, Vlatko and Aspelmeyer, Markus and Zeilinger, Anton},
    title = {Experimental one-way quantum computing},
    journal = {Nature},
    volume = {434},
    number = {7030},
    pages = {169--176},
    year = {2005},
    month=Mar,
    publisher={Nature Publishing Group},
    doi = {10.1038/nature03347},
}

@misc{zhao2025scalable,
      title={Scalable tests of quantum contextuality from stabilizer-testing nonlocal games}, 
      author={Zhao, Wanbing and Liew, H W Shawn and Ho, Wen Wei and Liu, Chunxiao and Bulchandani, Vir B},
      year={2025},
      month = Dec,
      eprint={2512.16654},
      archivePrefix={arXiv},
      primaryClass={quant-ph},
      url={https://arxiv.org/abs/2512.16654}, 
      doi={10.48550/arXiv.2512.16654},
}

\end{document}